   \DeclareSymbolFont{symbolsC}{U}{txsyc}{m}{n}
   \DeclareMathSymbol{\strictif}{\mathrel}{symbolsC}{74}
   \DeclareMathSymbol{\boxright}{\mathrel}{symbolsC}{128}
\newtheorem{theorem}{Theorem}
\newtheorem{corollary}[theorem]{Corollary}
\newtheorem{lemma}[theorem]{Lemma}
\newtheorem{example}[theorem]{Example}
\newtheorem{definition}[theorem]{Definition}
\newtheorem{proposition}[theorem]{Proposition}
\newtheorem{remark}[theorem]{Remark}
\newproof{proof}{Proof}    % This is elsarticle-specific
\newcommand{\dep}[2]{=\hspace{-3pt}({#1};{#2})}
\newcommand{\cf}{\boxright}
\newcommand{\PC}{\PP(\cf)}
\newcommand{\PO}{\PP(\supset)}
\newcommand{\PPs}{\PP(\lor)}
\newcommand{\PP}{\mathcal{P}}
\newcommand{\CO}{\mathcal{CO}}
\newcommand{\COD}{\mathcal{COD}}
\newcommand{\PCO}{\mathcal{PCO}}
\newcommand{\Ps}{\PP(\lor)}
\newcommand{\Psm}{\PP^-(\lor)}
\newcommand{\PCOs}{\PCO(\lor)}
\newcommand{\Al}{\hat \alpha}
\newcommand{\B}{\mathbb{B}}
\newcommand{\F}{\mathcal{F}}
\newcommand{\G}{\mathcal{G}}
\newcommand{\K}{\mathcal{K}}
\newcommand{\La}{\mathcal{L}}
\newcommand{\Q}{\mathbb{Q}}
\newcommand{\pvec}{\overline{\mathrm p}}
\newcommand{\Pvec}{\overline{\mathrm P}}
\newcommand{\Qvec}{\overline{\mathrm Q}}
\newcommand{\dfn}{\mathrel{\mathop:}=}
\newcommand{\dom}{\mathrm{Dom}}
\newcommand{\ran}{\mathrm{Ran}}
\newcommand{\PA}{\mathrm{PA}}
\newcommand{\en}{\mathrm{End}}
\newcommand{\exo}{\mathrm{Exo}}
\newcommand{\Team}{\mathrm{Team}}
\newcommand{\SET}[1]{\mathbf{#1}}
\newcommand{\pindep}{\rotatebox[origin=c]{90}{$\models$}}
\newcommand{\corrf}[1]{#1}
\author[1]{Fausto Barbero\corref{cor1}}%
\ead{fausto.barbero@helsinki.fi}
\author[2]{Jonni Virtema}%
\ead{j.t.virtema@sheffield.ac.uk}
\affiliation[1]{organization={University of Helsinki},
addressline={Yliopistonkatu 3},
postcode={00014},
city={Helsinki},
country={Finland}}
\affiliation[2]{organization={University of Sheffield},
addressline={Western Bank},
city={Sheffield},
postcode={S10 2TN},
country = {United Kingdom}}
\title{Expressivity Landscape for Logics with Probabilistic Interventionist Counterfactuals}
\begin{document}

\begin{abstract}
Causal multiteam semantics is a framework where probabilistic dependencies arising from data and 
causation between variables can be formalized together and studied logically.
%We consider several logics
%in the setting of causal multiteam semantics
%that can express probability comparisons concerning formulae and constants, and encompass interventionist counterfactuals and selective implications that describe consequences of actions and consequences of learning from observations, respectively.
We discover complete characterizations of expressivity for several logics that can express probabilistic statements, conditioning and interventionist counterfactuals. % ty comparisons concerning formulae and encompass interventionist counterfactuals and selective implications
The results characterize the languages in terms of families of linear inequalities %equations
and closure conditions  that define the corresponding classes of causal multiteams; we find that the strict tensor disjunction, an operator typical of team semantics but absent from the literature on causation, is needed to capture the full class of all linear inequalities.

The characterizations yield a strict hierarchy of expressive power and some undefinability results.
\end{abstract}

\begin{keyword}
Interventionist counterfactuals  \sep  Multiteam semantics  \sep Causation  \sep Probability logic  \sep Linear inequalities  \sep  Expressive power \sep Strict tensor.
\end{keyword}

%
%\titlerunning{Expressivity of Probabilistic Interventionist Counterfactuals}
% If the paper title is too long for the running head, you can set
% an abbreviated paper title here
%

% Format: \author{name}{affil}{email}{orcid}{funding}

%\author{Jonni Virtema} %\inst{2,1}\orcidID{0000-0002-1582-3718}}

%
%\authorrunning{F. Barbero and J. Virtema}
%\Copyright{Fausto Barbero and Jonni Virtema}
% First names are abbreviated in the running head.
% If there are more than two authors, 'et al.' is used.
%
%\institute{University of Helsinki, Helsinki, Finland\\
%\email{fausto.barbero@helsinki.fi} \and
%University of Sheffield, Sheffield, UK\\
%\email{j.t.virtema@sheffield.ac.uk}}

%\ccsdesc{F.4.3}

% \relatedversiondetails[linktext={...}, cite=...]{Classification}{URL}
%\relatedversiondetails{Full version}{arXiv:2303.11993}

\maketitle
% As a general rule, do not put math, special symbols or citations
% in the abstract

%%%%%%%%%%%%%%%%%%%%%%%%%%%%%%%%%%%%%%%%%%%
\section{Introduction}\label{sec:introduction}
The main approach to the study of empirical data in the 20th century has been that of statistics, which makes use of probabilistic notions such as \emph{correlation} and \emph{conditional (in)dependence} between variables. We follow here another line of study -- going back at least to Sewall Wright \cite{Wri1921} -- insisting that the analysis should not stop at correlations, but instead should yield information about causation among variables (conditional on appropriate scientific assumptions). The methods involved in the analysis of causes and effects have gained in popularity in the last decades, and their mathematics has been vastly developed under the label of \emph{causal inference} (see, e.g., \cite{Pea2000,SpiGlySch1993}). Today the methods of causal inference are heavily utilized, e.g., in epidemiology \cite{HerRob2023}, econometrics \cite{HecVyt2007}, social sciences \cite{MorWin2015} and machine learning \cite{PetJanSch2017,Sch2022}.

One of the next crucial steps in the development of artificial intelligence will be the capability of AI systems to represent and reason about causal knowledge (see, e.g., \cite{PeaMac2018}). 
For the development of AI applications of causal inference, the clarification of the related formal logical theory is vital. It turns out that many concepts involved in the analysis of causes can be reduced to the study of \emph{interventionist counterfactuals} in \emph{causal models}.
Causal models represent causation between variables using so-called \emph{structural equations}, which describe deterministic causal laws that relate the variables to each other. 
In their simplest form, interventionist counterfactuals are expressions of the form
\begin{center}
``if variables $X_1,\dots,X_n$ were set to values $x_1,\dots,x_n$, then $Y$ would take value $y$''.
\end{center}     
Such conditionals are counterfactual (contrary to fact) in that their evaluation forces us to consider an alternative scenario in which the variables $X_1,\dots,X_n$ are subtracted to the laws that currently determine their behaviour, and in which the (possibly new) values taken by such variables are fixed by some external intervention. The causal laws encoded in the model then allow us to find out, computationally, how  all the variables in the system are affected in this alternative scenario. 
Research on logics encompassing interventionist counterfactuals has been active in the past two decades. For example, a number of publications have provided complete axiomatizations for languages of various syntax, and over different classes of models.\footnote{The first such instance was the paper \cite{GalPea1998}, which only considered conjunctions of basic counterfactuals over \emph{recursive} causal models (those which exclude circular causation) and unique-solution models (whose precise definition is besides the needs of this paper). The results of \cite{Hal2000} allowed for free usage of Boolean connectives (except in the antecedents of counterfactuals) and also treateded models with circular causation. \cite{Bri2012} showed how to deal with right-nesting of counterfactuals and Boolean antecedents. \cite{BarSan2020,BarYan2020} axiomatized  languages with an operator for expressing observations (\emph{selective implication}) and with atoms expressing data dependencies among variables. This list of papers is not exhaustive, but it illustrates well the gradual build up to the languages we shall consider in this paper.} %provided complete axiomatizations  for languages of increasing generality.
The papers \cite{Hal2013,Zha2013} drew precise connections with the earlier Stalnaker-Lewis theory of counterfactuals \cite{Sta1968,Lew1973}. 
In \cite{BarGal2022} logics for causal reasoning were studied via translations to first-order logic,
and the articles
\cite{Hal2000,Hal2016,MosIbeIca2022} discuss the complexity of causal and probabilistic languages. %    (see e.g. \cite{GalPea1998,Hal2000,Bri2012,Hal2013,Zha2013,BarSan2020,BarGal2022})

The classical literature on causal inference does not neatly separate the methods of probability and of causal analysis; many standard concepts in causal inference are expressed by mixing probabilistic and causal concepts. In other words, causal inference uses an array of new notational devices that are not entirely reducible to classical probabilistic reasoning; two significant examples (from \cite{Pea2000}) of these new notations are the conditional $do$ expressions
($Pr(y\mid do(x),z)$)
and Pearl's ``counterfactuals'' 
($Pr(Y_{X = x} \mid Z=z)$). 
%These expressions are defined using a mixture of probabilistic conditioning and the notion of intervention on a system of variables.
We refer the reader to \cite{BarCorIbeIca2022} for a detailed discussion of the meaning and use of these expressions. Roughly speaking, 
they both describe the probability that the variable $Y$ takes value $y$ after intervening to set $X$ to $x$, conditional upon the observation that $Z$ takes value $z$; % (resp. $X$ takes the value $x'$).
but the two expressions differ subtly in that in the former the conditioning %(over $Z=z$)
is performed in the system modified by the intervention that sets $X$ to $x$, while in the latter expression conditioning %(over $X=x'$)
is %applied
relative to the pre-intervention system. 
To this regard, we follow the proposal of Barbero and Sandu  \cite{BarSan2018,BarSan2024} to decompose these complex causal-probabilistic expressions in terms of a 
minimal set of logical primitives. In particular, %and to separate 
 probabilistic conditioning %from
 and causal interventions will correspond to two distinct logical conditionals, $\supset$ and $\cf$.

%The classical literature on causal inference does not separate the methods of probability and of causal analysis. 
%Many standard concepts in causal inference are expressed by mixing probabilistic and causal statements, e.g., in conditional $do$ expressions
%($Pr(y\mid do(x),z)$)
%and Pearl's ``counterfactuals'' 
%($Pr(Y_{X = x} \mid X=x')$).
%These expressions are defined using a mixture of probabilistic conditioning and the notion of intervention on a system of variables. They both describe the probability that the variable $Y$ takes the value $y$ after intervening to set $X$ to $x$, and conditional upon the observation that $Z$ takes the value $z$ (resp. $X$ takes the value $x'$). The two expressions differ subtly in that in the former the conditioning (over $Z=z$) is performed in the system modified by an intervention, while in the latter conditioning (over $X=x'$) is applied to the pre-intervention system. 
%To this regard, we follow the proposal of Barbero and Sandu  \cite{BarSan2018,BarSan2024} to decompose these complex causal-probabilistic expressions in terms of a 
%minimal set of logical primitives, and to separate probabilistic conditioning from causal interventions. 

In order to % study probabilistic conditioning and causal interventions in a common framework,
make this decomposition possible, one needs to move from causal models to the more general \emph{causal multiteam semantics}, where %probabilistic and causal notions can be formalized together.
all the needed logical operators are available.
Team semantics is the semantical framework of modern logics of dependence and independence. Introduced by Hodges \cite{Hod1997} and adapted to dependence logic by V\"a\"an\"anen \cite{Vaa2007}, team semantics defines truth in reference to collections of assignments, called \emph{teams}. Team semantics is particularly suitable for the formal analysis of dependencies and independencies in data. Recent developments in the area have broadened the scope of team semantics to cover probabilistic and quantitative notions of dependence and independence. Durand et al. \cite{DurHanKonMeiVir2018,DurHanKonMeiVir2016} introduced multiset and probabilistic variants of team semantics as frameworks for studying probabilistic dependency notions such as conditional independence logically. Further analysis has revealed that definability and complexity of logics in these frameworks are intimately connected to definability and complexity of Presburger (\cite{GradelW2022,Wilke2022}) and real arithmetic (\cite{HanVir2022,HKBV2020}).

\emph{Causal teams}, proposed by Barbero and Sandu \cite{BarSan2020}, fuse together teams and causal models, and model inferences encompassing both functional dependencies arising from data and causal dependencies arising from structural equations.
The logics considered by Barbero and Sandu %are closed under Boolean connectives \commf{Sounds slippery; $\CO$ is not closed under $\sqcup$ nor contradictory negation.} and
use atomic expressions of the form $X=x$ and $\dep{X}{Y}$ to state that the variable $X$ takes the value $x$ and that (in the data) the value of the variable $Y$ is functionally determined by the values of the variable $X$, respectively. 
\emph{Interventionist counterfactuals} ($X=x\cf \psi$) and \emph{selective implications} ($\alpha \supset \psi$) then describe consequences of actions and consequences of learning from observations. 
For example, the intended reading of the formula
\(
``\textrm{Pressure} = 300 \cf \textrm{Volume}= 4"
\)
is: \emph{If we raise the pressure to $300$ kPa, the volume of the gas will be $4$ $\mathit{m}^3$.}   
On the other hand, the intended reading of the formula
\(
``\textrm{Pressure} = 20 \supset 10<\textrm{Altitude}<30"
\)
is: \emph{If we read $20$ kPa from the barometer, the current altitude is between 10 and 30 km}.    

Finally, the \emph{causal multiteam semantics} coined by Barbero and Sandu \cite{BarSan2024} fuses together multiteams and causal models. The shift from teams to multiteams makes it now possible to study probabilistic conditioning and causal interventions in a unified framework. Barbero and Sandu study a language called $\PCO$ (for Probabilities, Causes and Observations) which they claim to capture a fair portion of the probabilistic causal reasoning that appears in the field of causal inference. It does indeed suffice to capture many forms of probabilistic conditioning, and it suffices to express conditional  $do$ expressions,  the ``Pearl counterfactuals'' mentioned above and more general kinds of statements. For example, the statement ``the probability that a sick untreated patient would be healed when treated is at least $\frac{2}{3}$'' can be formalised as
\(
(\textrm{Sick}=1 \land \textrm{Treated}=0) \supset (\textrm{Treated}=1 \cf \Pr(\textrm{Sick}=0) \geq \frac{2}{3}).
\)
The paper \cite{BarSan2024} raises however the doubt whether $\PCO$ can express, in general, the comparison of conditional probabilities (e.g., statements of the form $\Pr(\alpha\mid\beta)\geq \Pr(\gamma\mid\delta)$). We show here that it fails to do so; thus, $\PCO$ cannot be used, for instance, to compare the expected efficacy of two distinct (non-enforced) medical treatments. The proof shows that this does not improve if we add to $\PCO$ the so-called \emph{strict tensor disjunction} that is often used in the literature on multiteam semantics.

The cornerstone of this inexpressibility result is an abstract characterization of the expressive power of $\PCO$, which in particular shows that the classes of probability distributions that are consistent with a given $\PCO$ formula can be described in terms of a certain class of linear inequalities. On the other hand, by a geometrical argument we see that there are statements of comparison of conditional probabilities which unavoidably involve inequalities of second degree. The quest for an understanding of language $\PCO$ naturally proceeds via an understanding of the expressivity of its key resources:  probabilistic evaluation atoms ($\Pr(\alpha) \geq \epsilon$), comparison atoms ($\Pr(\alpha)\geq \Pr(\beta)$), observations ($\supset$) and interventions ($\cf$). This leads us to the study of four fragments $\PP^-$, $\PP$, $\PO$, and $\PC$. We characterize the expressive power of each of these sublogics, as well as the expressivity of $\PCO$, in terms of closure properties and of an appropriate class of linear inequalities. We also consider how adding the strict tensor operator $\lor$ to these languages may increase the expressive power; we see that $\lor$ behaves as a convex hull operator in the geometry of the probability sets of formulas, and use this fact to show that such extensions are associated to definability by arbitrary linear inequalities. These results are schematized in Table \ref{table:exp}. Together with  geometric reasoning, these characterizations yield a strict hierarchy of expressive power, as summarized in Figure \ref{fig:FIRSTCAUSALGRAPH}. The table and the figure also include a language $\PCO^\omega$ that extends $\PCO$ with (countably) infinite disjunctions. The paper \cite{BarSan2024} already shows that this language is more expressive than $\PCO$; our results yield an alternative proof.

The characterization and hierarchy results for $\PCO$ and its fragments can be found in Section \ref{sec: expressive power}, after a presentation of the semantics and syntax of the languages (Section \ref{sec: preliminaries}). Section \ref{sec: tensor} extends the classification to languages featuring the strict tensor. Section \ref{sec:teamsemantics} presents the inexpressibility results for conditional comparison atoms, and briefly discusses the related issue of definability of dependencies and independencies.

\begin{table}[t]
\begin{center}
\small
\begin{tabular}{ccccc}\toprule
Logic &  & \multicolumn{2}{c}{Closure properties} & References \\ \cmidrule{3-4}
& Type of  & change of & rescaling \& & \\
& inequalities & laws & empty multiteam  & \\
\midrule
$\PP^-$ & monic & X & X & Thm. \ref{thm: characterization of P minus}\\
%%Option 
$\PP$ &  \begin{tabular}{@{}c@{}}monic +  \\ homogeneous signed monic \end{tabular}
& X & X & Thm. \ref{thm: characterization of P}\\
%%Original
%$\PP$ & monic + & X & X & Thm. \ref{thm: characterization of P}\\
% & homogeneous signed monic & & & \\
%
%%Option
$\PO$ & \begin{tabular}{@{}c@{}}monic +  \\ homogeneous signed binary \end{tabular} & X & X  & Thm. \ref{thm: characterization of PO}\\
%%Original
%$\PO$ & monic + & X & X  & Thm. \ref{thm: characterization of PO}\\
% & homogeneous signed binary & & & \\
%%%%
$\Psm,\PPs$ & linear & X & X  & Thm. \ref{thm: characterization of Ps}\\
$\PC$ & union of signed monic sets &  & X  & Thm. \ref{thm: characterization of PC} \\
$\PCO$ & union of signed binary sets&  &  X  & Thm. \ref{thm: characterization of PCO}\\
$\PCOs$ & union of linear sets &  &  X  & Thm. \ref{thm: characterization of PCOs}\\
$\PCO^\omega$ & (unrestricted)  &  & X &  \cite{BarSan2024}
\\\bottomrule\\
\end{tabular}
\caption{Characterizations of expressivity of logics.
E.g., a class $\mathcal{K}$ of causal multiteams is definable by a $\PO$-formula iff $\mathcal{K}$ is signed binary (see Definition \ref{def:signed_binary}), closed under change of laws and rescaling, and has the empty multiteam property. \corrf{We write e.g. that
$\K$ is a union of signed binary sets to indicate that $\K = \bigcup_{\F\in\mathbb{F}_\sigma}\K^\F$, where $\K^\F$ is a class of signed binary causal multiteams of function component $\F$.}} 
%(see Theorem \ref{thm: characterization of PCO}).. \corrf{This terminology will be explained throughout the paper.} \textcolor{purple}{Jonni: I don't think this terminology is used anywhere else in the paper. Maybe it's better to say that "In the table, we write that $\K$ is a union of signed binary sets to indicate that $\K = \bigcup_{\F\in\mathbb{F}_\sigma}\K^\F$, where $\K^\F$ is a signed binary causal multiteam of function component $\F$ (see Theorem \ref{thm: characterization of PCO})."} }
\label{table:exp}
 \end{center}
%\vspace{-9mm}
 \end{table}
% \begin{figure}[t]
% 	\centering
% 		\begin{tikzpicture}

% \normalsize
  
% 		\node(po) at (0,1.4) {$\PO$};

%          \node(pm) at (-4.0,0) {$\PP^-$};	
		
% 		\node(p) at (-2.0,0) {$\PP$};
		 
% 		\node(pco)  at (2, 0) {$\PCO$};

%         \node(pc) at (0,-1.4) {$\PC$};

%         \node(inf) at (4,0) {$\PCO^\omega$};
		
% 		\draw [->]  (pm)--(p) node[above, pos=0.5]{\footnotesize Lemma \ref{lemma: not all signed monic are monic}};		

% 		\draw [->] (p)--(pc) node[above, sloped, pos=0.5]{\footnotesize Corollary \ref{cor: easy comparisons}};		
		
% 		\draw[->] (p)--(po) node[above, sloped, pos=0.5]{\footnotesize Prop. \ref{prop: PO greater than P}}; 
		
% 		\draw[->] (po)--(pco) node[above, sloped, pos=0.5]{\footnotesize Corollary \ref{cor: easy comparisons}};
		
% 		\draw [->] (pc)--(pco) node[above, sloped, pos=0.5]{\footnotesize Prop. \ref{prop: PO greater than P}}; 
		
% 		\draw[->] (pco)--(inf) node[above, pos=0.5]{\footnotesize Cor. \ref{cor: PCO smaller than PCOinf}}; 
		
% 		\end{tikzpicture}
% 		\caption{Arrows denote strict inclusion of expressivity; $\PC$ and $\PO$ are incomparable.
%   %(Corollaries \ref{cor: easy comparisons} and \ref{cor: PO not included in PC}).
%   }
%   \label{fig:FIRSTCAUSALGRAPH}
% %  \vspace{-5mm}
% \end{figure}

\vspace{-30pt}

\begin{figure}[t]
	\centering
		\begin{tikzpicture}

\normalsize

		\node(po) at (0,1.4) {$\PO$};

         \node(pm) at (-4.0,0) {$\PP^-$};	
		
		\node(p) at (-2.0,0) {$\PP$};
		 
		\node(pco)  at (2, 0) {$\PCO$};

        \node(pc) at (0,-1.4) {$\PC$};

        \node(inf) at (7,0) {$\PCO^\omega$};

        \node(ps) at (3,1.4) {$\Ps$};

        \node(pcos) at (4.5,0) {$\PCOs$};
		
		\draw [->]  (pm)--(p) node[above, pos=0.5]{\footnotesize Lemma \ref{lemma: not all signed monic are monic}};

		\draw [->] (p)--(pc) node[above, sloped, pos=0.5]{\footnotesize Cor. \ref{cor: easy comparisons}};		
		
		\draw[->] (p)--(po) node[above, sloped, pos=0.5]{\footnotesize Prop. \ref{prop: PO greater than P}}; 
		
		\draw[->] (po)--(pco) node[above, sloped, pos=0.5]{\footnotesize Cor.\ref{cor: easy comparisons}};
		
		\draw [->] (pc)--(pco) node[above, sloped, pos=0.5]{\footnotesize Prop. \ref{prop: PO greater than P}}; 

        \draw[->] (po)--(ps) node[above, sloped, pos=0.5]{\footnotesize Cor. \ref{cor: Ps separation}};

        \draw[->] (ps)--(pcos) node[above, sloped, pos=0.5]{\footnotesize Cor. \ref{cor: easy comparisons}};  

        \draw[->] (pco)--(pcos) node[above, pos=0.5]{\footnotesize Prop. \ref{prop: PCO strictly included in PCOs}};
  
		\draw[->] (pcos)--(inf) node[above, pos=0.5]{\footnotesize Prop. \ref{prop: Ps strictly included in PCOs}};

		\end{tikzpicture}
		\caption{Arrows denote strict inclusion of expressivity; $\PC$ and $\PO$ are incomparable (Prop. \ref{prop: PO greater than P} and Cor. \ref{cor: easy comparisons}), as are $\PCO$ and $\Ps$ , and $\PC$ and $\Ps$ (Prop. \ref{prop: PCO Ps incomparable}).
  %(Corollaries \ref{cor: easy comparisons} and \ref{cor: PO not included in PC}).
  }
  %\label{fig:EXTENDEDINCLUSIONGRAPH}
  \label{fig:FIRSTCAUSALGRAPH}
  \vspace{-5mm}
\end{figure}

\vspace{20pt}

%%%%%%%%%%%%%%%%%%%%%%%%%%%%%%%%%%%%%%%%%%%%
\section{Logics with causal multiteam semantics}\label{sec: preliminaries}
Capital letters such as $X, Y,\dots$ denote \textbf{variables} (standing
for specific magnitudes such as ``temperature'' and ``volume'') which
take \textbf{values} denoted by small letters. The values of the variable $X$ will be often
denoted by $x, x', \dots$. %As a special case, we may also consider "Boolean" variables
%which take values $1$ or $0$ to describe the occurrence, resp. non-occurrence,
%f a specific event.
Sets (and tuples, depending on the context) of variables and values are denoted by boldface letters such as $\SET X$ and $\SET x$.
%The capital letters $S$ and $T$ are reserved to denote causal teams, to be defined below.
We consider probabilities that arise from the counting measures of finite (multi)sets. For finite sets $S\subseteq T$, we define 
\(
P_T(S):= \frac{|S|}{|T|}.
\)

A \textbf{signature} is a pair $(\dom,\ran)$, where $\dom$ is a finite set of variables and $\ran$ a function mapping each $X\in \dom$ to a finite set $\ran(X)$ of values (the \textbf{range} of $X$). % We assume that each variable can take values in a set $\ran(X)\subseteq  \ran$; this restriction will be incorporated in the notion of causal team, to be defined below. 
%Once
%a (finite) set $\dom$ of variables is fixed, each
We stipulate a fixed ordering on $\dom$, and write $\SET W$ for the tuple of all the variables of $\dom$ listed in that order. We write $\SET W_X$ for the variables of $\dom\setminus\{X\}$ listed according to the fixed order. %, $\SET W_{XY}$ for the analogous tuple of variables in $\dom\setminus\{X,Y\}$, and so on. %The definition does not differ in any significant detail from the definition of causal team given in the first part of the paper, except for the fact that the first component of a causal multiteam will be a multiteam rather than a team. 
For a tuple $\SET X = (X_1,\dots, X_n)$ of variables, $\ran(\SET X)$ denotes the Cartesian product $\ran(X_1)\times\dots\times \ran(X_n)$. 
An \textbf{assignment} of signature $\sigma$ is a mapping $s:\dom\rightarrow\bigcup_{X\in \dom}\ran(X)$ such that $s(X)\in \ran(X)$
for each $X\in \dom$.
The set of all assignments of signature $\sigma$ is denoted by $\B_\sigma$.
For an assignment $s$ having the variables of $\SET X = (X_1, \dots, X_n)$ in its domain, $s(\SET X)$ denotes the tuple $(s(X_1), \dots,s(X_n))$.
For $\SET X\subseteq \dom$, $s_{\upharpoonright \SET X}$ is the restriction of $s$ to the variables in $\SET X$.

A \textbf{team} $T$ of signature $\sigma$ is a subset of $\B_\sigma$.
Intuitively, a multiteam is just a multiset analogue of a team.
%
%
%In order to be able to represent the probabilities of formulae in our models, we will rather need \emph{multisets} of assignments (\textbf{multiteams}).  There at least two different approaches to the formalization of multiteams in the literature (\cite{Vaa2017},\cite{DurHanKonMeiVir2016}).
%
We represent \textbf{multiteams} as (finite) teams with an extra variable $Key$ (not belonging to the signature) ranging over $\mathbb N$, which takes different values over different assignments of the team, and which is never mentioned in the formal languages. \corrf{We will refer to assignments with the extra variable $Key$ as \textbf{extended assignments}.}  
%We  assume that the range of values for $Key$ is $\mathbb N$.
%The set of all such extended assignments is denoted by $\mathbb A_\sigma$.
%
%
A multiteam can be then presented as a table; e.g., the following
\begin{center}
$T$: \begin{tabular}{|c|c|c|}
\hline
 \multicolumn{3}{|l|}{\hspace{-4pt} Key \ \  X \ \    Y} \\
\hline
 \phantom{a}0\phantom{a} & 0 & 0 \\
\hline
 1 & 0 & 0 \\
\hline
 2 & 0 & 1 \\ 
\hline
\end{tabular}
\end{center}
describes a multiteam containing two ``copies'' of the assignment $s(X,Y) = (0,0)$ (first two rows) plus another assignment $t(X,Y) = (0,1)$. We will say that the variable domain of this multiteam $T$ is $\dom=\{X,Y\}$, and omit mentioning the $Key$ variable. Multiteams will be used to encode probability distributions over the underlying team (in this case, the distribution that assigns probability $\frac{2}{3}$ to assignment $s$, and probability $\frac{1}{3}$ to $t$\corrf{; note that we are using the counting probability distribution over a set of \emph{extended} assignments}). The ``underlying team'' (i.e., support of a multiteam) is characterized formally later in Definition \ref{def:support}.

Multiteams by themselves do not encode any solid notion of causation; they do not tell us how a system would be affected by an intervention. We therefore need to enrich multiteams with additional structure. In particular, we will associate to some of the variables a deterministic causal law. The law for variable $V$ takes the form of a function, which describes the way the value of $V$ is generated from the values of other variables in the system. These laws will be used crucially in order to compute how the model is affected by an intervention. Furthermore, we will require that each assignment in the multiteam agrees with these laws.

%\jonni{Edited the last sentence slightly. Renamed the variables $Y$ as $V$.}
\begin{definition}\label{def: causal multiteam}
A \textbf{causal multiteam} of signature $(\dom, \ran)$ 
with \textbf{endogenous variables} $\en(T)\subseteq \dom$ is a pair $T = (T^-,\F)$ such that:
\begin{enumerate}
\item 
$T^-$ is a multiteam of domain $\dom$,
%(the \emph{multiteam component} of $T$)
%and
%\item $G(T) =(\dom(T),E)$ is a graph over the set of variables (except the variable Key). For any $X\in \dom(T)$, we denote as $\PA_X$ the set of all variables $Y\in \dom(T)$ such that the arrow $(Y,X)$ is in $E$ (the \emph{parents} of $X$).
%\item $\mathcal{R}_T = \{(X,\ran(X)) \ | \ X\in \dom(T)\}$ 
%is a function which assigns a range $\ran(X)\subseteq \ran(T)$ to each variable $X$ (except Key).
\item
$\F$ is a function $\{(V,\F_V) \ | \ V\in \en(T)\}$ that assigns to each endogenous variable $V$ a non-constant $|\SET W_V|$-ary function $\F_V: \ran(\SET W_V)\rightarrow 
\ran(V)$,
%(the \emph{function component} of $T$)
\item $(T^-,\F)$ satisfies the \textbf{compatibility constraint}: $\F_V(s(\SET W_V))=s(V)$, for all $s\in T^-$ and $V\in \en(T)$.
\end{enumerate}
%which 
%and $Y\in \SET V$.
%
%\begin{itemize}
%%\item $T^-(X) \subseteq \ran(X)$ for each $X\in \dom(T)$.
%%\item If $\PA_Y=\{X_1,\dots, X_n\}$, then $T^-\models \dep{X_1,\dots, X_n}{Y}$
%\item For all $s\in T^-$, and all $Y\in \SET V$, $s(Y)= \F_Y(s(\SET W_Y))$.
%\end{itemize}
$T^-$ and $\F$ will be called, respectively, the \textbf{multiteam component} and the \textbf{function component} of $T$. 
We write $(\dom(T),\ran(T))$ to denote the signature of the causal multiteam $T$.\looseness=-1
\end{definition}
%
%\noindent In the following, we will reserve the notation $S,T,\dots$ for causal multiteams, and $S^-, T^-,\dots$ for multiteams simpliciter.
%
%Clause b) is there to ensure that whenever the graph contains an arrow $X_i\rightarrow Y$, and $\{X_1,\dots X_n\}$ is the maximal set of variables from which there is an arrow come to $Y$, then the team satisfies the corresponding dependence atom $\dep{X_1,\dots X_n}{Y}$. 
%
%
Notice that, due to the compatibility constraint, not all instances for $\en(T)$ and $T^-$ give rise to causal multiteams.
The function component $\mathcal F$ induces a system of structural equations; an equation
\(
V := \mathcal F_V(\SET W_V)
\)
for each variable $V\in \en(T)$. Note that some of the variables in $\SET W_V$ may not  be necessary for evaluating $V$. For example, if $V$ is given by the structural equation $V:= X+1$, all the variables in $\SET W_V\setminus \{X\}$ are irrelevant (we call them \textbf{dummy arguments} of $\F_V$). The set of non-dummy arguments of $\F_V$ is denoted as $\PA_V$ (the set of \textbf{parents} of $V$). %, and $f_V\colon\ran(\PA_V)\rightarrow \ran(V)$ is the restriction of $\F_V$ to the arguments $\PA_V$.
%i.e., for each value $\SET p\in \PA_Y$, $f_V(\SET p)$ is defined as $\F_v(\SET p\SET d)$ (where $\SET p\SET d$ is a tuple listing, in the correct order, the values in $\SET p$ and an arbitrary tuple $\SET d$ of values for the dummy arguments).

We associate to each causal multiteam $T$ a \textbf{causal graph} $G_T$, whose vertices are the variables in $\dom$ and where an arrow is drawn from each variable in $\PA_V$ to $V$, whenever $V\in \en(T)$ (see Example \ref{ex:intervention} and picture \ref{fig:intervention} for a depiction). 
%The graph associated to this system of structural equations is $G(T)$. 
The variables in $\dom(T)\setminus \en(T)$ are called \textbf{exogenous} (written $\exo(T)$). \corrf{The exogenous variables are then those for which our model does not provide a causal explanation; our modeling efforts need to stop somewhere, if we want to avoid infinite regress. Sometimes, in the literature on causation, the exogenous variables are also thought of as factors outside control, which cannot be observed or intervened upon; we will not make such assumption in this paper.}

In the present paper we restrict attention to systems of variables that are connected by causal laws that do not form cycles (e.g., we exclude the possibility that $X$ causally affects $Y$, $Y$ causally affects $Z$, and in turn $Z$ affects $X$); such systems are usually called \emph{recursive}. Concretely, we enforce the following convention:

\begin{center}
\emph{Throughout the paper we will implicitly assume that causal multiteams have an acyclic causal graph.}
\end{center}

\noindent While the study of cyclic systems is far from absent from the literature (e.g. \cite{StrWol1960},\cite{Spi1995},\cite{Hal2000},\cite{BarGal2022}), in a probabilistic context it introduces a number of complications that go well beyond the scope of the framework considered in this paper.

%The following operations on causal multiteams will be useful later:

\begin{definition}
 A causal multiteam $S=(S^-,\F_S)$ is a \textbf{causal sub-multiteam} of $T=(T^-,\F_T)$, if they have the same signature, $S^-\subseteq T^-$, and $\mathcal{F}_S = \mathcal{F}_T$. We then write $S\leq T$.
\end{definition}

%\begin{definition}
%A causal multiteam $S=(S^-,\F_S)$ is a \textbf{causal sub-multiteam} of $T=(T^-,\F_T)$, if they have same signature, $S^-\subseteq T^-$, and $\mathcal{F}_S = \mathcal{F}_T$. We then write $S\leq T$.
%
%Given a causal multiteam $T=(T^-,\F_T)$, a \textbf{causal sub-multiteam} $S=(S^-,\F_S)$ of $T$ is a causal multiteam with the same domain and the same set of endogenous variables, which satisfies the following conditions: 1) $S^-\subseteq T^-$, 2) $\mathcal{F}_S = \mathcal{F}_T$.
%\end{definition} 

We consider causal multiteams as dynamic models, that can be affected by observations and interventions.
Given a causal multiteam $T = (T^-,\F)$ and a formula $\alpha$ of some formal language (evaluated over %assignments
causal multiteams according to some semantic relation $\models$), ``observing $\alpha$'' produces the causal sub-multiteam $T^\alpha = ((T^\alpha)^-,\F)$ of $T$, where
\(
(T^\alpha)^- \dfn \{s\in T^-  \ | \  (\{s\},\F)\models \alpha\}.
\)\footnote{Throughout the paper, the semantic relation in terms of which $T^\alpha$ is defined will be the semantic relation for language $\CO$, which shall be defined below.}
On the other hand, an intervention on $T$ will \emph{not}, in general, produce a sub-multiteam of $T$. It will instead modify the values that appear in some of the columns of $T$. We consider interventions that are described by conjunctions of the form $X_1=x_1 \land\dots\land X_n=x_n$ (or, shortly, $\SET X = \SET x$). Such a formula is \textbf{inconsistent} if there are two indexes $i,j$ such that $X_i$ and $X_j$ denote the same variable, while $x_i$ and $x_j$ denote distinct values; it is \textbf{consistent} otherwise.
Applying an intervention $do(\SET X = \SET x)$, where $\SET X = \SET x$ is consistent, to a causal multiteam $T = (T^-,\F)$ of endogenous variables $\SET V$ will produce a causal multiteam $T_{\SET X = \SET x} = (T^-_{\SET X = \SET x},\F_{\SET X = \SET x})$ of endogenous variables $\SET V \setminus \SET X$, with:
\begin{itemize}
\item function component  
\(
\F_{\SET X = \SET x} := \F_{\upharpoonright(\SET V \setminus \SET X)}
\)
(the restriction of $\F$ to the set of variables $\SET V \setminus \SET X$) 

\item multiteam component  
\(
T_{\SET X=\SET x}^-:=\{s^\F_{\SET X=\SET x}\mid s\in T^-\},
\)
where each $s^\F_{\SET X=\SET x}$ is the unique assignment compatible with $\mathcal F_{{\SET X = \SET x}}$ defined (recursively) as 
\[s^\F_{\SET X=\SET x}(V)=\begin{cases}
x_i&\text{ if }V=X_i\in \SET X\\
s(V)&\text{ if }V\in \exo(T)\setminus \SET X\\
\F_V(s^\F_{\SET X=\SET x}(\SET W_V%\PA_V%^{\mathcal F}
))&\text{ if }V\in \en(T)\setminus \SET X.
\end{cases}
\]
\end{itemize}
We emphasize that the uniqueness of $s^\F_{\SET X=\SET x}$, and thus the correctness of this definition, hinges on our assumption that the causal graphs are acyclic. For an explanation of how interventions may be defined in the cyclic (non-probabilistic) case, see \cite{BarGal2022}.
%\jonni{EXAMPLE for Applying an intervention $do(\SET X = \SET x)$?}

\begin{figure}[t]
\begin{center}
$T^-$: \begin{tabular}{|c|c|c|c|}
\hline
\multicolumn{4}{|c|}{ } \\
\multicolumn{4}{|l|}{\small{Key} \ \ \ \  $X$\tikzmark{XR} \ \ \ \ \tikzmark{YL}$Y$\tikzmark{YR} \ \ \ \ \tikzmark{ZL}$Z$} \\
\hline
 $\phantom{a}$0$\phantom{a}$ & 0 & 1 & 1\\
\hline
 1 & 1 & 2 & 3 \\
\hline
 2 & 1 & 2 & 3 \\
\hline
 3 & 2 & 3 & 5\\ 
\hline
 4 & 2 & 3 & 5\\ 
\hline
 5 & 2 & 3 & 5\\ 
\hline
\end{tabular}
\begin{tikzpicture}[overlay, remember picture, yshift=0\baselineskip, shorten >=.5pt, shorten <=.5pt]
  \draw ([yshift=7pt]{pic cs:XR})  edge[line width=0.2mm, out=45,in=135,->] ([yshift=7pt]{pic cs:ZL});
  \draw [->] ([yshift=3pt]{pic cs:YR})  [line width=0.2mm] to ([yshift=3pt]{pic cs:ZL});
  \draw [->] ([yshift=3pt]{pic cs:XR})  [line width=0.2mm] to ([yshift=3pt]{pic cs:YL});
\end{tikzpicture}
\hspace{3mm}
$\leadsto$ \hspace{5pt} \begin{tabular}{|c|c|c|c|}
\hline
\multicolumn{4}{|c|}{ } \\
\multicolumn{4}{|l|}{  \small{Key} \ \ \ \  $X$\tikzmark{XR'}  \ \ \ \   \tikzmark{YL'}$Y$\tikzmark{YR'} \ \ \ \ \tikzmark{ZL'}$Z$} \\
\hline
 $\phantom{a}$0$\phantom{a}$ & 0 & \textbf{1} & ... \\
\hline
 1 & 1 & \textbf{1} &  ...  \\
\hline
 2 & 1 & \textbf{1} &  ... \\
\hline
 3 & 2 & \textbf{1} &  ... \\ 
\hline
 4 & 2 & \textbf{1} & ... \\ 
\hline
 5 & 2 & \textbf{1} &  ... \\ 
\hline
\end{tabular}
\hspace{10pt} $\leadsto$ 

\vspace{10pt}

\hspace{-60pt} $\leadsto$ \hspace{5pt}  $T_{Y=1}^-$: \begin{tabular}{|c|c|c|c|}
\hline
\multicolumn{4}{|c|}{ } \\
\multicolumn{4}{|l|}{\small{Key} \ \ \ \ $X$\tikzmark{XR''} \ \ \ \ \tikzmark{YL''}$Y$\tikzmark{YR''} \ \ \ \ \tikzmark{ZL''}$Z$} \\
\hline
 \phantom{a}0\phantom{a} & 0 & 1 & \textbf{1}\\
\hline
 1 & 1 & 1 & \textbf{2} \\
\hline
 2 & 1 & 1 & \textbf{2} \\
\hline
 3 & 2 & 1 & \textbf{3}\\ 
\hline
 4 & 2 & 1 & \textbf{3}\\ 
\hline
 5 & 2 & 1 & \textbf{3}\\ 
\hline
\end{tabular}
\begin{tikzpicture}[overlay, remember picture, yshift=0\baselineskip, shorten >=.5pt, shorten <=.5pt]
  \draw ([yshift=7pt]{pic cs:XR'})  edge[line width=0.2mm, out=45,in=135,->] ([yshift=7pt]{pic cs:ZL'});
  \draw [->] ([yshift=3pt]{pic cs:YR'})  [line width=0.2mm] to ([yshift=3pt]{pic cs:ZL'});
  %\draw [->] ([yshift=3pt]{pic cs:XR})  [line width=0.2mm] to ([yshift=3pt]{pic cs:YL});
  
  \draw ([yshift=7pt]{pic cs:XR''})  edge[line width=0.2mm, out=45,in=135,->] ([yshift=7pt]{pic cs:ZL''});
  \draw [->] ([yshift=3pt]{pic cs:YR''})  [line width=0.2mm] to ([yshift=3pt]{pic cs:ZL''});
  %\draw [->] ([yshift=3pt]{pic cs:XR})  [line width=0.2mm] to ([yshift=3pt]{pic cs:YL});
\end{tikzpicture}
\end{center}
\caption{Causal multiteams for Example \ref{ex:intervention}, showing how the multiteam component $T_{Y=1}^-$ of a causal multiteam is computed from $T^-$ given an  intervention $do(Y=1)$. The figure also  describes the associated causal graphs.}
\label{fig:intervention}

\end{figure}

\begin{example}\label{ex:intervention}
Consider the causal multiteam $T= (T^-, \F)$ depicted in Figure \ref{fig:intervention},
where each row of the leftmost table  depicts an assignment of $T^-$ (e.g., the third row represents an assignment $s$ with $s(Key)=2$, $s(X)=1$, $s(Y)=2$, $s(Z)=3$).  The rows of the table are compatible with the laws $\F_Z(X,Y) = X+Y$ and $\F_Y(X)=X+1$, while $X$ is exogenous. $T$ encodes %many
probabilities for formulas that discuss variables $X,Y,Z$ and their possible values; for example, $P_T(Z=3) = \frac{1}{3}$.

Suppose we can enforce the variable $Y$ to take the value $1$. The effect of such an intervention, depicted in the right-hand side of Figure \ref{fig:intervention}, is to first set the value of $Y$ to $1$ (in all rows) and then to recompute the values of $Z$ using the function $\F_Z$. The probability distribution has changed: now  $P_{T_{Y=1}}(Z=3)=\frac{1}{2}$. Furthermore, the function $\F_Y$ is omitted from $T_{Y=1}$, and thus the arrow from $X$ to $Y$ has been omitted from the causal graph.   
\end{example}

%\subsection{Logics for causation}
Given two languages $\La,\La'$ of signature $\sigma$, whose semantics is defined over causal multiteams, and formulae $\varphi\in \La$ and $\varphi'\in \La'$, we write $\varphi \equiv_\sigma \varphi'$ if $T\models \varphi \Leftrightarrow T\models \varphi'$ holds for all causal multiteams $T$ of signature $\sigma$. We omit the index $\sigma$ if it is clear from the context.
Similarly, we may write $\mathcal{L}_\sigma$ to emphasise that the signature of $\La$ is $\sigma$.

We write $\La\leq\La'$
%($\La$ is at most as expressive as $\La'$)
if for every $\varphi\in\La$ there is $\varphi'\in\La'$ with $\varphi \equiv \varphi'$.
We write $\La<\La'$ 
%($\La$ is strictly less expressive than $\La'$)
if $\La\leq\La'$ but $\La'\not\leq\La$.
Finally, we write $\La \equiv \La'$
%($\La$ is expressively equivalent to $\La'$)
if $\La\leq\La'$ and $\La'\leq\La$. 
$\K^\sigma_\varphi$ is  the set %class
of all causal multiteams of signature $\sigma$ that satisfy $\varphi$. $\K^\sigma_\varphi$ will be (with the exception of contradictory formulae) a countably infinite set.

A class $\K$ of causal multiteams is \textbf{definable} in $\mathcal L_\sigma$ if $\K=\K^\sigma_\varphi$ for some $\varphi\in\La_\sigma$.
%\corrf{Equivalently, we also say that $\K$ is definable in $\mathcal{L}_\sigma$.}
%\jonni{The last sentence above is redundant.}\commf{Why? We use this expression in many theorem statements, but it was undefined. (Notice the role played by the signature).} \jonni{I reformulated it above.}

A class $\K$ is \textbf{flat} if $(T^-,\F)\in\K$ iff $(\{s\},\F)\in \K$ for every $s\in T^-$. A class $\K$ of causal multiteams of signature $\sigma$ has the \textbf{empty multiteam property}, if $\K$ includes all empty causal multiteams of signature $\sigma$ (we say that a causal multiteam $(T^-,\F)$ is \textbf{empty} if the multiteam $T^-$ is). A $\sigma$-formula $\varphi$ has one of the above (or to be defined) properties, if $\K^\sigma_\varphi$ has it. A language $\La$ is flat ( resp. has the empty team property), if every $\varphi\in \La$ is flat (resp. has the empty team property). In general, we say that $\La$ has a certain property if and only if each $\varphi\in \La$ has it.

The language $\CO$, introduced in \cite{BarSan2020}, is defined by the following BNF grammar:
\[
\alpha ::= Y=y \mid Y\neq y \mid  \alpha\land\alpha  \mid  \alpha\lor\alpha  \mid  \alpha\supset\alpha  \mid  \SET X = \SET x \cf \alpha,
\]
where $\SET{X}\cup\{Y\}\subseteq \dom$, $y\in \ran(Y)$, and $\SET x\in \ran(\SET X)$.
It is a language for the description of facts. We will later introduce extensions that allow us to talk about the probabilities of the facts that are expressible in $\CO$.
%The notation $\SET X = \SET x$ abbreviates a formula of the form $X_1 = x_1 \land\dots\land X_n = x_n$.
%We will often simply write $\CO$, omitting reference to the specific signature. $\CO$ 
Formulae of the forms $Y = y$ and $Y\neq y$ are \textbf{literals}.
The semantics for $\CO$ is given by the following clauses:%\footnote{We remark that $T^\alpha$ (used in the clause for $\supset$) is defined (according to our previous explanation) simultaneously with the semantic clause for $\CO$.}
\begin{align*}
&T\models Y=y &&\text{iff}&& s(Y)=y \text{ for all } s\in T^-.\\
&T\models Y\neq y  &&\text{iff}&& s(Y)\neq y \text{ for all }s\in T^-.\\
&T\models \alpha\land \beta  &&\text{iff}&& T\models \alpha \text{ and } T\models \beta.\\
&T\models \alpha\lor \beta &&\text{iff}&& \text{there are $T_1,T_2\leq T$ s.t. } T_1^-\cup T_2^- = T^-,\\ & && && T_1^-\cap T_2^- = \emptyset,  T_1\models \alpha \text{ and  }T_2\models \beta.\\
&T\models \alpha \supset \beta &&\text{iff} && T^\alpha \models \beta.\\
&T\models \SET X=\SET x \cf \beta &&\text{iff}&& T_{\SET X=\SET x} \models \beta \text{ or } \SET X=\SET x \text{ is inconsistent}.
\end{align*}
where $T^\alpha$ is defined simultaneously with the clauses, as previously explained.

The intuitive readings of the conditional formulas $\alpha \supset \beta$ and $\SET X=\SET x \cf \beta$ are, respectively, ``After observing (or learning) $\alpha$, certainly $\beta$ holds'' and ``After setting $\SET X$ to $\SET x$, certainly $\beta$ holds''. Some of the semantic clauses for the other connectives may look unusual to a  reader unaccustomed to team semantics, but they are natural lifts of the usual Tarskian clauses from a setting in which formulas are evaluated on single assignments to a setting where they are evaluated on a multiplicity of assignments (for an overview of team semantics, the reader may consult e.g. \cite{DurKonVol2016}). As an example, the clause for a disjunction $\alpha\lor\beta$ is just stating that each assignment in $T$ satisfies either $\alpha$ or $\beta$. It says so by saying that $T$ can be split into two parts, one containing assignments that satisfy $\alpha$ and one containing assignments that satisfy $\beta$. This reading of the clauses is made possible by the fact that language $\CO$ is flat. The proof of the following result is similar to that of the analogous result for causal teams \cite[Thm. 2.10]{BarSan2020}.
\begin{theorem}\label{thm: CO flatness}
$\CO_{\sigma}$ is flat and therefore has the empty multiteam property. 
%Let $\alpha$ be a $\CO_{\sigma}$ formula, $T$ a causal multiteam and $T'$ an empty causal multiteam. Then
%\begin{enumerate}
%\item $T\models\alpha$ if and only if, for all $s\in T^-$, $(\{s\},\F)\models\alpha$, (flatness)
%\item $T'\models \alpha$. (empty multiteam property)
%\end{enumerate}
\end{theorem}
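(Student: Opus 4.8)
The plan is to prove both properties by structural induction on $\varphi\in\CO_\sigma$, establishing the empty multiteam property first since it is invoked inside the flatness argument. I would open with the preliminary remark that, for a fixed function component $\F$, whenever $s\in T^-$ for a causal multiteam $(T^-,\F)$, the pair $(\{s\},\F)$ is itself a legitimate causal multiteam: the compatibility constraint $\F_V(s(\SET W_V))=s(V)$ for this particular $s$ is inherited from $T$. The flatness statement to be proved is then that $T=(T^-,\F)\models\varphi$ if and only if $(\{s\},\F)\models\varphi$ for every $s\in T^-$.

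For the empty multiteam property I would run a quick induction over the fixed empty multiteam $(\emptyset,\F)$: the literals $Y=y$ and $Y\neq y$ hold vacuously; $\land$ is immediate from the induction hypothesis; for $\lor$ one splits $(\emptyset,\F)$ as $T_1=T_2=(\emptyset,\F)$; for $\alpha\supset\beta$ one uses $(\emptyset,\F)^\alpha=(\emptyset,\F)$; and for $\SET X=\SET x\cf\psi$ one uses that the intervention sends the empty multiteam to the empty multiteam. Each step closes via the induction hypothesis on the relevant subformula.

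For flatness, the literals are immediate, since their satisfaction is exactly a universal condition over the rows of $T^-$, and $\land$ distributes over this universal quantifier by the induction hypothesis. The two cases demanding care are $\supset$ and $\cf$. For $\alpha\supset\beta$, the point is that $(T^\alpha)^-$ is by definition exactly $\{s\in T^-\mid(\{s\},\F)\models\alpha\}$; applying the induction hypothesis for $\beta$ to $T^\alpha$ and then analysing a single row $s$ (invoking the empty multiteam property when $(\{s\},\F)\not\models\alpha$, so that $(\{s\},\F)^\alpha=(\emptyset,\F)\models\beta$ trivially) yields the equivalence. For $\SET X=\SET x\cf\psi$ with $\SET X=\SET x$ consistent (the inconsistent case being vacuously true on both sides), the crucial commutation identity is $(\{s\},\F)_{\SET X=\SET x}=(\{s^\F_{\SET X=\SET x}\},\F_{\SET X=\SET x})$, which holds because $s^\F_{\SET X=\SET x}$ is defined from $s$ and $\F$ alone, independently of the rest of the multiteam; since $T^-_{\SET X=\SET x}=\{s^\F_{\SET X=\SET x}\mid s\in T^-\}$, the induction hypothesis for $\psi$ over $T_{\SET X=\SET x}$ transfers row by row to the singletons $(\{s\},\F)$.

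The main obstacle is the disjunction case. For the forward direction, given a witnessing split $T_1,T_2\leq T$ with disjoint supports covering $T^-$ and $T_1\models\alpha$, $T_2\models\beta$, I would apply the induction hypothesis to get $(\{s\},\F)\models\alpha$ for $s\in T_1^-$ and $(\{s\},\F)\models\beta$ for $s\in T_2^-$, and then observe that each singleton splits trivially (the empty half handled by the empty multiteam property) to satisfy $\alpha\lor\beta$. For the converse, the key structural fact is that a singleton $(\{s\},\F)$ admits only the two trivial splits, so $(\{s\},\F)\models\alpha\lor\beta$ forces $(\{s\},\F)\models\alpha$ or $(\{s\},\F)\models\beta$; I would then let $T_1$ collect exactly the rows with $(\{s\},\F)\models\alpha$ and $T_2$ the remaining rows (each of which satisfies $\beta$), obtaining a disjoint covering split, and close by the induction hypothesis. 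The one subtlety I would flag is the multiset bookkeeping forced by the $Key$ variable: since satisfaction by a singleton depends only on the underlying assignment and not on its $Key$ value, partitioning the rows of $T^-$ by the property ``$(\{s\},\F)\models\alpha$'' yields genuine causal sub-multiteams with $T_1^-\cap T_2^-=\emptyset$ and $T_1^-\cup T_2^-=T^-$, so the disjointness requirement in the semantics of $\lor$ is met.
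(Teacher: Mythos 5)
Your proof is correct and takes essentially the same route as the paper, which establishes the theorem by the same structural induction on $\CO$ formulae (carried out for causal teams in \cite[Thm.~2.10]{BarSan2020} and transferred by analogy), with exactly your case analysis: empty multiteam property first, then flatness, using the trivial-split analysis of singletons for $\lor$ and the commutation of observation and intervention with singleton formation for $\supset$ and $\cf$. Your explicit treatment of the disjoint (strict) tensor via the $Key$-based partition addresses precisely the one point where the multiteam semantics differs from the set-based semantics of \cite{BarSan2020}, which is the detail the paper itself flags as needing this theorem to show the distinction is irrelevant for $\CO$.
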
 
In a sense, flatness tells us that $\CO$ behaves as a classical language. The probabilistic languages that we shall consider later will not be flat; probabilistic statements are meaningful at the level of multiteams but not at the level of the single assignments.

%Note
We also remark that in \cite{BarSan2020} the operator $\lor$ was defined without insisting  that $T_1^-\cap T_2^- = \emptyset$.
This was done since the paper considered set-based semantics.
As our semantics is based on multisets, the appropriate definition of $\lor$ uses a union that is sensitive to multiplicities (i.e. disjoint union).
Theorem \ref{thm: CO flatness} %below
entails %shows
that this distinction is irrelevant for $\CO$, but it will have an impact when we consider languages in which $\lor$ can be applied to probabilistic formulas. Then the two definitions disagree, because such languages lack the 
%in forthcoming works that apply $\lor$ to formulae $\varphi$ that do not have the following 
property called \emph{downward closure}: if $T\models \varphi$ and $S\leq T$,  then $S\models\varphi$.

%\footnote{The clause $T_1^-\cap T_2^- = \emptyset$ could be omitted without changing the meaning of formulae (and indeed is absent in the description of $\CO$ in previous papers). We include it because this distinction is important for a language $\PP^\textsf{s}$ considered later on.}

%Key properties of $\CO$ are \emph{flatness} and the \emph{empty multiteam property}, which characterize it as a classical language.
%The proof is similar to that of the analogous result for causal teams \cite[Thm. 2.10]{BarSan2020}.
%\jonni{Empty multiteam property follows from flattness.}

%Alternatively, the techniques described in Section \ref{sec: Transferring results} allow to derive results of this kind from the corresponding results in the causal team setting. Flatness tells us that we could have formulated the semantics of $\CO$ in a Tarskian form (truth over assignments), from which we refrained for homogeneity with the rest of the paper.\commf{Or we could....}

%By similar means we can prove a second property. We say that a causal multiteam $T = (T^-,\F)$ is \emph{empty} (resp. \emph{nonempty}) if such is the multiteam $T^-$ (which we have identified with a set of assignments over $\dom\cup{Key}$).

%\begin{theorem}[Empty multiteam property]
%Let $T$ be an empty causal multiteam of signature $\sigma$ and $\alpha$ a $\CO_\sigma$ formula. Then $T\models \alpha$. 
%\end{theorem}

%It was established in \cite{BarSan2020} that
If we pick a variable $X$ in the signature and a value $x\in Ran(X)$, we can abbreviate the formulae $X=x \lor X\neq x$ and $X=x \land X\neq x$ as $\top$, resp. $\bot$ (the former is a valid formula because it just says that the multiteam can be split in two parts, the assignments where $X$ takes value $x$ and those where it does not).
The so-called \textbf{dual negation} of a formula $\alpha$,
\(
T\models \neg\alpha \text{ iff } (\{s\},\F)\not\models  \alpha \text{ for all }s\in T^-,
\)
can be taken in $\CO$ as an abbreviation for $\alpha\supset \bot$. We remark that $\neg \alpha$ also admits an (inductive) definition that does \emph{not} use the symbol $\supset$, as follows:
\begin{itemize}
\item $(X = x)^d$ is $X \neq x$ 
\item $(X \neq x)^d$ is $X = x$ 
\item $(\alpha \land \beta)^d$ is $(\alpha^d \lor\beta^d)$
\item $(\alpha \lor \beta)^d$ is $(\alpha^d \land \beta^d)$
\item $(\alpha\supset\beta)^d$ is $\alpha \land \beta^d$
\item $(\SET X = \SET x\cf\beta)^d$ is $\SET X = \SET x\cf\beta^d$
\end{itemize} 
One can then show that $\neg\alpha \equiv \alpha^d$, analogously as in \cite{BarSan2020}, Theorem 2.11.

%can be inductively define in $\CO$ by a simple extension of the De Morgan's laws by $(X = x)^d \dfn X \neq x$, $(X \neq x)^d \dfn X = x$    $(\alpha\supset\beta)^d \dfn \alpha \land \beta^d$, and $(\SET X = \SET x\cf\beta)^d \dfn \SET X = \SET x\cf\beta^d$.
%Some remarks about negations. First of all, in  $\CO$ we can inductively define a dual negation:
%\begin{itemize}
%\item $(X = x)^d$ is $X \neq x$ 
%\item $(X \neq x)^d$ is $X = x$ 
%\item $(\alpha \land \beta)^d$ is $(\psi^d \lor\beta^d)$
%\item $(\alpha \lor \beta)^d$ is $(\alpha^d \land \beta^d)$
%\item $(\alpha\supset\beta)^d$ is $\alpha \land \beta^d$
%\item $(\SET X = \SET x\cf\beta)^d$ is $\SET X = \SET x\cf\beta^d$
%\end{itemize} 
%The meaning of this negation is explained by the following theorem:

%\begin{theorem}
% Let $T = (T^-, \F)$ be a causal multiteam and $\alpha\in\CO$. Then $T\models \alpha^d$ iff, for all $s\in T^-$, $(\{s\},\F)\not\models  \alpha$. 
 
% In particular, for any assignment $s$, $(\{s\},\F)\models \alpha^d$ iff $(\{s\},\F)\not\models  \alpha$. Thus, \corrf{if $T$ is nonempty}, $P_T(\alpha^d) = 1-P_T(\alpha)$.
%\end{theorem} 

Next, we introduce 
%an extension of $\CO$
a language with \textbf{probabilistic atoms}
$\Pr(\alpha) \geq \epsilon$, $\Pr(\alpha) > \epsilon$,  $\Pr(\alpha) \geq \Pr(\beta)$, $\Pr(\alpha) > \Pr(\beta)$,
where $\alpha,\beta\in\CO$ and $\epsilon \in [0,1]\cap \mathbb{Q}$. The first two are called \textbf{evaluation atoms}, and the latter two \textbf{comparison atoms}. %A \emph{probabilistic literal} is either a probabilistic atom or a formula $\sim\hspace{-4pt}{\alpha}$, where $\alpha$ is a probabilistic atom.
Probabilistic atoms %without the contradictory negation $\sim$
 together with literals of $\CO$ are called \textbf{atomic formulae}.
The probabilistic language $\PCO$ is then given by the following grammar:
\[
\varphi::= \eta \mid \varphi \land \varphi \mid  \varphi \sqcup \varphi \mid \alpha\supset \varphi \mid \SET X = \SET x\cf \varphi,
\]
where $\SET X\subseteq \dom$, $\SET x \in \ran(\SET X)$, $\eta$ is an atomic formula, %$\CO$  literal (as in definition \ref{def: CO language}) or a probabilistic atom literal, $\psi,\chi$ are $\PCD$ formulae, and 
and $\alpha$ is a $\CO$ formula.
Note that the antecedents of $\supset$ and the arguments of probability operators are $\CO$ formulae.
%(and thus they may contain occurrences of $\lor$, but not of $\sqcup$ or $\Pr$). Outside of these contexts, $\sqcup$ is allowed while $\lor$ is not.
%
The semantic clauses for the additional operators are given below:
\begin{align*}
&T\models \psi\sqcup \chi &&\text{iff}&& T\models \psi \text{ or } T\models \chi\\
&T\models \Pr(\alpha)\geq \epsilon  &&\text{iff}&& T^-=\emptyset \text{ or } P_T(\alpha)\geq \epsilon\\
&T\models \Pr(\alpha) > \epsilon  &&\text{iff}&& T^-=\emptyset \text{ or } P_T(\alpha) > \epsilon\\
&T\models \Pr(\alpha)\geq \Pr(\beta) &&\text{iff}&& T^-=\emptyset \text{ or } P_T(\alpha)\geq P_T(\beta)\\
&T\models \Pr(\alpha) > \Pr(\beta)  &&\text{iff}&& T^-=\emptyset \text{ or } P_T(\alpha) >  P_T(\beta),
\end{align*}
where $P_T(\alpha)$ is a shorthand for $P_{T^-}((T^\alpha)^-)$.\footnote{We remark that in $\PCO$ (but not in $\CO$!) is it also possible to define, inductively, an operator that behaves as classical negation on nonempty causal multiteams (\emph{weak contradictory negation}). Details can be found in \cite{BarVir2023b}; we will not use it here.} 
The language $\PCO$ still has the empty team property %\commf{Recheck if we change the clauses for prob. atoms}
but it is not flat.
%\footnote{It also lacks weaker properties considered in the team semantics literature, such as \emph{downward closure} and \emph{union closure}.}
%
The definability of the dual negation in $\CO$ allows us to introduce many useful abbreviations:
\begin{multicols}{2}
\allowdisplaybreaks
\noindent
\begin{align*}
\Pr(\alpha) \leq  \epsilon &\dfn\Pr(\neg\alpha) \geq 1 - \epsilon \\ 
\Pr(\alpha) < \epsilon &\dfn \Pr(\neg\alpha) > 1 - \epsilon\\
\Pr(\alpha) = \epsilon &\dfn \Pr(\alpha) \geq  \epsilon \land\Pr(\alpha) \leq  \epsilon\\
\Pr(\alpha) \neq  \epsilon &\dfn \Pr(\alpha) >  \epsilon \sqcup  \Pr(\alpha) <  \epsilon
\end{align*}
\end{multicols}
%\begin{align*}
%\bot &\dfn X = x\cf X \neq x\\
%\top &\dfn X = x \cf X = x.
%\end{align*}
\noindent We will see in Section \ref{sec:teamsemantics} that the $\supset$ operator
enables us to express some statements involving conditional probabilities. 

\begin{example}
Let $T = (T^-,\F)$ be a causal multiteam over variables $\mathrm{GroundSpeed}$, $\mathrm{DescentAngle}$, $\mathrm{StructuralIntegrity}$, $\mathrm{SafeLanding}$ depicting data related to landing an Airbus A350-900 aircraft.
The first three variables are numerical, while the last is Boolean.
The structural equation $F_\mathrm{SL}(\mathrm{GS},\mathrm{DA}, \mathrm{SI})$ outputs a Boolean value ``true'' when a plane of given structural integrity is expected to make a safe landing at a given speed and angle. %The multiteam component encodes a distribution of the expected structural integrity of planes. 
The formula
\(
``\mathrm{SI} \neq 0 \supset [(\mathrm{GS}=300 \land \mathrm{DA}=4)\cf \Pr(\mathrm{SL} = \mathrm{false})<0.01]"
\)
expresses that the probability of landing failure is less than $1\%$ when setting a landing speed of $300km/h$ and descent angle of 4 degrees, conditional on the plane not being grounded due to structural condition ($\mathrm{SI}= 0 $).

Since we can assume that $\mathrm{SI}$ is exogenous (the assessment of structural integrity is not affected by the speed and angle set during the flight), this statement can be equivalently written as
\(
`` (\mathrm{GS}=300 \land \mathrm{DA}=4)\cf (\mathrm{SI} \neq 0 \supset\Pr(\mathrm{SL} = \mathrm{false})<0.01)".
\)
This would not be legitimate if $\mathrm{SI}$ was causally affected by $\mathrm{GS}$ or $\mathrm{DA}$; %i.e., if there is a non-negligible probability that the act of setting $\mathrm{GS}$ to $300$ and $\mathrm{DA}$ to $4$ may itself determine a failure of structural integrity. T
the operators $\cf$ and $\supset$ do not in general commute with each other. 
% If we can assume that $\mathrm{SI}$ is exogenous, then this statement can be equivalently written as
% \(
% `` (\mathrm{GS}=300 \land \mathrm{DA}=4)\cf (\mathrm{SI} \neq 0 \supset\Pr(\mathrm{SL} = \mathrm{false})<0.01)".
% \)
% This would not be legitimate if $\mathrm{SI}$ is causally affected by $\mathrm{GS}$ or $\mathrm{DA}$, i.e., if there is a non-negligible probability that the act of setting $\mathrm{GS}$ to $300$ and $\mathrm{DA}$ to $4$ may itself determine a failure of structural integrity. The operators $\cf$ and $\supset$ do not in general commute with each other. 

\end{example}

%\begin{itemize}
%\item $T\models \psi\sqcup \chi$ if $T\models \psi$ or $T\models \chi$
%\item $T\models \Pr(\alpha)\geq \epsilon  \iff T^-=\emptyset \text{ or } P_T(\alpha)\geq \epsilon$
%\item $T\models \Pr(\alpha) > \epsilon  \iff T^-=\emptyset \text{ or } P_T(\alpha) > \epsilon$
%\item $T\models \Pr(\alpha)\geq \Pr(\beta) \iff T^-=\emptyset \text{ or } P_T(\alpha)\geq P_T(\beta)$
%\item $T\models \Pr(\alpha) > \Pr(\beta)  \iff T^-=\emptyset \text{ or } P_T(\alpha) >  P_T(\beta)$.
%\end{itemize}

%\commf{I think HERE we should briefly explain the relationship between $\supset$ and conditional probabilities; and mention that, in the companion paper, it is shown how it can be used in the definition of more complex causal-probabilistic statements.}

%\begin{itemize}
%\item $\Pr(\alpha) \leq  \epsilon$ for $\Pr(\alpha^d) \geq 1 - \epsilon$
%\item $\Pr(\alpha) < \epsilon$ for $\Pr(\alpha^d) > 1 - \epsilon$
%\item $\Pr(\alpha) = \epsilon$ for $\Pr(\alpha) \geq  \epsilon \land \Pr(\alpha) \leq  \epsilon$
%\item $\Pr(\alpha) \neq  \epsilon$ for $\Pr(\alpha) >  \epsilon \sqcup  \Pr(\alpha) <  \epsilon$
%\item $\Pr(\alpha)  \epsilon$
%\item $\Pr(\alpha)  \epsilon$
%\item $\bot$ for $X = x\cf X \neq x$
%$X=x \land X\neq x$
%\item $\top$ for $X = x \cf X = x$
%$\Pr(X=x)<1 \sqcup \Pr(X\neq x)<1$
%\end{itemize}
%whose semantics is as expected (and true by default on empty causal multiteams).
%\commf{The baroque definitions of $\bot$ and $\top$ are to ensure that 1) they are the contradictory negation of each other, and 2) they both are $\CO$ formulae, so that they can also occur under the $\Pr$ operator.}

We consider the following syntactic fragments of $\PCO$, which preserve the syntactic restrictions yielded by its two level syntax -- that the antecedents of $\supset$ and the arguments of $\Pr$ are always $\CO$ formulae. $\PP$ is the fragment without $\supset$ and $\cf$.
%the connectives $\land$, $\lor$, and $\clor$ together with literals, evaluation atoms, and comparison atoms.
$\PP^-$ is the fragment of $\PP$ without comparison atoms. $\PC$ and $\PO$ are fragments of $\PCO$ without $\supset$ and $\cf$, respectively.

Finally, we will consider two extensions of $\PCO$. $\PCO^\omega$ is the extension of $\PCO$ with countable disjunctions of the form
\(
\bigsqcup_{i\in I} \psi_i,
\)
where the $\psi_i$ are $\PCO$ formulae.

The language $\PCOs$ is the extension of $\PCO$ where the operator $\lor$ can be arbitrarily used (except in antecedents of $\cf$). More precisely, its grammar is
\[
\varphi::= \eta \mid \varphi \land \varphi \mid  \varphi \sqcup \varphi \mid \varphi \lor \varphi \mid \alpha\supset \varphi \mid \SET X = \SET x\cf \varphi,
\]
where, as before, $\eta$ is an atom and $\alpha$ is a $\CO$ formula. We will also denote as $\Ps$ the fragment of $\PCOs$ without $\supset,\cf$, and as $\Psm$ the fragment of $\Ps$ without comparison atoms.

%Note that $\PCO^\omega$ has $2^\omega$ formulae.  %\jonni{Rename $\PCO^\infty$ to $\PCO^\omega$.}
%\footnote{Note that adding infinitary conjunctions or allowing the free use of the strict tensor would not affect the expressivity of the language.\jonni{Is this relevant?}\commf{Not really. I think I added this when I tried to give an infinitary complete axiomatization for $\PCO^\infty$ (nasty open problem, by the way).}}
%
%We will characterize the expressive power of each of these languages, and prove several strict inclusions among them, as illustrated in Table \ref{table:exp} and Figure \ref{fig:FIRSTCAUSALGRAPH}. %An arrow from language $\La$ to $\La'$ stands for the strict inclusion of the expressive power of $\La$ into that of $\La'$.

%%%%%%%%%%%%%%%%%%%%%%%%%%%%%%%%%%%%%%%%%%%%
\section{Expressive power of fragments of $\PCO$}\label{sec: expressive power}

We start by rephrasing the known characterizations from the literature. A number of results appear in the literature (e.g. in \cite{BarYan2022}) that characterize causal languages in the context of causal \emph{team} semantics. A causal team (of signature $\sigma$) is, essentially, a pair $(T^-,\F)$, where $T^-$ is a team instead of a multiteam (i.e., a set of assignments on $Dom$ instead of $Dom \cup \{Key\}$), satisfying the conditions given in Definition \ref{def: causal multiteam}. Each causal multiteam can be seen as a causal team enriched with a probability distribution. This correspondence is expressed precisely as follows:
\begin{definition}\label{def:support}
  The \textbf{support} of a causal multiteam $T=(T^-,\F)$ is the causal team $\Team(T) = (\Team(T^-),\F)$, where $\Team(T^-):=  \{s_{\upharpoonright Dom} \mid s\in T^-\}$.  
\end{definition}
It is immediate to see that a language without probabilistic features (such as $\CO$) cannot tell apart two causal multiteams that have the same support. From this, it is straightforward but tedious (the details can be found in the Appendix) %\commf{Should we rather add back Appendix A?}
%\ref{Appendix: transferring results})
to show that the characterization of $\CO$ given in \cite[Theorem 4.4]{BarYan2022} in terms of causal teams holds unchanged over causal multiteams:

%The following characterization of $\CO$ was proved in \cite[Theorem 4.4]{BarYan2022} 
%in the context of causal team semantics. Since $\CO$ cannot differentiate causal multiteams with the same causal team support, the analogous result in the multiteam context follows; see Appendix \ref{Appendix: transferring results} for details.

\begin{restatable}[Characterization of $\CO$]{theorem}{COchar}\label{thm:COchar}
Let $\sigma$ be a finite signature, and $\K$ a class of causal multiteams of signature $\sigma$. Then $\K$ is definable by a $\CO_{\sigma}$ formula (resp. a set of $\CO_{\sigma}$ formulae) if and only if $\K$ is flat.
\end{restatable}

 $\PCO$ is a purely probabilistic language; it cannot tell apart multiteams that represent one and the same distribution. Given an assignment $t$ and a causal team $T = (T^-,\F)$, we write $\#(t,T)$ for the number of copies of $t$ in $T^-$ and (provided $T$ is nonempty) $\epsilon_t^T := \frac{\#(t,T)}{|T^-|}$ for the probability of $t$ in $T$.
 %\footnote{That is, the cardinality of the set $\{t\in \A_\sigma \mid t_{\upharpoonright \dom} =s\}$.}
% We can then talk of the probability of $t$ in $T$, $\epsilon_t^T := \frac{\#(t,T)}{|T^-|}$.
Two causal teams $S =   (S^-,\F)$ and $T=(T^-,\G)$ are \textbf{rescalings} of each other ($S\sim T$) if $\F=\G$ and either $S^- = T^- = \emptyset$ or $\epsilon_t^T =\epsilon_t^S$ for each assignment $t$.
A class $\K$ of causal multiteams of signature $\sigma$ is \textbf{closed under rescaling} if, whenever $S\in \K$ and  $S \sim T$, also $T\in \K$.
An ideal language for purely probabilistic reasoning should be characterized just by this condition. It turns out that $\PCO$ is not expressive enough for the task, however its extension with countable global disjunctions $\PCO^\omega$ is (modulo the empty multiteam property).

\begin{theorem}[\cite{BarSan2024}]\label{theorem: expressivity of PCOinf}
A \emph{nonempty} class $\K$ of multiteams of signature $\sigma$ is definable in %by a (set of)
 $\PCO^{\omega}_{\sigma}$ %formula(s)
 \corrf{(either by a formula or by a set of formulae)}
  iff $\K$ has the empty multiteam property and is closed under rescaling.
\end{theorem}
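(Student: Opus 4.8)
The plan is to prove the two directions separately, the left-to-right implication being routine and the expressive-completeness direction carrying the real work. For the direction that definability implies the two closure properties, I would first show that the whole of $\PCO^\omega$ is invariant under rescaling, i.e. $S\sim T$ implies $S\models\varphi \Leftrightarrow T\models\varphi$, by induction on $\varphi$. The atomic evaluation and comparison formulae depend on $S$ only through the quantities $P_S(\alpha)$, which are preserved by rescaling; the connectives $\land$, $\sqcup$ and the countable disjunction are immediate; and for $\supset$ and $\cf$ one checks that conditioning ($S\mapsto S^\alpha$) and intervention ($S\mapsto S_{\SET X=\SET x}$) send rescalings to rescalings, since both operations act identically on $S$ and $T$ at the level of the induced distribution and of $\F$. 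The empty multiteam property then follows because every atom, and hence by the same induction every $\PCO^\omega$ formula, is satisfied by all empty causal multiteams. Consequently any $\PCO^\omega$-definable class is rescaling-closed and contains every empty multiteam.

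For the converse, the key observation is that over a fixed finite signature $\sigma$ the set $\B_\sigma$ is finite, the counting probabilities $\epsilon_t^T$ are rational, and there are only finitely many possible function components $\F$; hence there are only countably many rescaling classes of nonempty causal multiteams. So it suffices to produce, for each nonempty $T$, a single $\PCO$ formula $\chi_T$ such that for every nonempty $S$ we have $S\models\chi_T$ iff $S\sim T$; then $\bigsqcup_i \chi_{T_i}$, ranging over a set of representatives $T_i$ of the countably many rescaling classes contained in $\K$, defines $\K$. Indeed, a nonempty $S$ satisfies the disjunction iff $S\sim T_i$ for some $i$, which by closure under rescaling is equivalent to $S\in\K$, while every empty multiteam satisfies the disjunction and lies in $\K$ by the empty multiteam property. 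The degenerate case in which $\K$ consists only of empty multiteams is handled by the single formula $\Pr(\bot)\geq 1$, which is satisfied exactly by the empty multiteams.

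The heart of the argument is the construction of $\chi_T$, which must pin down both the distribution and the causal structure of $T$. I would set $\chi_T \dfn \rho_T \land \bigwedge_{V\in End(T)} \theta_V^{\mathrm{endo}} \land \bigwedge_{V\in Exo(T)} \theta_V^{\mathrm{exo}}$. For the distribution I take $\rho_T \dfn \bigwedge_{t\in\B_\sigma} \Pr(\delta_t)=\epsilon_t^T$, where $\delta_t \dfn \bigwedge_{X\in\dom} X=t(X)$ describes the assignment $t$; on nonempty teams this fixes every atomic probability and hence the full distribution. To capture the structural function of an endogenous $V$ I intervene on all remaining variables, setting $\theta_V^{\mathrm{endo}} \dfn \bigwedge_{\SET w\in\ran(\SET W_V)} (\SET W_V=\SET w \cf V=\F_V(\SET w))$: after fixing $\SET W_V=\SET w$, an endogenous $V$ is recomputed to the single value $\F_V(\SET w)$, so this records the whole function $\F_V$ as a total map. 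To force $V$ to be exogenous I demand that its marginal be unaffected by such interventions, via $\theta_V^{\mathrm{exo}} \dfn \bigwedge_{\SET w}\bigwedge_{v} (\SET W_V=\SET w \cf \Pr(V=v)=\mu^T_v)$ with $\mu^T_v \dfn P_T(V=v)$.

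The main obstacle is verifying that these clauses determine $\F$ and the endogenous set \emph{exactly}, and here the non-constancy requirement on structural functions is essential. A nonempty $S$ with matching distribution satisfying $\theta_V^{\mathrm{endo}}$ cannot have $V$ exogenous, since that would require all copies of $V$ to equal $\F_V(\SET w)$ simultaneously for every $\SET w$, which is impossible once $\F_V$ takes two distinct values; so $V$ must be endogenous in $S$, and reading off all interventions forces $\F^S_V=\F_V$. Symmetrically, $\theta_V^{\mathrm{exo}}$ excludes $V$ being endogenous in $S$, as a non-constant $\F^S_V$ would turn $V$ into different point masses under different interventions, none of which can uniformly equal the fixed marginal $\mu^T$. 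Together these force $End(S)=End(T)$ and $\F^S=\F^T$, which with $\rho_T$ yields $S\sim T$; the reverse implication uses the rescaling-invariance established in the first part. The delicate points to get right are that intervening on the whole of $\SET W_V$ genuinely collapses an endogenous $V$ to a point mass while leaving an exogenous $V$'s marginal intact, and that this behaviour is probed at \emph{all} input tuples $\SET w$, including those outside the support of $T$, so that the total functions, and thus the parent sets, are recovered.
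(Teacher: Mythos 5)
Your proof is correct and takes essentially the same route as the paper's: you characterize each causal multiteam up to rescaling by conjoining a distribution formula (the paper's $\Theta_{T^-}$, your $\rho_T$) with a formula pinning down the function component (the paper's $\Phi^\F$, your $\theta$-clauses), and then take a countable disjunction over representatives, with the easy direction handled by rescaling-invariance and the empty multiteam property. The only deviation is your exogeneity gadget $\theta_V^{\mathrm{exo}}$, which tests invariance of $V$'s marginal under interventions on $\SET W_V$, where the paper instead uses the flat clause $\xi_\sigma(V):\ V=v \supset (\SET W_V=\SET w \cf V=v)$; both exclusions rest on exactly the same non-constancy requirement on structural functions, so this is a cosmetic variant (yours is distribution-dependent but avoids $\supset$ altogether, placing the disjuncts in $\PC$).
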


\noindent The key to the proof is the fact that
for any causal multiteam $(T^-, \F)$ one can write $\PCO$-formulae $\Theta_{T^-}$ and $\Phi^\F$ that characterize the properties of having team component $T^-$ (up to rescaling) and function component $\F$, respectively.
A set $\K$ of causal multiteams is then defined by the formula $\bigsqcup_{(T^-,\F)\in \K }(\Theta_{T^-} \land \Phi^\F)$.
Since $\K$ can be countably infinite,  the proof crucially depends on the use of infinitary disjunctions and gives us no hints on how to obtain a finitary logic with such expressivity. Actually, a counting argument given in \cite{BarSan2024} shows that such a language must be uncountable, and thus that $\PCO < \PCO^\omega$. Our characterization of the expressivity of $\PCO$ will provide an alternative proof for the strict inclusion.  
%We are however able to characterize the expressivity of $\PCO$ and its fragments that we defined in Section \ref{sec: preliminaries}.

%For $\Phi^\F$, we may directly use the formulation 
% of \cite{BarYan2022} written in $\CO_{\sigma}$, since their formula is a $\PCO$-formula as well. The formula
% $\Theta_{T^-}$ can be defined as $\bigwedge_{s\in \B_\sigma} \Pr\Big(\bigwedge_{V\in \dom} V = s(V)\Big)= \epsilon_s^{T^-}$. The fact that every $\PCO^{\omega}_{\sigma}$-definable class is closed under rescaling can be proved by induction on the structure of formulae. For details see Appendix \ref{Appendix: infinitary PCO}.

%\subsection{Types of linear equations and closure properties of team-classes}

In order to characterize the expressivity of $\PCO$ and its fragments, we need to introduce some classes of linear inequalities and closure properties of classes of causal multiteams. For the latter, we have already seen closure under rescaling and the empty multiteam property.
A class $\K$ of causal multiteams of signature $\sigma$ is \textbf{closed under change of laws} if, whenever $(T^-,\F)\in \K$ and $\G$ is a system of functions of signature $\sigma$ such that $(T^-,\G)$ satisfies the compatibility constraint (point 3. of definition \ref{def: causal multiteam}), then  $(T^-,\G)\in\K$.  

%$\PP^-$, $\PP$, and $\PO$ cannot tell apart different causal structures, since they do not feature $\cf$. Hence we require the following invariance:

It is self-evident that the logics without $\cf$ are closed under change of laws, while the logics with  %access to $X=x 
$\cf$ %Y=x$
are not. Thus, the following hold.
\begin{lemma}\label{lemma:trivial laws}
$\PP^-$, $\PP$, $\PO$, $\Psm$ and $\Ps$ are closed under change of laws. $\CO$, $\PC$, $\PCO$ and $\PCOs$ are not closed under change of laws.
\end{lemma}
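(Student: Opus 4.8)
The plan is to prove both halves by pinning down exactly where the function component $\F$ enters the semantics. For the positive half I would show, by induction on formula structure, that for every $\varphi$ in $\PP^-$, $\PP$, or $\PO$ the truth of $(T^-,\F)\models\varphi$ is independent of $\F$ (among all law systems compatible with $T^-$). This is precisely closure under change of laws, since it gives $(T^-,\F)\models\varphi \Leftrightarrow (T^-,\G)\models\varphi$ for any compatible $\G$. The decisive structural fact is that none of these three languages contains $\cf$, and $\cf$ is the \emph{only} operator whose clause recomputes variable values through $\F$: the observation operation $T^\alpha$ keeps $\F$ unchanged and merely filters $T^-$, and the clause for $\supset$ consults $\F$ only through the evaluation of its $\CO$-subformula.

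Concretely, I would first establish the base fact that every $\cf$-free $\CO$ formula $\alpha$ is $\F$-independent: the literal clauses for $Y=y$ and $Y\neq y$ do not mention $\F$; the clauses for $\land$, $\lor$ and $\supset$ never invoke $\F$ except recursively (for $\lor$ the candidate splits into causal sub-multiteams range over the same subsets of $T^-$ regardless of the shared law system, and for $\supset$ the observed sub-multiteam $(T^\alpha)^-$ is determined by single-assignment truth of the $\F$-independent antecedent). It follows that $(T^\alpha)^-$, and hence $P_T(\alpha)$, depend only on $T^-$, so every evaluation and comparison atom of the form $\Pr(\alpha)\vartriangleright\epsilon$ or $\Pr(\alpha)\vartriangleright\Pr(\beta)$ is $\F$-independent. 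The inductive step then dispatches $\land$ and $\sqcup$ immediately, and handles $\alpha\supset\varphi$ by combining $\F$-independence of the $\cf$-free antecedent $\alpha$ (so that $(T^\alpha)^-$ is common to all law systems) with the induction hypothesis on $\varphi$. This yields $\F$-independence throughout $\PP^-$, $\PP$ and $\PO$.

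For the negative half it suffices, by the convention that a logic has a property exactly when each of its formulae does, to exhibit for each of $\CO$, $\PC$, $\PCO$ a single formula whose model class fails closure; and one formula serves all three. I take $\varphi \dfn (X=1 \cf Y=1)$, which is a $\CO$ formula, and equally a $\PCO$ formula using no $\supset$, hence a $\PC$ formula. With the signature having exogenous $X$, $\ran(X)=\{0,1\}$, and endogenous $Y$, $\ran(Y)=\{0,1,2\}$, consider the singleton multiteam $T^-=\{s\}$ with $s(X)=0,\ s(Y)=0$, together with the two compatible, non-constant laws $\F_Y(0)=0,\ \F_Y(1)=1$ and $\G_Y(0)=0,\ \G_Y(1)=2$ (both have $X$ as genuine parent, both agree with $s$ on the observed point). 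The intervention $do(X=1)$ recomputes $Y$ as $\F_Y(1)=1$ under $\F$ but as $\G_Y(1)=2$ under $\G$, so $(T^-,\F)\models\varphi$ while $(T^-,\G)\not\models\varphi$; since $\F$ and $\G$ are both compatible with $T^-$, this witnesses failure of closure under change of laws for $\K_\varphi$, and hence for $\CO$, $\PC$ and $\PCO$.

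The main obstacle is the positive direction, and within it the verification that the $\supset$ clause — and the observation operation hidden inside it and inside the probabilistic atoms — truly does not read off $\F$. The point to get right is that $T^\alpha$ retains the same $\F$ and only filters $T^-$ by the single-assignment truth of $\alpha$; once one knows that $\cf$-free $\alpha$ is evaluated without consulting $\F$, the filtered multiteam, and therefore every downstream probability and sub-evaluation, coincides across all compatible law systems. The remaining steps are routine structural bookkeeping, and the negative direction is just the explicit three-point computation above.
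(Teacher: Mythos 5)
Your proof is correct and matches the paper's reasoning: the paper dismisses this lemma as self-evident on precisely the grounds you formalize, namely that $\cf$ is the only operator whose semantic clause consults $\F$ (so the $\cf$-free logics $\PP^-$, $\PP$, $\PO$ are law-independent by a routine induction), while any formula of the form $X=x \cf Y=y$ separates two law systems compatible with the same multiteam. Your explicit counterexample — with $\ran(Y)$ of size three so that two distinct \emph{non-constant} laws can agree on the observed data, a constraint the paper's definition of causal multiteams genuinely imposes — is a valid instantiation of the paper's implicit witness $X=x\cf Y=x$, and your observation that one formula in $\CO \cap \PC \subseteq \PCO$ handles all three negative claims is exactly how the convention ``a logic has a property iff every formula does'' is meant to be used.
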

\begin{corollary}\label{cor: easy comparisons}
1) $\PP < \PC$, $\PO < \PCO$ and $\Ps < \PCOs$

2) $\PC\not\leq \PO$ and $\PC\not\leq \Ps$.%\commf{$\PO\not\leq \PC$ follows... it should use the fact that $\PC$ only characterizes signed monic probability sets, while $\PO$ allows for signed binary (together with the definability of unction components). Maybe we did not prove it explicitly anywhere. It should build easily, I think, on Lemma \ref{lemma: not all linear are signed monic}.}\commf{Added now as Corollary \ref{cor: PO not included in PC}.}
\end{corollary}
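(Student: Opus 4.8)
The plan is to derive all three statements purely from the closure properties in Lemma \ref{lemma:trivial laws} together with the trivial syntactic inclusions among the fragments, so no genuinely new construction is needed. First I would record the syntactic inclusions: reading off the two-level grammar of $\PCO$, every $\PP$-formula (which uses only $\land$, $\sqcup$ and atoms, forbidding both $\supset$ and $\cf$) is already a legal $\PC$-formula (which additionally permits $\cf$), so $\PP \leq \PC$; likewise every $\PO$-formula is a $\PCO$-formula, giving $\PO \leq \PCO$. The conceptual engine for the strict parts is the observation that equivalence preserves definable classes: if $\varphi \equiv \psi$ then $\K^\sigma_\varphi = \K^\sigma_\psi$, so whenever a logic $\La$ is closed under change of laws, any formula equivalent to some $\La$-formula must itself define a class that is closed under change of laws.

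With this in hand each claim is immediate. For $\PP < \PC$: Lemma \ref{lemma:trivial laws} guarantees a $\PC$-formula $\varphi$ (e.g.\ one built from $X = x \cf Y = y$) whose class $\K^\sigma_\varphi$ is \emph{not} closed under change of laws; since $\PP$ \emph{is} closed under change of laws, $\varphi$ cannot be equivalent to any $\PP$-formula, so $\PC \not\leq \PP$, which with $\PP \leq \PC$ yields $\PP < \PC$. For $\PO < \PCO$: the same Lemma gives a $\PCO$-formula witnessing that $\PCO$ is not closed under change of laws, while $\PO$ is, whence $\PCO \not\leq \PO$, and combined with $\PO \leq \PCO$ this gives $\PO < \PCO$. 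For the incomparability direction $\PC \not\leq \PO$: once more $\PO$ is closed under change of laws but $\PC$ is not, so the law-sensitive $\PC$-formula above has no $\PO$-equivalent.

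I expect essentially no real obstacle here, since the entire argument is bookkeeping layered on top of Lemma \ref{lemma:trivial laws}. The only points deserving a line of care are (i) checking the syntactic inclusions $\PP \leq \PC$ and $\PO \leq \PCO$ directly against the grammar of $\PCO$, and (ii) being explicit that a \emph{single} $\PC$-formula simultaneously witnesses the non-closure used for $\PP < \PC$ and for $\PC \not\leq \PO$, so that the two conclusions rest on the same counterexample. Note that this corollary establishes only one half of the incomparability of $\PC$ and $\PO$; the converse $\PO \not\leq \PC$ is deferred to Corollary \ref{cor: PO not included in PC} and requires a separate argument, as change of laws cannot distinguish these directions.
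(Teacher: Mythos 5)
Your proof is correct and is essentially the paper's own argument: the corollary is stated immediately after Lemma \ref{lemma:trivial laws} precisely because the trivial syntactic inclusions $\PP \leq \PC$ and $\PO \leq \PCO$, combined with the fact that $\PP$ and $\PO$ are closed under change of laws while $\PC$ and $\PCO$ contain law-sensitive formulae (e.g.\ $X=x \cf Y=y$), immediately yield all three claims. Your two points of care — that a single $\PC$-witness serves both $\PC \not\leq \PP$ and $\PC \not\leq \PO$, and that the converse direction $\PO \not\leq \PC$ is deferred to Corollary \ref{cor: PO not included in PC} — match the paper's structure exactly.
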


\subsection{Monic and signed monic probability sets: $\PP^-$, $\PP$, and $\PC$}\label{subs: Expressivity probabilistic}

%We prove here the characterizations of the fragments $\PP^-$ and $\PP$, and show that $\PP^-<\PP$. 
 
We characterize the expressivity of fragments of $\PCO$ by investigating the families of subsets of $\Q^n$ that are definable in the logics.
 For a given signature $\sigma$, we fix an enumeration $s_1,\dots,s_n$ of the assignments of $\B_\sigma$; every \emph{nonempty} causal multiteam $T$ can then be associated with a \textbf{probability vector} $\pvec_{T}=(\epsilon_{s_1}^T, \dots , \epsilon_{s_n}^T)\in \mathbb Q^n$.
 %listing the probabilities of each assignment in $T$.
 Similarly, a class $\K$ of causal multiteams of signature $\sigma$ has an associated \textbf{probability set}  $\Pvec_\K = \{\pvec_T \mid T\in\K, T \text{ nonempty }\}$.
 %A moment of thought shows that, since we are talking of vectors of probabilities, 
 Note that $\pvec_{T}$ and $\Pvec_\K$ are, respectively,  a point and a subset of the \textbf{standard $n-1$-simplex} $\Delta^{n-1}$ (i.e. the set of points of $[0,1]^n\cap \mathbb Q^n$ that satisfy the equation $\epsilon_{s_1} + \dots + \epsilon_{s_n} =1$). 
 To each formula $\varphi$, we can associate a probability set $\Pvec_\varphi:= \Pvec_{\K_\varphi}$.
 Note that if $S,T$ are causal multiteams of the same signature and same function component, such that $\pvec_S = \pvec_T$, then  $S$ is a rescaling of $T$.
Similarly, a class $\K$ of causal multiteams of signature $\sigma$ that is closed under change of laws and rescaling is the largest class of causal multiteams of signature $\sigma$ having probability set $\Pvec_\K$.
 %
 %We characterize precisely which probability sets are definable in terms of $\PP^-$ and $\PP$. 
 %Such geometrical spaces are defined  by specific kinds of linear inequalities.
 
 A \textbf{linear inequality} is an expression of the form
 \(
a_1\epsilon_1 +\dots+ a_n\epsilon_n \vartriangleright b,
 \) 
where $\vartriangleright \hspace{3pt} \in \{\geq,\leq,>,<\}$, $a_1,\dots,a_n,b\in\mathbb Q$, and $\epsilon_1,\dots \epsilon_n$ are variables (in the usual algebraic sense). A linear inequality will be called \textbf{homogeneous} if its constant coefficient $b$ is $0$. A linear inequality is \textbf{signed monic} if each of the $a_i$ is in $\{0,1,-1\}$. It is \textbf{monic} if each of the $a_i$ is in $\{0,1\}$. \corrf{We extend this terminology to probability sets as follows.}
\begin{definition}
    A probability set $\Pvec$ is \textbf{monic} if it is a finite union of subsets of $\Delta^{n-1}$ defined by finite systems of monic inequalities.

    A probability set is \textbf{signed monic} if it is defined, over $\Delta^{n-1}$, as a finite union of probability sets defined by finite systems of inequalities of the following two types:
       \begin{enumerate}
       \item monic inequalities 
       \item homogeneous signed monic inequalities. % with constant coefficient $0$ (i.e., of the form $\sum_{i\in I}\epsilon_i - \sum_{j\in J}\epsilon_j \vartriangleleft 0$).
       \end{enumerate}
\end{definition}
%A probability set $\Pvec$ is \textbf{(signed) monic} if it is a finite union of subsets of $\Delta^{n-1}$ defined by finite systems of (signed) monic inequalities.
%We apply the same terminology to classes of causal multiteams of a fixed signature, as follows: 
A class $\K$ of causal multiteams of a fixed signature is \textbf{(signed) monic} if $\Pvec_\K$ is a (signed) monic probability set. A \textbf{polytope} is a subset of $\Delta^{n-1}$ that is defined by a single system of linear inequalities which only use the relations $\leq$ and $\geq$.

%\begin{theorem}\label{thm: characterization of P}
%A class $\K$ of multiteams of signature $\sigma$ is definable
%%by a formula of 
%in $\PP$ if and only if
%$\K$ is signed monic and closed under change of laws and rescaling.
%\begin{enumerate}
%\item $\K$ is closed under change of laws
%\item $\K$ is closed under rescaling, and
%\item $\K$ is signed monic.
%\end{enumerate} 
%\end{theorem}

We will show that being monic and closed under change of laws and rescaling characterizes 
expressibility in $\PP^-$, whereas being signed monic and closed under change of laws and rescaling characterizes 
expressibility in $\PP$. %The full proofs of the following theorem and the subsequent lemma can be found in the extended version of the paper (\cite{BarVir2023}). %\ref{A:Characterization of PP- and PP}.

\begin{lemma}\label{lemma: preserve polytopes}
%\begin{enumerate}
%\item
1) If $\Pvec$ is a monic (resp. signed monic) %signed (resp. monic)
polytope, then its complement in $\Delta^{n-1}$ %$\mathbb Q^n\cap [0,1]$
 is monic (resp. signed monic).
%\item
2) If $\Pvec,\Qvec$ are monic (resp. signed monic), then also $\Pvec \cup \Qvec$ is.
%\item
3) If $\Pvec,\Qvec$ are monic (resp. signed monic, polytopes), %signed (resp. monic, polytopes),
then also $\Pvec \cap \Qvec$ is.
%\item If $\Pvec,\Qvec$ are finite unions of (signed, monic) semipolytopes, then also $\Pvec \cap \Qvec$ is.
%\end{enumerate}
\end{lemma}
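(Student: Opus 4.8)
The plan is to treat the three parts in increasing order of difficulty, using only the definitions. Throughout I record a (signed) monic set as a finite union $\bigcup_i S_i$, where each $S_i\subseteq\Delta^{n-1}$ is the solution set of a finite system of (signed) monic inequalities, and a (signed) monic \emph{polytope} is the special case of a single such system all of whose inequalities use only $\leq$ and $\geq$. With this bookkeeping, part (2) is immediate: a union of two finite unions of systems is again a finite union of systems, so $\Pvec\cup\Qvec$ is (signed) monic with nothing further to check.

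For part (3), I would first reduce to single systems. Writing $\Pvec=\bigcup_i\Pvec_i$ and $\Qvec=\bigcup_j\Qvec_j$, distributivity gives $\Pvec\cap\Qvec=\bigcup_{i,j}(\Pvec_i\cap\Qvec_j)$, a finite union. Each $\Pvec_i\cap\Qvec_j$ is the solution set of the concatenation of the two defining systems, which is again a finite system of (signed) monic inequalities; in the polytope case both systems use only $\leq,\geq$, so their concatenation does too (and a polytope is a single system, so the distributivity step is vacuous there). Hence $\Pvec\cap\Qvec$ is (signed) monic, and a polytope when $\Pvec,\Qvec$ both are.

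The only part requiring an idea is (1). Here $\Pvec$ is a single system $\bigcap_{i=1}^{k}\{x\in\Delta^{n-1}: \ell_i(x)\mathbin{R_i} b_i\}$ with each $R_i\in\{\leq,\geq\}$ and each $\ell_i$ (signed) monic. By De~Morgan the complement in $\Delta^{n-1}$ equals $\bigcup_{i=1}^{k}\{x\in\Delta^{n-1}: \neg(\ell_i(x)\mathbin{R_i} b_i)\}$. The key observation is that negating a single inequality flips its direction and strictness but leaves the coefficient vector untouched: $\ell\leq b$ becomes $\ell>b$, and $\ell\geq b$ becomes $\ell<b$. Since the admissible coefficient set—$\{0,1\}$ for monic, $\{0,1,-1\}$ for signed monic—is thereby unchanged, each negated inequality is again (signed) monic, and the definition of a (signed) monic set already admits the strict relations $>,<$. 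Thus the complement is a finite union of single-inequality (signed) monic systems, hence (signed) monic.

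I expect the only real content—rather than an obstacle—to be exactly this closure of the monic class under negation of an individual constraint. It works precisely because negation changes only the relation symbol, never the coefficients, so one never has to multiply a constraint by $-1$ (which, for the monic case, would escape $\{0,1\}$). This is also why the polytope hypothesis enters (1) solely to guarantee that $\Pvec$ is a single system, so that De~Morgan yields a union rather than something that would in turn require the intersection step of (3).
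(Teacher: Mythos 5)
Your proposal is correct and follows essentially the same route as the paper's proof: part (1) by negating each inequality of the single defining system (flipping $\leq$ to $>$ and $\geq$ to $<$, leaving coefficients untouched) and taking the resulting union, part (2) by the trivial union-of-unions observation, and part (3) by distributing the intersection over the unions and concatenating the defining systems. Your closing remark correctly identifies the same point the paper's argument relies on, namely that negation never alters the coefficient set, so monicity and signed monicity are preserved without ever multiplying a constraint by $-1$.
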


\begin{proof} %Proof of Lemma \ref{lemma: preserve polytopes}.}
1) %$\Delta^{n-1}$ %$\mathbb Q^n\cap [0,1]$
 %is defined by the empty set of inequalities, and is thus a monic polytope. Thus $\Delta^{n-1} \setminus \Pvec$ is a (signed, monic) semipolytope.
Since $\Pvec$ is a polytope, a point is in $\Pvec$ iff it is a solution to a finite system of monic (resp. \corrf{monic and homogeneous} signed monic) inequalities $e_1,\dots,e_m$, where each $e_i$ is either of the form $a^i_1 \epsilon_1 + \dots + a^i_{k(i)} \epsilon_{k(i)} \geq b^i$ or $a^i_1 \epsilon_1 + \dots + a^i_{k(i)} \epsilon_{k(i)} \leq b^i$.
We denote by $\overline e^i$ the inequality obtained from $e^i$ by replacing $\geq$ with $<$ and $\leq$ with $>$. Each $\overline e^i$ defines a monic (resp. \corrf{monic or homogeneous} signed monic) set $\Qvec_i$. Now, a point is \emph{not} in $\Pvec$ iff it satisfies $\overline e^i$ for some $i$, iff it is an element of $\Qvec_1\cup\dots\cup\Qvec_m$.  The latter is monic (resp. \corrf{monic or} signed monic) by construction. 

2) %Observe that the union of two finite unions of monic (resp. signed monic) sets is still a finite union of monic (resp. signed monic) sets. 
This follows straightforwardly from the definitions.

3) First consider two sets $\Pvec$, $\Qvec$ defined by single finite systems $A$, resp. $B$ of inequalities.  Then their intersection is defined by the single system $A\cup B$. In particular, the intersection of two polytopes is a polytope. 

%Secondly, intersection preserves the property of being a (signed, monic) semipolytope. Indeed take two such sets; say, $\Pvec\setminus(\Pvec_1 \cup\dots\cup\Pvec_m)$ and $\Qvec\setminus(\Qvec_1\cup\dots\cup\Qvec_n)$, where $\Pvec,\Qvec,\Pvec_i,\Qvec_j$ are all (signed, monic) polytopes. Then their intersection is $(\Pvec\cap\Qvec)\setminus(\Pvec_1 \cup\dots \cup\Pvec_m \cup \Qvec_1\cup\dots\cup\Qvec_n)$. Since, by the previous point, $\Pvec\cap\Qvec$ is a (signed, monic) polytope, the whole set is a (monic) semipolytope.

Then, consider two finite unions $\Pvec_1 \cup\dots\cup\Pvec_m$ and $\Qvec_1\cup\dots\cup\Qvec_n$ of probability sets, and assume wlog that each of  the $\Pvec_i$ and $\Qvec_i$ is defined by a single system of monic (resp. \corrf{monic or} signed monic) inequalities.  
 Then their intersection is $\bigcup_{\substack{ i=1\dots m \\ j = 1\dots n }} (\Pvec_i \cap \Qvec_j)$. By the previous observation and 2., this is again a monic (resp. signed monic) set. \qed
 \end{proof}

\begin{lemma}\label{lemma: P minus is monic}
If $\varphi\in\mathcal P^-_{\sigma}$, then $\K_\varphi^\sigma$ is monic.
\end{lemma}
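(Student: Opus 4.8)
The plan is to prove the statement by induction on the structure of $\varphi\in\PP^-$, showing in each case that the associated probability set $\Pvec_\varphi$ is monic. The backbone of the argument is that the Boolean operations on the classes $\K_\varphi$ translate into the corresponding set operations on probability sets. Every $\varphi\in\PP^-$ is closed under change of laws (Lemma~\ref{lemma:trivial laws}) and, being a $\PCO$-formula, is also invariant under rescaling (as $\PCO$ is purely probabilistic); hence, as noted above the lemma, $\K_\varphi$ is the \emph{largest} class of causal multiteams with probability set $\Pvec_\varphi$. In particular, for every nonempty $T$ we have $T\in\K_\varphi$ iff $\pvec_T\in\Pvec_\varphi$, which is the bridge I will use throughout.

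For the base cases, first I would observe that $\PP^-$ contains no occurrence of $\cf$, so every $\CO$-formula $\alpha$ appearing inside a probabilistic atom is $\cf$-free. For such an $\alpha$ the index set $A_\alpha\dfn\{\,i\mid(\{s_i\},\F)\models\alpha\,\}$ is independent both of the choice of $\F$ and of the surrounding team (by flatness of $\CO$, Theorem~\ref{thm: CO flatness}); consequently, for every nonempty $T$,
\[
P_T(\alpha)=\frac{|(T^\alpha)^-|}{|T^-|}=\sum_{i\in A_\alpha}\epsilon^T_{s_i}.
\]
Thus the evaluation atom $\Pr(\alpha)\geq\epsilon$ defines, over the nonempty multiteams, the subset $\{\,p\in\Delta^{n-1}\mid\sum_{i\in A_\alpha}p_i\geq\epsilon\,\}$, which is cut out by a single monic inequality (all coefficients lie in $\{0,1\}$); the case of $\Pr(\alpha)>\epsilon$ is identical. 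A literal $Y=y$ is the special case $\Pr(Y=y)\geq 1$, and $Y\neq y$ is symmetric. Hence $\Pvec_\varphi$ is monic for every atomic $\varphi$.

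For the inductive step, using the bridge above I get $\Pvec_{\psi\land\chi}=\Pvec_\psi\cap\Pvec_\chi$ and $\Pvec_{\psi\sqcup\chi}=\Pvec_\psi\cup\Pvec_\chi$. By the induction hypothesis $\Pvec_\psi$ and $\Pvec_\chi$ are monic, and by Lemma~\ref{lemma: preserve polytopes}(2,3) monic probability sets are closed under finite unions and intersections; therefore both $\Pvec_{\psi\land\chi}$ and $\Pvec_{\psi\sqcup\chi}$ are monic, completing the induction.

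The step that requires care is the equality $\Pvec_{\psi\land\chi}=\Pvec_\psi\cap\Pvec_\chi$. The inclusion from left to right is immediate, but the converse is not automatic for arbitrary classes: a probability vector realized by some multiteam in $\K_\psi$ and by a possibly different multiteam in $\K_\chi$ must be realized by a \emph{single} multiteam lying in both. This is exactly what the ``largest class'' fact encapsulates, and it is where closure under change of laws and rescaling is genuinely used — given witnesses $T_1\in\K_\psi$ and $T_2\in\K_\chi$ with $\pvec_{T_1}=\pvec_{T_2}$, one first replaces the laws of $T_2$ by those of $T_1$ (legitimate, since $T_1^-$ and $T_2^-$ share the same support, so the laws remain compatible) and then invokes rescaling to transfer membership into $\K_\psi$. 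Finally I would remark that the restriction to evaluation atoms is precisely what secures monicity rather than mere signed monicity: a comparison atom $\Pr(\alpha)\geq\Pr(\beta)$ would instead produce $\sum_{i\in A_\alpha}p_i-\sum_{i\in A_\beta}p_i\geq 0$, whose coefficients lie in $\{0,1,-1\}$, which is the reason $\PP$ attains only signed monicity.
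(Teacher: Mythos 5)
Your proof is correct and follows essentially the same route as the paper's: induction on $\varphi$, restricting to nonempty multiteams, with atoms $\Pr(\alpha)\vartriangleright\epsilon$ (and literals as the special case $\Pr(Y=y)\geq 1$) yielding single monic inequalities, and $\land,\sqcup$ handled via $\Pvec_{\psi\land\chi}=\Pvec_\psi\cap\Pvec_\chi$, $\Pvec_{\psi\sqcup\chi}=\Pvec_\psi\cup\Pvec_\chi$ together with Lemma~\ref{lemma: preserve polytopes}. In fact you go slightly beyond the paper, which asserts these two set identities without comment: your justification of the nontrivial inclusion $\Pvec_\psi\cap\Pvec_\chi\subseteq\Pvec_{\psi\land\chi}$ — transplanting the laws of one witness onto the other (compatibility holds since equal probability vectors force equal supports) and then invoking closure under rescaling — is a valid and welcome filling of that gap.
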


\begin{proof}
%\emph{Proof of Lemma \ref{lemma: P minus is monic}.}
We proceed by induction on $\varphi$ and restrict attention to nonempty causal multiteams $T$ of signature $\sigma$. %We stipulate that $\dom=\{X_1,\dots,X_m\}$.

The cases for $\land$ and $\sqcup$ follow directly from Lemma \ref{lemma: preserve polytopes} and the induction hypotheses, since $\Pvec_{\psi\land\chi}= \Pvec_\psi \cap \Pvec_\chi$ and $\Pvec_{\psi\sqcup\chi}= \Pvec_\psi \cup \Pvec_\chi$.
The case for $\varphi = \Pr(\alpha)\geq \epsilon$ follows from the following chain of equivalences: $T\in \K_\varphi^\sigma$ iff $\frac{|(T^\alpha)^-|}{|T^-|} \geq \epsilon$ iff $\sum_{s\in \Team((T^\alpha)^-)} \epsilon_s^T \geq \epsilon$, since the last inequality defines a monic polytope.

The cases for $\varphi: X=x$ (resp. $\varphi: X \neq x$) can be seen as a special case of the previous one, since these formulae are equivalent to $\Pr(X=x)\geq 1$ (resp. $\Pr(X \neq x)\geq 1$) and thus they have the same monic probability sets.

In the case for $\varphi = \Pr(\alpha) > \epsilon$, analogous calculations as above show that the probability set of $\varphi$ is the set of solutions of the monic inequality $\sum_{s\in \Team((T^\alpha)^-)} \epsilon_s^T > \epsilon$.
%\end{proof}
\qed    
\end{proof}

\begin{lemma}\label{lemma: P is signed monic}
If $\varphi\in\mathcal P_{\sigma}$, then $\K_\varphi^\sigma$ is signed monic.
\end{lemma}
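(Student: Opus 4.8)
The plan is to mirror the proof of Lemma~\ref{lemma: P minus is monic} almost verbatim, proceeding by induction on the structure of $\varphi\in\mathcal P_\sigma$ and restricting attention to nonempty causal multiteams, but now replacing ``monic'' with ``signed monic'' throughout. The key observation is that $\mathcal P$ is obtained from $\mathcal P^-$ by adding the comparison atoms $\Pr(\alpha)\geq\Pr(\beta)$ and $\Pr(\alpha)>\Pr(\beta)$, so the only genuinely new base cases are these two atoms; all the other cases (the evaluation atoms $\Pr(\alpha)\vartriangleright\epsilon$, the literals $X=x$ and $X\neq x$) are handled exactly as before, and each produces a \emph{monic} set, which is \emph{a fortiori} signed monic. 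The inductive cases for $\land$ and $\sqcup$ again follow immediately from Lemma~\ref{lemma: preserve polytopes} (parts 2 and 3, in the signed monic version) together with the equalities $\Pvec_{\psi\land\chi}=\Pvec_\psi\cap\Pvec_\chi$ and $\Pvec_{\psi\sqcup\chi}=\Pvec_\psi\cup\Pvec_\chi$.

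The heart of the argument is the treatment of a comparison atom, say $\varphi=\Pr(\alpha)\geq\Pr(\beta)$. I would compute, for a nonempty $T$, the chain of equivalences
\[
T\in\K_\varphi^\sigma \iff P_T(\alpha)\geq P_T(\beta) \iff \frac{|(T^\alpha)^-|}{|T^-|}\geq\frac{|(T^\beta)^-|}{|T^-|} \iff \sum_{s\in\Team((T^\alpha)^-)}\epsilon_s^T \;\geq\; \sum_{s\in\Team((T^\beta)^-)}\epsilon_s^T.
\]
Rewriting the last inequality by moving everything to one side yields
\[
\sum_{s\in\Team((T^\alpha)^-)}\epsilon_s^T \;-\; \sum_{s\in\Team((T^\beta)^-)}\epsilon_s^T \;\geq\; 0,
\]
which is exactly an inequality whose coefficient of each variable $\epsilon_{s_i}$ lies in $\{0,1,-1\}$: it is $+1$ for assignments satisfying $\alpha$ but not $\beta$, $-1$ for those satisfying $\beta$ but not $\alpha$, and $0$ otherwise (including the assignments satisfying both, whose contributions cancel). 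Hence this defines a signed monic polytope, and the strict variant $\Pr(\alpha)>\Pr(\beta)$ is handled identically with $>$ in place of $\geq$.

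The one technical point I would be careful about is that the coefficients are well defined only after the cancellation at assignments satisfying both $\alpha$ and $\beta$: before cancellation one might write a coefficient $+1-1$, which is not literally in $\{0,1,-1\}$, so I would state explicitly that the sign of the coefficient of $\epsilon_{s_i}$ is determined by whether $s_i\models\alpha$, $s_i\models\beta$, both, or neither, giving a value in $\{0,1,-1\}$. I do not expect any serious obstacle here; the main thing is simply to verify that the two new atoms produce signed (rather than merely monic) inequalities, and that Lemma~\ref{lemma: preserve polytopes} already supplies closure of the signed monic probability sets under finite unions and intersections, so the induction goes through. This completes the proof.

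\begin{IEEEproof}
We proceed by induction on $\varphi$, restricting attention to nonempty causal multiteams $T$ of signature $\sigma$. Since every monic set is signed monic, all the base cases already treated in Lemma~\ref{lemma: P minus is monic} (namely the literals $X=x$, $X\neq x$, and the evaluation atoms $\Pr(\alpha)\geq\epsilon$, $\Pr(\alpha)>\epsilon$) yield signed monic sets. The cases for $\land$ and $\sqcup$ follow from Lemma~\ref{lemma: preserve polytopes} and the induction hypotheses, using $\Pvec_{\psi\land\chi}=\Pvec_\psi\cap\Pvec_\chi$ and $\Pvec_{\psi\sqcup\chi}=\Pvec_\psi\cup\Pvec_\chi$.

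It remains to treat the comparison atoms. For $\varphi=\Pr(\alpha)\geq\Pr(\beta)$ we have, for nonempty $T$,
\[
T\in\K_\varphi^\sigma \iff \frac{|(T^\alpha)^-|}{|T^-|}\geq\frac{|(T^\beta)^-|}{|T^-|} \iff \sum_{s\in\Team((T^\alpha)^-)}\!\!\epsilon_s^T \;-\!\!\sum_{s\in\Team((T^\beta)^-)}\!\!\epsilon_s^T \;\geq\; 0.
\]
In the resulting inequality the coefficient of $\epsilon_{s_i}$ is $1$ if $s_i$ satisfies $\alpha$ but not $\beta$, is $-1$ if $s_i$ satisfies $\beta$ but not $\alpha$, and is $0$ if $s_i$ satisfies both or neither (the contributions cancelling in the case of both). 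Thus each coefficient lies in $\{0,1,-1\}$, so the inequality is signed monic and defines a signed monic polytope. The case $\varphi=\Pr(\alpha)>\Pr(\beta)$ is identical with $>$ in place of $\geq$.
\end{IEEEproof}
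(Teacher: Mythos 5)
Your proof is correct and takes essentially the same route as the paper's: induction on $\varphi$ deferring to the previous lemma for all old cases, with the two comparison atoms handled by rewriting $P_T(\alpha) \vartriangleright P_T(\beta)$ as the inequality $\sum_{s\in\Team((T^\alpha)^-)}\epsilon_s^T - \sum_{s\in\Team((T^\beta)^-)}\epsilon_s^T \vartriangleright 0$, which is signed monic. Your explicit remark that coefficients at assignments satisfying both $\alpha$ and $\beta$ cancel to $0$ is a small clarification the paper leaves implicit, but the argument is the same.
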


\begin{proof}
 %   \emph{Proof of lemma \ref{lemma: P is signed monic}.}
The proof, by induction on $\varphi$, proceeds as in the previous lemma. We cover the missing cases. Again, we restrict $T$ to nonempty causal multiteams.
%of signature $\sigma$.

Let $\varphi$ be $\Pr(\alpha)\geq \Pr(\beta)$.  Then $T\in \K_\varphi^\sigma$ iff $\frac{|(T^\alpha)^-|}{|T^-|} \geq \frac{|(T^\beta)^-|}{|T^-|}$ 
%$|(T^\alpha)^-| \geq |(T^\beta)^-|$ 
 iff $\sum_{s\in \Team((T^\alpha)^-)} \epsilon_s^T + \sum_{s\in \Team((T^\beta)^-)} (-1) \cdot \epsilon_s^T \geq 0$. This inequality defines a signed monic polytope.

Let $\varphi$ be $\Pr(\alpha)> \Pr(\beta)$. By the same calculations as above, a causal multiteam $T$ satisfies this formula iff $\Pvec_T$ is characterized by the signed monic inequality $\sum_{s\in \Team((T^\alpha)^-)} \epsilon_s^T + \sum_{s\in \Team((T^\beta)^-)} (-1) \cdot \epsilon_s^T > 0$.
\qed
\end{proof}

%It is straightforward to prove (cp. \cite{BarYan2022}) the following:

%\begin{lemma}\label{lemma: P closed under change of laws}
%Let $\varphi$ be a formula of $\PP$. Then $\K_\varphi$ is closed under change of laws.
%\end{lemma}

We can now state and prove the semantic characterizations of languages $\PP^-$ and $\PP$. 
Remember that there are only finitely many assignments of signature $\sigma$ (say $s_1,\dots,s_n$). We can then describe each such assignment $s_i$ with a formula $\Al_i \dfn \SET W = s_i(\SET W)$, where $\SET W$ lists all the variables in $Dom$.

\begin{restatable}{theorem}{Pchar}\label{thm: characterization of P minus}
A class $\K$ of multiteams of signature $\sigma$ is definable
%by a formula of
in $\PP^-$ if and only if:
\begin{enumerate}
\item $\K$ is closed under change of laws
\item $\K$ is closed under rescaling
\item $\K$ is monic.
\end{enumerate} 
% $\K$ is monic, has the empty multiteam property, and is closed under change of laws and rescaling.
% $\K$ is definable in $\PP$ if and only if
% $\K$ is signed monic, has the empty multiteam property, and is closed under change of laws and rescaling.
\end{restatable}

\begin{proof}

 Left-to right. The fact that $\PP^-$ and $\PP$ have the empty multiteam property and are closed under rescaling follows from Theorem \ref{theorem: expressivity of PCOinf}; closure under change of laws follows from Lemma \ref{lemma:trivial laws}.
%Since $T\models \Pr(\alpha) \vartriangleright \epsilon$ (resp. $T\models \Pr(\alpha) \vartriangleright \Pr(\beta)$) iff the monic inequality $\sum_{s\in \Team((T^\alpha)^-)} \epsilon_s^T  \vartriangleright \epsilon$  (resp. the signed monic inequality $\sum_{s\in \Team((T^\alpha)^-)} \epsilon_s^T + \sum_{s\in \Team((T^\beta)^-)} (-1) \cdot \epsilon_s^T \vartriangleright 0$) holds, we obtain that $\PP^-$ (resp. $\PP$) is monic (resp. signed monic) by induction on the syntax of formulae.
The fact that $\PP^-$ is monic is given by Lemma \ref{lemma: P minus is monic}, and that $\PP$ is signed monic by Lemma \ref{lemma: P is signed monic}.

% For $\PP^-$, the right-to-left entailment is proved via a direct translation from finite unions of finite systems of signed monic inequalities into $\PP^-$ formulae. The union of systems, which defines the probability set of $\K$, is expressed via a formula of the form  $\varphi \dfn \bigsqcup_{1\leq j \leq m} \bigwedge_{i\in I_j} \psi_{i}$, where each $\psi_i \dfn \Pr(\bigvee_{s_k \in \B_\sigma \mid a_k^i = 1} \Al_k) \vartriangleleft b^i$ expresses an inequality of the form $a_1^i\epsilon_1 + \dots + a_n^i\epsilon_n \vartriangleleft b^i$ (it is easy to see that $b^i$ can always be assumed to be in $[0,1]\cap \mathbb Q$). The fact that $\K$ has the empty team property and closure under rescaling guarantees that $\K$ is ``maximal'', i.e. it contains \emph{all} the causal multiteams whose probability set is defined by this system; thus $\K=\K_\varphi^\sigma$.

For $\PP^-$, the right-to-left entailment is proved via a direct translation from finite unions of finite systems of monic inequalities into $\PP^-$ formulae. Suppose that $\K$ has all the four listed properties. Since it is monic, $\Pvec_\K$ is the union of the solution sets of a finite number $m$ of finite systems $A_1,\dots,A_m$ of inequalities. 
For each $A_j$, let us index its inequalities by a finite set $I_j$.
%Write $I_1,\dots,I_m$ for index sets for $S_1,\dots,S_m$ respectively.

For each inequality $e_i \dfn a_1^i\epsilon_1 + \dots + a_n^i\epsilon_n \vartriangleleft b^i$ in $A_j$, if $b^i \in [0,1]$ define
\[
\psi_i \dfn \Pr(\bigvee_{s_k \in \B_\sigma \mid a_k^i = 1} \Al_k) \vartriangleleft b^i. 
\]
%\[
%\psi_i: P(\bigvee_{s_k \in \B_\sigma \mid a_k^i = 1} \bigwedge_{X\in \dom} X = s_k(X) ) \vartriangleleft b^i. 
%\]
If $b<0$ and $\vartriangleleft \hspace{3pt} \in\{<,\leq\}$, or $b>1$ and $\vartriangleleft \hspace{3pt} \in\{>,\geq\}$, let $\psi_i \dfn \bot$. If $b<0$ and $\vartriangleleft \hspace{3pt} \in \{>,\geq\}$, or $b>1$ and $\vartriangleleft \hspace{3pt} \in\{<,\leq\}$, let $\psi_i \dfn\top$. %\jonni{Changed a $\bot$ to $\top$ here.}  
%\commf{Rewrite everywhere $\bigwedge_{X\in \dom} X = s_j(X)$ as $\SET A=s_j(\SET A)$ or as $\alpha_j$}
Finally, let $\varphi \dfn \bigsqcup_{j= 1..m} \bigwedge_{i\in I_j} \psi_{i}$.

Since the formulae $\Al_k$ 
%$\bigwedge_{X\in \dom} X = s(X)$ (as $s$ varies in $\A_\sigma$)
describe disjoint events, the probability of $\bigvee_{s_k \in \B_\sigma \mid a_k^i = 1} \Al_k$
%$\bigvee_{s_j \in \B_\sigma \mid a_j^i = 1} \bigwedge_{X\in \dom} X = s_j(X)$ 
is the sum $a_1^i\epsilon_{s_1}^T + \dots + a_n^i\epsilon_{s_n}^T$ of the probabilities of the formulae $\Al_k$.
%$\bigwedge_{X\in \dom} X = s_j(X)$,
Thus, a causal multiteam $T$ of signature $\sigma$ satisfies $\psi_i$ iff $a_1^i\epsilon_1^T + \dots + a_n^i\epsilon_n^T \vartriangleleft b^i$; and thus, $T\models\varphi$ iff $\pvec_T\in\Pvec_\K$. Since $\K$ is closed under change of laws and rescaling, and it has the empty multiteam property, 
%, by Lemma \ref{lemma: maximum class}, $\K$
it is the class of all causal multiteams of signature $\sigma$ being empty or having probability vector in $\Pvec_\K$. Thus $\K=\K_\varphi$.
\qed
\end{proof}

\begin{restatable}{theorem}{Pchar}\label{thm: characterization of P}
A class $\K$ of multiteams of signature $\sigma$ is definable
%by a formula of
in $\PP$ if and only if:
\begin{enumerate}
\item $\K$ is closed under change of laws
\item $\K$ is closed under rescaling
\item $\K$ is signed monic.
\end{enumerate} 
\end{restatable}

\begin{proof} 
   Left-to right. The fact that $\PP$ has the empty multiteam property and is closed under rescaling follows from Theorem  \ref{theorem: expressivity of PCOinf};
   closure under change of laws follows from Lemma \ref{lemma:trivial laws}.
The fact that a set of models $\K$ defined in $\PP$ is signed monic is given by Lemma \ref{lemma: P is signed monic}.

The right-to-left entailment is proved via a direct translation from finite unions of finite systems of signed monic inequalities into $\PP$ formulae. Suppose that $\K$ has all the four listed properties. Since it is signed monic, $\Pvec_\K$ is the union of the solution sets of a finite number $m$ of finite systems $A_1,\dots,A_m$ of 1) monic inequalities and 2) signed monic inequalities. Since we already know (by the proof of Theorem \ref{thm: characterization of P minus}) that monic inequalities are expressible in $\PP^- \subseteq \PP$, the proof can proceed as that of Theorem\ref{thm: characterization of P minus}, provided we show that signed monic inequalities with constant coefficient $0$ can be expressed in $\PP$. But since such inequalities are of the form $\sum_{i\in I}\epsilon_i - \sum_{j\in J}\epsilon_j \vartriangleleft 0$ with $I\cap J =\emptyset$, they are easily translated as $\Pr(\bigvee_{i\in I}\Al_i)\vartriangleleft \Pr(\bigvee_{j\in J}\Al_j)$. \qed
\end{proof}

It is not immediate to see whether $\PP^- \leq \PP$ is strict. However, by analyzing the geometry of $\Delta^{n-1}$ we are to show that there are signed monic classes of causal multiteams that are not monic.
The following lemma establishes that not all signed monic probability sets can be captured by monic inequalities; more specifically, that this happens for a set defined by a single signed monic inequality. \corrf{Luckily, we can build such a counterexample already in the $\Delta^2$ simplex in 3-dimensional space; this environment for probability vectors is realized when one uses a signature with exactly one variable, which can take three distinct values.} Together with the previous theorem, this implies that $\PP^- < \PP$.

\begin{lemma}\label{lemma: not all signed monic are monic}
Consider the probability set $\Pvec\subset \Delta^{2}$ which is defined by the homogeneous signed monic inequality $\epsilon_1 - \epsilon_2 \leq 0$ (together with the simplex constraint $\epsilon_1 +\epsilon_2+\epsilon_3 = 1$). Then, $\Pvec$ is not a monic probability set.\looseness=-1
%finite union of monic (semi)polytopes.
\end{lemma}

\begin{proof}
    In tridimensional space, the equation $\epsilon_1 - \epsilon_2 = 0$ (the surface of the subspace defined by $\epsilon_1 - \epsilon_2 \leq 0$) defines the plane $D$ that is orthogonal to the $(\epsilon_1,\epsilon_2)$ plane and that bisects the first and third quadrant of the plane $(\epsilon_1,\epsilon_2)$. The intersection of $D$ with the simplex $\Delta^{2}$ is then the line that passes through the vertex $\epsilon_1 = 0, \epsilon_2 = 0,\epsilon_3 = 1$ of the simplex and is orthogonal to the opposite side of the simplex (the projection of this line on the $(\epsilon_1,\epsilon_2)$ plane is depicted, as a thick line, in Figure \ref{figure: P undefinable in P minus}).  

    We will show that this line, call it $L$, cannot be contained in the (1-dimensional) boundary\footnote{I.e., the perimeter.} of any monic probability set. Thus, in particular, $L$ is not a monic probability set.

    First of all, we show that the intersection of the simplex with the space defined by \emph{one} monic inequality cannot include $L$ as a subset. First of all, monic inequalities of the form $\epsilon_1 +\epsilon_2+\epsilon_3 = b$, once intersected with the simplex $\epsilon_1 +\epsilon_2+\epsilon_3 = 1$, will have either empty boundary (when $b\neq 1$) or the boundary will be the perimeter of the simplex (when $b=1$); in neither case the boundary includes $L$ as a subset. We then observe that we do not really need to consider inequalities with two variables, because, e.g., the system   
    \[
\left\{
\begin{array}{l}
  \epsilon_2 + \epsilon_{3}  \vartriangleleft b     \\
  \epsilon_1 + \epsilon_2 + \epsilon_3 = 1      
\end{array}
\right.
\]
is equivalent (by replacing $\epsilon_2 + \epsilon_{3}$ with $1-\epsilon_3$ in the first formula) to the system 
\[
\left\{
\begin{array}{l}
  \epsilon_{3}  \vartriangleright 1- b     \\
  \epsilon_1 + \epsilon_2 + \epsilon_3 = 1      
\end{array}
\right.
\]
 where the inequality only contains one variable. Let us then consider the boundaries of the spaces defined by monic inequalities with one variable. These are characterized by (the intersections with the simplex of) equalities of the forms $\epsilon_1 =b$, $\epsilon_2 =b$ or $\epsilon_3 =b$. The equation $\epsilon_1 =b$ describes a plane parallel to the $(\epsilon_2,\epsilon_3)$ plane; thus, its intersection with the simplex (if not empty) is a line that is parallel to the side of the simplex that is included in the  $(\epsilon_2,\epsilon_3)$ plane. Analogously, $\epsilon_2 =b$ or $\epsilon_3 =b$ describe lines parallel to the other two sides of the simplex. But then, since $L$ is not parallel to any side of the simplex, it is not included in any of the lines described by these equations. Thus, $L$ is not the boundary of a space described by a single monic inequality.

Now, if $F$ is the boundary of the space defined by a \emph{system} of monic inequalities, say $e_1,\dots,e_n$, then it consists of (parts of) the perimeter of a polygon (inside the simplex). Each side $S$ of this polygon is a subset of a line defined by an equality of the form $\epsilon_i=b$, for $i=1,2$ or $3$. But, we have already seen that the spaces defined by these equations are not parallel to $L$; thus, each side $S$ intersects $L$ in at most one point $p_S$. Since there is only a finite number of sides, we obtain a finite number of points $p_S$, and $L$ cannot be a subset of the union of a finite number of points.     

Finally, suppose $F$ is the boundary of a \emph{finite union} of polygons $P_1,\dots P_m$ defined by finite systems of monic inequalities. Each of these polygons $P_j$ has a perimeter $F_j$. But then, it is clear that $F\subseteq \bigcup_{j\in J} F_j$. Since, as we have seen above, each of the $F_j$ may intersect $L$ at most in a finite number of points, and there are finitely many polygons, we conclude that also $F$ intersects $L$ at most in a finite number of points. Thus, $L\not\subseteq F$. \qed
\end{proof}

% \begin{restatable}{lemma}{notallsignedmonic}\label{lemma: not all signed monic are monic}
% Consider a nonempty probability set $\Pvec\subset \Delta^{n-1}$ which is defined by an inequality $a_1\epsilon_1 + \dots + a_n\epsilon_n \leq b$, where there are indexes $i,j$ such that $a_i$ is $1$ and $a_j$ is $-1$, and $b$ is a rational number in $[0,1]$. Then $\Pvec$ is not a monic probability set.\looseness=-1
% %finite union of monic (semi)polytopes.
% \end{restatable}

% \begin{proof}[sketch]
% See Figure \ref{figure: P undefinable in P minus}. The projection of the set described in the statement on the $(i,j)$-plane has as its frontier a line that is perpendicular to the segment of extremes $(0,1),(1,0)$. On the other hand, monic equalities describe, in this projection, only lines that are either parallel to this segment or parallel to one of the axis. %\commf{I don't think we need to restore the full proof here. What do you think?}   
% \end{proof}

\begin{figure}
    \centering
    \begin{tikzpicture}[scale=2.5]
        % Axes
        \draw[->] (-0.2, 0) -- (1.2, 0) node[right] {$\epsilon_1$};
        \draw[->] (0, -0.2) -- (0, 1.2) node[above] {$\epsilon_2$};

        % Simplex
        \filldraw[fill=gray!20, draw=black] (0, 0) -- (0, 1) -- (1, 0) -- cycle;

        % Definables and not

        \draw[line width = 0.8mm] (0,0) -- (0.5,0.5);
        \draw[dashed] (0.3,0) -- (0.3,0.7);
        \draw[dashed] (0,0.1) -- (0.9,0.1);
        \draw[dashed] (0.8,0) -- (0,0.8);

        % Labels
        \node[below left] at (0, 0) {(0, 0)};
        \node[above] at (-0.22, 0.9) {(0, 1)};
        \node[below right] at (0.8, 0) {(1, 0)};
    \end{tikzpicture}
    \caption{Projection of the standard simplex in the $(\epsilon_1,\epsilon_2)$-plane. The thick line is the frontier of the object defined by the inequality in Lemma \ref{lemma: not all signed monic are monic}. The dashed lines exemplify frontiers of $\PP^-$-definable sets, always parallel to the sides of the triangle.}
    \label{figure: P undefinable in P minus}
\end{figure}

\begin{remark}
    %The condition that $\Pvec$ be defined by a single inequality, in 
    \corrf{Lemma \ref{lemma: not all signed monic are monic} must not be misinterpreted as saying that \emph{any} probability set that is defined merely in terms of homogeneous signed monic inequalities is not a monic set. } There are systems of non-monic inequalities that are equivalent to systems of monic inequalities. For example, the system
    \[
    \left\{\begin{array}{l}
    \epsilon_1 - \epsilon_2 >0  \\
    \epsilon_2 \geq 0  \\
    \epsilon_2 \leq 0 
\end{array}
\right.
    \]
    is equivalent to  the monic system
    \[
    \left\{\begin{array}{l}
    \epsilon_1 >0  \\
    \epsilon_2 \geq 0  \\
    \epsilon_2 \leq 0 
\end{array}
\right.
    \]
    \corrf{Moreover, in some cases even a space defined by \emph{one} signed monic inequality (featuring both $1$ and $-1$ coefficients) can, in fact, be monic. This is \emph{always} the case for signed monic inequalities that have no null variable coefficients.\footnote{We thank one of the anonymous reviewers for this insight.} Indeed, such an inequality is of the form $\sum_{i\in I}\epsilon_i - \sum_{j\in J}\epsilon_j \vartriangleleft b$, where $I\cap J= \emptyset$ and $n = |I\cup J|$ is the dimension of the space in which the appropriate simplex, $\Delta^{n-1}$, is embedded. Now, in   $\Delta^{n-1}$ the further constraint $\sum_{k= 1..n}\epsilon_k =1$ holds. If either $I$ or $J$ is empty, then the inequality is monic. Assuming they are both nonempty, we may pick an $m\in J$. Since we have $\epsilon_m = 1-\sum_{(I\cup J)\setminus \{m\}}$, we may eliminate $\epsilon_m$ from the initial inequality, obtaining the inequality $2\sum_{i\in I}\epsilon_j \vartriangleleft b$, or $\sum_{i\in I}\epsilon_j \vartriangleleft \frac{b}{2}$, which is monic.}
\end{remark}

Next we turn to characterizing the expressivity of $\PC$. First note that while $\PC$ is in general more expressive than $\PP$ (Corollary \ref{cor: easy comparisons}), if we restrict attention to causal multiteams with a fixed function component, all occurrences of $\cf$ can be eliminated from $\PC$ formulae (or even $\PCO$ formulae). %The following result is proven in the extended version of the paper (\cite{BarVir2023}). % Appendix \ref{A:Characterization of PC}.

\begin{restatable}{proposition}{ignoringcf}\label{proposition: ignoring cf}
Let $\varphi\in \PC_{\sigma}$ (resp. $\PCO_{\sigma}$),
 and $\F$ a function component of signature $\sigma$. Then there is a formula $\varphi^\F\in \PP_{\sigma}$ (resp. $\PO_{\sigma}$)
  such that, for every causal multiteam $T$ of signature $\sigma$ and function component $\F$,
\(
T\models \varphi \Leftrightarrow T \models \varphi^\F.
\)
\end{restatable}

\begin{proof}
Write $\alpha_s$ for the formula $\SET W = s(\SET W)$. 
First, for every subformulae of $\varphi$ of the form $\beta\supset\psi$, replace $\beta$ with $\bigvee_{(\{s\},\F) \models \beta}\alpha_s$ (this removes occurrences of $\cf$ from antecedents of $\supset$).  Next, we use the fact that $\cf$ distributes over $\land,\sqcup,\supset$, and the fact that nested counterfactuals are equivalent to unnested ones\footnote{It can be shown that $\SET X = \SET x \cf (\SET Y = \SET y \cf \psi)$ is equivalent to $(\SET X' = \SET x' \land \SET Y = \SET y) \cf \psi)$, where $\SET X' := \SET X \setminus \SET Y$ and $\SET x' := \SET x \setminus \SET y$.}, to guarantee that the consequents of $\cf$ are atoms. The atoms can be assumed to be probabilistic (since $X=x \equiv \Pr(X=x)\geq 1$, and similarly for $X\neq x$). Then, we use the equivalences  
\[
\SET X = \SET x \cf \Pr(\alpha)\vartriangleleft \epsilon   \equiv    \Pr(\SET X = \SET x \cf \alpha) \vartriangleleft \epsilon 
\]
\[
\SET X = \SET x \cf   \Pr(\alpha)\vartriangleleft \Pr(\beta)   \equiv    \Pr(\SET X = \SET x \cf \alpha) \vartriangleleft  \Pr(\SET X = \SET x \cf \beta)
\]
%\commf{Should we add a proof of one of these equivalences? It's commented out below.}
% Let us prove the first of these equivalences; the second one is proved similarly. Let $T = (T^-,\F)$ be a causal multiteam of signature $\sigma$, and observe that %, for any $n\in\mathbb N$,
% \begin{align*}
%  P_{T_{\SET X = \SET x}}(\alpha) & = \sum_{\substack{t \in \Team(T_{\SET X = \SET x}^-) \\ (\{t\},\F_{\SET X = \SET x})\models^{ct} \alpha}} P_{T_{\SET X = \SET x}}(\SET W = t(\SET W)) \\ 
%  & = \sum_{\substack{s \in \Team(T^-) \\ (\{s^\F_{\SET X = \SET x}\},\F_{\SET X = \SET x})\models^{ct} \alpha}} P_{T}(\SET W = s(\SET W)) \\
%   & = \sum_{\substack{s \in \Team(T^-) \\ (\{s\},\F)\models^{ct} \SET X = \SET x \cf \alpha}} P_{T}(\SET W = s(\SET W)) \\
%  & = P_T(\SET X = \SET x \cf \alpha)   
% \end{align*}
% where the second equality uses the fact that an intervention on a recursive causal team preserves the multiplicities of assignments, in the sense that %$\epsilon_{s^\F_{\SET X = \SET x}}^{T_{\SET X = \SET x}} =
% the probability of $t = s^\F_{\SET X = \SET x}$ in $T_{\SET X = \SET x}$ is just given by the sum
% \[
% \sum_{\substack{u\in \Team(T^-) \\ u^\F_{\SET X = \SET x} = s^\F_{\SET X = \SET x}}} \epsilon_u^{T^-}.
% \]
% We then have that $T\models\SET X = \SET x \cf \Pr(\alpha)\vartriangleleft \epsilon$ iff $P_{T_{\SET X = \SET x}}(\alpha)\vartriangleleft \epsilon$ iff $P_T(\SET X = \SET x \cf \alpha) \vartriangleleft \epsilon$ iff $T\models \Pr(\SET X = \SET x \cf \alpha) \vartriangleleft \epsilon$, as required.
to ensure that all the occurences of $\cf$ are inside arguments of $\Pr$.\footnote{The equivalences can be easily seen to hold using the fact that interventions preserve the multiplicity of assignments in a causal multiteam.} Finally, we replace each subformula of the form $\Pr(\alpha) \vartriangleleft \epsilon$ with $\Pr(\bigvee_{(\{s\},\F)\models \alpha} \alpha_s ) \vartriangleleft \epsilon$, and similarly for comparison atoms.  We call $\varphi^\F$ the resulting formula. The equivalence $T\models \varphi \Leftrightarrow T \models \varphi^\F$ can then be proved by an inductive argument.\qed 
\end{proof}

\noindent Notice that, for any fixed finite signature $\sigma$, there is only a finite number of distinct function components. We denote the set they form as $\mathbb F_\sigma$.

\begin{restatable}{theorem}{charPC}\label{thm: characterization of PC}
Let $\K$ be a class of causal multiteams of signature $\sigma$. $\K$ is definable by a $\PC_{\sigma}$ formula if and only if: 
\begin{enumerate}
\item $\K$ has the empty multiteam property
\item $\K$ is closed under rescaling 
\item $\K = \bigcup_{\F\in\mathbb{F}_\sigma}\K^\F$, where  each $\K^\F$ is a signed monic set of causal multiteams of function component $\F$.
\end{enumerate}
\end{restatable}

\begin{proof}
We have already mentioned that there is a $\PCO$ formula $\Phi^\F$ characterizing the property of having function component $\F$ (Theorem \ref{thm: phif} from the Appendix). We can obtain an equivalent formula (call it $\Psi^\F$) in $\PC$ 
 by replacing each subformula of $\Phi^\F$ of the form $\alpha \supset \beta$ with $\Pr(\alpha^d\lor \beta) =1$ (the trick works because, first, there is no occurrence of $\supset$ in $\alpha^d$, and, secondly, no consequent of $\supset$ in $\Phi^\F$ contains probabilistic atoms).

$\Rightarrow$) Suppose $\K = \K_\varphi$, where $\varphi\in\PC_{\sigma}$. Now %let $\Phi^\F$ be the  $\PCO_{\sigma}$ formula characterizing $\F$ (as introduced in Section \ref{subs: expressivity of an infinitary language}). Notice that this formula, as originally defined, contains subformulae of the form $V=v\supset \beta$; however, we can replace these with $V\neq v\lor \beta$, thus obtaining a $\CO_{\sigma}$ formula \emph{without occurrences of $\supset$}. Write $\Psi^\F$ for the formula $\Pr(\Phi^\F)=1$: this is a $\PC_{\sigma}$ formula equivalent to $\Phi^\F$. We can then
define, for each $\F\in \mathbb{F}_\sigma$, $\K^\F := \K_{\varphi \land \Psi^\F}$, where $\Psi^\F$ is as described above. Clearly $\varphi \equiv \bigsqcup_{\F\in \mathbb{F}_\sigma} (\varphi \land \Psi^\F)$, so $\K_\varphi = \bigcup_{\F\in\mathbb{F}_\sigma}\K^\F$.

Now, by Theorem \ref{theorem: expressivity of PCOinf}, $\K_\varphi$ is closed under rescaling  and  has the empty multiteam property. 
Next, observe that, by Proposition \ref{proposition: ignoring cf}, %Lemma \ref{lemma: ignoring cf},
for every $\F\in\mathbb{F}_\sigma$ there is a formula of $\PP_{\sigma}$, call it $\varphi^\F$, which is satisfied by the same causal multiteams of function component $\F$ as $\varphi \land \Psi^\F$ is. In other words, $\K^\F$ is the restriction of $\K_{\varphi^\F}$ to causal multiteams of function component $\F$. Thus, since $\K_{\varphi^\F}$ is closed under change of laws (Lemma \ref{lemma:trivial laws}), we have $\Pvec_{\K^\F} = \Pvec_{\K_{\varphi^\F}}$. Now $\K_{\varphi^\F}$ is signed monic (Theorem  \ref{thm: characterization of P}), and thus by $\Pvec_{\K^\F} = \Pvec_{\K_{\varphi^\F}}$ we conclude that also $\K^\F$ is signed monic.

$\Leftarrow$) Suppose $\K$ is closed under rescaling, has the empty multiteam property and $\K = \bigcup_{\F\in\mathbb{F}_\sigma}\K^\F$ for some sets $\K^\F$ as in the statement. Write $\hat \K^\F$ for the set of all causal multiteams of signature $\sigma$ whose team component appears in $\K^\F$. It is straightforward then that also $\hat \K^\F$ is closed under rescaling, has the empty multiteam property and is signed monic; however, $\hat \K^\F$ is also, by definition, closed under change of laws. Thus, by Theorem \ref{thm: characterization of P}, there is a $\PP$ formula $\varphi^\F$ such that $\hat \K^\F = \K_{\varphi^\F}$. Note that, $\K^\F$ is the set of all causal multiteams of $\K_{\varphi^\F}$ that have function component $\F$. Thus $\K^\F = \K_{\varphi^\F \land \Psi^\F}$. Thus $\K$ is defined by the $\PC_{\sigma}$ formula $\bigsqcup_{\F\in\mathbb{F}_\sigma}(\varphi^\F \land \Psi^\F)$.\qed
\end{proof}

\noindent Note that the sets $\K^\F$ in the statement of the theorem are themselves closed under rescaling if $\K$ is. This immediately follows from the fact that any two causal multiteams $(T,\F),(S,\G)$ with $\F\neq\G$ are \emph{not} rescalings of each other.

%%%%%%%%%%%%%%%%%%%%%%%%%%%%%%%%%%%%%%%%%%%%%%%%%%%%%%%%%%%%%%%%%%%%%%%%%%%%%%%%%%%%
\subsection{Signed binary probability sets: $\PO$ and $\PCO$}\label{subs: PO and PC}

A peculiarity of languages such as $\PO$ and $\PCO$, which feature the operator $\supset$, is that they allow to discuss conditional probabilities. Writing $P_T(\beta \mid \alpha)$ for the conditional probability, in $T$, of $\beta$ given $\alpha$ (that is, $P_T(\beta \mid \alpha):= \frac{P_T(\alpha\land\beta)}{P_T(\alpha)}$, provided $P_T(\alpha)>0$), we have the following.

\begin{proposition}[\cite{BarSan2024}, Theorem 6.1]\label{prop: conditional probabilities}
Let $T$ be a nonempty causal multiteam and $\vartriangleright \in \{<, \leq, >, \geq, =\}$. Then:
    \begin{enumerate}
        \item $T\models \alpha \supset \Pr(\beta) \vartriangleright \epsilon$ iff $P_T(\alpha)\leq 0$ or $P_T(\beta) \vartriangleright \epsilon$.
        \item $T\models \alpha \supset \Pr(\beta) \vartriangleright \Pr(\beta)$ iff $P_T(\alpha)\leq 0$ or $P_T(\beta) \vartriangleright P_T(\beta)$.
    \end{enumerate}
\end{proposition}

  \begin{definition}\label{def:signed_binary}
    A probability set is \textbf{signed binary} if it is defined, over $\Delta^{n-1}$, as a finite union of probability sets defined by finite systems of inequalities of the following two types:
       \begin{enumerate}
       \item monic inequalities 
       \item homogeneous signed binary inequalities. %with constant coefficient $0$ (i.e., of the form $c^+\sum_{i\in I}\epsilon_i + c^- \sum_{j\in J}\epsilon_j \vartriangleleft 0$, with $c^+ \in \mathbb Z^+$ and $c^- \in \mathbb Z^-$).
       \end{enumerate}
    \end{definition}

% A subset $\Pvec$ of $\Delta^{n-1}$ is \textbf{signed binary} if it is a finite union of sets defined by finite systems of inequalities of the form
% \[
% c^- \sum_{i\in I} \epsilon_i + c^+ \sum_{j\in J} \epsilon_j \vartriangleleft b
% \]
% where $I\cap J = \emptyset$, $c^-,c^+ \in \mathbb Z$, $c^- \leq 0$, $c^+ \geq 0$, $b\in \mathbb Q$. Likewise, a class $\K$ of causal multiteams of signature $\sigma$ is signed binary if $\Pvec_\K$ is.

\begin{lemma}\label{lemma: PO is linear}
Every formula $\varphi\in\PO$ is signed binary.
\end{lemma}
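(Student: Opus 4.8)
The plan is to normalize first and then induct on structure. By Lemma \ref{lemma: supset normal form} I may assume $\varphi$ is in the stated normal form; replacing each bare $\CO$-literal $X=x$ (resp.\ $X\neq x$) by $\Pr(X=x)\geq 1$ (resp.\ $\Pr(X\neq x)\geq 1$), I may take $\varphi$ to be a $\{\land,\sqcup\}$-combination of formulae of the two shapes $\eta$ and $\alpha\supset\eta$, where $\eta$ is a probabilistic atom and $\alpha\in\CO$ (a bare atom being the case $\alpha=\top$). Since $\PO$ has no $\cf$, every such $\alpha$ and every $\CO$-argument inside the atoms is $\cf$-free, hence flat by Theorem \ref{thm: CO flatness}. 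Consequently the assignment set $A:=\{s\in\B_\sigma \mid (\{s\},\F)\models\alpha\}$ does not depend on $\F$, and for nonempty $T$ one has $\tfrac{|(T^\alpha)^-|}{|T^-|}=\sum_{s\in A}\epsilon_s^T$.

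The inductive steps are routine from $\Pvec_{\psi\land\chi}=\Pvec_\psi\cap\Pvec_\chi$ and $\Pvec_{\psi\sqcup\chi}=\Pvec_\psi\cup\Pvec_\chi$: signed binary sets are closed under finite union by definition, and under finite intersection by distributing the intersection over the two finite unions and merging the corresponding finite systems of signed binary inequalities, exactly as in Lemma \ref{lemma: preserve polytopes}.

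The heart of the argument is the base case $\alpha\supset\eta$. Let $B$ (and, for comparison atoms, $C$) be the assignment sets of the $\CO$-arguments of $\eta$; using $(T^\alpha)^\beta=T^{\alpha\land\beta}$ and flatness, for nonempty $T$ with $(T^\alpha)^-\neq\emptyset$ the conditional probability is $P_{T^\alpha}(\beta)=\frac{\sum_{s\in A\cap B}\epsilon_s^T}{\sum_{s\in A}\epsilon_s^T}$. For an evaluation atom $\Pr(\beta)\geq\epsilon$ with $\epsilon=p/q$, clearing the positive denominator and splitting $A=(A\cap B)\sqcup(A\setminus B)$ rewrites $P_{T^\alpha}(\beta)\geq\epsilon$ as
\[
(q-p)\sum_{s\in A\cap B}\epsilon_s^T \;-\; p\sum_{s\in A\setminus B}\epsilon_s^T \;\geq\; 0,
\]
which is a signed binary inequality with $c^+=q-p\geq0$ on $J=A\cap B$, $c^-=-p\leq0$ on the disjoint $I=A\setminus B$, and $b=0$. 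For a comparison atom $\Pr(\beta)\geq\Pr(\gamma)$ the common denominator cancels and, after removing the shared part $A\cap B\cap C$, the condition becomes $\sum_{s\in(A\cap B)\setminus C}\epsilon_s^T-\sum_{s\in(A\cap C)\setminus B}\epsilon_s^T\geq0$, again signed binary (indeed signed monic). The strict atoms are handled identically, with $\geq$ replaced by $>$.

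The one point needing care is that $\Pvec_\varphi$ ranges over nonempty $T$ for which $T^\alpha$ may nonetheless be empty, i.e.\ $\sum_{s\in A}\epsilon_s^T=0$. For the non-strict atoms this is automatically absorbed: at such points the derived inequality has left-hand side $0$, and $0\geq0$ matches the vacuous truth of $\alpha\supset\eta$. For the strict atoms the derived strict inequality fails there, so I take $\Pvec_{\alpha\supset\eta}$ to be the union of the strict inequality's solution set with the monic locus $\{\sum_{s\in A}\epsilon_s\leq0\}$ (exactly the empty-observation points of $\Delta^{n-1}$); this union is still signed binary. I expect this empty-observation bookkeeping, rather than any geometric difficulty, to be the main thing to verify; everything else follows from the above computation and the closure properties.
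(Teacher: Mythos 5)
Your proposal is correct and follows essentially the same route as the paper's proof: the same normal-form reduction via Lemma \ref{lemma: supset normal form} to formulae $\alpha\supset\eta$ with $\eta$ a probabilistic atom, the same clearing-of-denominators computation producing a signed binary inequality with $c^+=q-p$, $c^-=-p$ (the paper writes this as $(1-b)\sum_{\{s\}\models\beta\land\alpha}\epsilon_s^T+(-b)\sum_{\{s\}\models\neg\beta\land\alpha}\epsilon_s^T\vartriangleleft 0$ before scaling), and the same union with the locus $\sum_{s\in A}\epsilon_s\leq 0$ for the empty-observation case. Your only deviation is a minor sharpening: you observe that for the non-strict atoms the empty-observation points are already absorbed by the single inequality, whereas the paper uniformly takes the union of the two loci in all cases --- both are correct.
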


\begin{proof}
%\emph{Proof of lemma \ref{lemma: PO is linear}}
The proof proceeds by induction on $\varphi$. We only discuss the most difficult case, when $\varphi$ is of the form $\alpha\supset \psi$. Write $\vartriangleleft$ for any symbol in $\{ \leq,\geq,<,>\}$. 
%The cases not involving $\supset$ as most external operator are proved exactly as in Lemmas \ref{lemma: P minus is monic} and \ref{lemma: P is signed monic} (since Lemma \ref{lemma: preserve polytopes} can be easily extended to the signed binary case).
%Suppose then that $\varphi$ is of the form $\alpha\supset\psi$. By Lemma \ref{lemma: supset normal form},
Using the distributivity of $\supset$ over $\land$ and $\lor$, and the equivalences $X=x \equiv \Pr(X=x) = 1, X \neq x \equiv \Pr(X \neq x) = 1, \SET X = \SET x \cf \Pr(\alpha)\vartriangleleft \epsilon \equiv \Pr(\SET X = \SET x \cf \alpha)\vartriangleleft\epsilon$ and $\SET X = \SET x \cf \Pr(\alpha)\vartriangleleft \Pr(\beta) \equiv \Pr(\SET X = \SET x \cf \alpha)\vartriangleleft\Pr(\SET X = \SET x \cf \beta)$, we can assume $\psi$ to be a probabilistic atom. Hence we have two cases.

 1) Assume $\psi$ is $\Pr(\beta)\vartriangleleft b$. %($\vartriangleleft \in \{ \leq,\geq,<,>\}$).
 Now $T=(T^-,\F)\in\K_\varphi$ iff (by Proposition \ref{prop: conditional probabilities}) either $P_T(\alpha) \leq 0$ or $P_T(\beta\mid\alpha)\vartriangleleft b$. The latter is equivalent to $P_T(\beta\land \alpha)\vartriangleleft b \cdot P_T(\alpha)$, which can be rewritten as
 \[
 \sum_{\substack{s\in\B_\sigma \\ \{s\}\models\beta\land\alpha}}\epsilon^T_s \vartriangleleft b \cdot \sum_{\substack{s\in\B_\sigma \\ \{s\}\models\alpha}}\epsilon^T_s
 \]
where we write e.g. $\{s\}\models\alpha$  as a shorthand for $(\{s\},\F)\models\alpha$.
 
 The above can be rewritten as
  \[
 \sum_{\substack{s\in\B_\sigma \\ \{s\}\models\beta\land\alpha}}\epsilon^T_s \vartriangleleft b \cdot \big( \sum_{\substack{s\in\B_\sigma \\ \{s\}\models\beta \land \alpha}} \epsilon^T_s + \sum_{\substack{s\in\B_\sigma \\ \{s\}\models\neg \beta \land \alpha}} \epsilon^T_s \big)
 \]
%Taking into account that $\{s\mid \{s\}\models\beta\land\alpha\} \subseteq \{s\mid \{s\}\models \alpha\}$, 
which again is equivalent to
\begin{equation}\label{eq:1}
 (1-b)\cdot\sum_{\substack{s\in\B_\sigma \\ \{s\}\models\beta\land\alpha}}\epsilon^T_s + (- b) \cdot \sum_{\substack{s\in\B_\sigma \\ \{s\}\models \neg \beta \land\alpha}}\epsilon^T_s \vartriangleleft  0.
 \end{equation}
 Now, since $b\in [0,1]$, we have $1-b \geq 0$ and $-b\leq 0$. Then, by multiplying both sides of \eqref{eq:1} by a common denominator of $1-b$ and $-b$, we obtain a \corrf{homogeneous} signed binary inequality.

 % This can be rearranged into an inequality of the form
 %\[
 %\sum_{i\in I}\epsilon_i + (1-b) \cdot \sum_{j\in J}\epsilon_j + (-b) \cdot \sum_{k\in K}\epsilon_k \vartriangleleft 0.
 %\]
 %However, since $\{s\mid \{s\}\models\beta\land\alpha\} \subseteq \{s\mid \{s\}\models \alpha\}$, the term $\sum_{i\in I}\epsilon_i$ can be removed. 
 On the other hand, the inequality $P_T(\alpha)\leq 0$ can be rewitten as $ \sum_{\{s\}\models\alpha}\epsilon_s\leq 0$. %, which defines a polytope. 
Thus $\Pvec_\varphi$ is the union of two sets defined by %defined by a system of two signed 
\corrf{homogeneous} signed binary inequalities.
%the union of two semipolytopes. 

2) Assume $\psi$ is $\Pr(\beta)\vartriangleleft \Pr(\gamma)$. Now $T\in\K_\varphi$ iff either $P_T(\alpha)\leq 0$ or $P_T(\beta\mid\alpha)\vartriangleleft P_T(\gamma\mid\alpha)$. The proof then proceeds as in the previous case. 
\qed
\end{proof}

%The following four results are proven in Appendix \ref{A:Characterization of PO and PCO}.

\begin{restatable}{theorem}{POchar}\label{thm: characterization of PO}
A class $\K$ of multiteams of signature $\sigma$ is definable
%by a formula of
in $\PO$ if and only if:
\begin{enumerate}
\item $\K$ is closed under change of laws
\item $\K$ is closed under rescaling, and
\item $\K$ is signed binary.
\end{enumerate} 
% A class $\K$ of multiteams of signature $\sigma$ is definable by a formula of $\PO$ if and only if $\K$ is signed binary, has the empty multiteam property and is closed under change of laws and rescaling.
%\begin{enumerate}
%\item $\K$ is closed under change of laws
%\item $\K$ is closed under rescaling, and
%\item $\K$ is signed binary.
%\end{enumerate} 
%
\end{restatable}

\begin{proof}
$\Rightarrow$) By Theorem \ref{theorem: expressivity of PCOinf}, $\K$ is closed under rescaling. Closure under change of laws follows from Lemma \ref{lemma:trivial laws}. % immediately, for $\PO$-formulae have no access to causal equations.
 Lemma \ref{lemma: PO is linear} shows that $\Pvec_\K$ is signed binary. %\jonni{Doesnt this follow already from the formulation of Lemma \ref{lemma: PO is linear}?}
 The empty multiteam property is given by Theorem \ref{thm: CO flatness}.

$\Leftarrow$) 
The proof strategy is analogous to that used for the characterization of $\PP$ (in Theorem \ref{thm: characterization of P}), although it involves more difficult calculations. %In particular, one needs to show that an inequality of the form
%$$c^- \sum_{i\in I} \epsilon_i + c^+ \sum_{j\in J} \epsilon_j \vartriangleleft b$$
%(where $I\cap J = \emptyset$, $c^-,c^+ \in \mathbb Z$, $c^- \leq 0$, $c^+ \geq 0$ an where, once more, we can assume $b\in [0,1]\cap \mathbb Q$) is captured by the $\PO$ formula
%$$\Big(\bigvee_{k\in I\cup J} \Al_k\Big) \supset  \Pr(\bigvee_{j\in J}\Al_j)\vartriangleleft -\frac{c^-}{c^+ - c^-}.$$    
%Writing $d$ for $c^+ -c^-$, and using the $\Al_i$ formulas as before, we have:
%
%
%***********************************************************
%
%
%
%
%$\Leftarrow$) The proof is analogous to the proof of Theorem \ref{thm: characterization of P}.
 We need to show that every constraint of the form 
\[
c^- \sum_{i\in I} \epsilon_i + c^+ \sum_{j\in J} \epsilon_j \vartriangleleft 0
\]
where $I\cap J = \emptyset$, $c^-,c^+ \in \mathbb Z$, $c^- \leq 0$, $c^+ \geq 0$, %$b\in  \mathbb Q$,
can be expressed in $\PO$.

%First of all, let us prove it in the special case when $b$ is $0$.
Write $d$ for $c^+ -c^-$.  Notice that $-d \leq c^- \leq 0 \leq c^+ \leq d$. We can also assume that $d>0$ (the case when $d=0$ is covered by Theorem \ref{thm: characterization of P minus}). Then $-\frac{c^-}{d}$ is a rational number in $[0,1]$, and thus the following is a $\PO$ formula (where, as before, $\Al_j$ stands for $\SET W = s_j(\SET W)$):
\[
\Big(\bigvee_{k\in I\cup J} \Al_k\Big) \supset  \Pr(\bigvee_{j\in J}\Al_j)\vartriangleleft -\frac{c^-}{d}.
\]
Now we have
\begin{align*}
 & T\models \Big(\bigvee_{k\in I\cup J} \Al_k\Big) \supset  \Pr(\bigvee_{j\in J}\Al_j) \vartriangleleft -\frac{c^-}{d} \\
 & \iff P_T(\bigvee_{j\in J}\Al_j  \mid  \bigvee_{k\in I\cup J} \Al_k )\vartriangleleft -\frac{c^-}{d} \\
 & \iff d\cdot P_T(\bigvee_{j\in J}\Al_j \land  \bigvee_{k\in I\cup J} \Al_k) \vartriangleleft - c^- \cdot P_T(\bigvee_{k\in I\cup J}\Al_k) \\
  & \iff d\cdot P_T(\bigvee_{j\in J}\Al_j) \vartriangleleft - c^- \cdot P_T(\bigvee_{k\in I\cup J}\Al_k) \\
  & \iff d\sum_{j\in J}\epsilon_{s_j}^T \vartriangleleft -c^- \sum_{k\in I\cup J}\epsilon_{s_k}^T \\
  & \iff c^- \sum_{i\in I}\epsilon_{s_i}^T + (d+c^-)\sum_{j\in J}\epsilon_{s_j}^T \vartriangleleft 0 \\
  & \iff c^- \sum_{i\in I}\epsilon_{s_i}^T + c^+\sum_{j\in J}\epsilon_{s_j}^T \vartriangleleft 0, 
\end{align*}
as required.
\qed
\end{proof}

In order to prove that $\PO$ is strictly more expressive than $\PP$ , we can follow a similar strategy as for separating $\PP$ and $\PP^-$. In other words, we use Theorem \ref{thm: characterization of PO} together with the fact that there are signed binary probability sets that are not signed monic, as established by the following lemma.

%and then proceed to precise characterizations of both $\PO$ and $\PC$. %For $\PC$ this is a trivial consequence of corollary \ref{lemma: P closed under change of laws}, as clearly $X=x\cf Y=x$ is not closed under change of laws. 

%Here we consider the language $\PO$ that contains all operators except for $\cf$ (and again, we allow $\lor$ to occur only in probability statements). Adding the operator $\supset$ to $\PP$, we obtain the possibility of expressing constraints on probabilities that are still given by linear inequalities, but not necessarily monic.

\begin{lemma}\label{lemma: not all signed binary are signed monic}
Consider the probability set $\Pvec\subset \Delta^{2}$ which is defined by the homogeneous signed binary inequality $\epsilon_1 - 3 \epsilon_2 \leq 0$ (together with the simplex constraint $\epsilon_1 +\epsilon_2+\epsilon_3 = 1$). Then, $\Pvec$ is not a signed monic probability set.\looseness=-1
%finite union of monic (semi)polytopes.
\end{lemma}

    \begin{proof}
        In tridimensional space, the equation $\epsilon_1 - 3\epsilon_2 = 0$ (the equation of the surface of the subspace defined by $\epsilon_1 - 3\epsilon_2 \leq 0$) defines a plane $D$ that passes through the $\epsilon_3$ axis and crosses the first and third quadrants of the plane $(\epsilon_1,\epsilon_2)$, without bisecting them (the projection of this line on the $(\epsilon_1,\epsilon_2)$ plane is depicted, as a thick line, in Figure \ref{figure: PO undefinable in P}).
        %
        %is orthogonal to the $(\epsilon_1,\epsilon_2)$ plane \commf{Is this notation adequate?} and that bisects the first and third quadrant of the plane $(\epsilon_1,\epsilon_2)$.
        The intersection of $D$ with the simplex $\Delta^{2}$ is then the line that passes through the vertex $\epsilon_1 = 0, \epsilon_2 = 0,\epsilon_3 = 1$ of the simplex and is orthogonal to the opposite side of the simplex. % (the projection of this line on the $(\epsilon_1,\epsilon_2)$ plane is depicted, as a thick line, in Figure 3).  

    We will show that this line, call it $M$, cannot be contained in the (1-dimensional) boundary of any signed monic probability set. Thus, in particular, $M$ is not a signed monic probability set.

    First of all, we show that the intersection of the simplex with a signed monic probability set defined by \emph{one} inequality cannot include $M$ as a subset (remember that, by the definition of signed monic probability set, such an inequality is either monic or signed monic with constant coefficient $0$). In case the inequality is monic, this is proved by considering the same cases raised in the proof of Lemma 16. If the inequality is signed monic with constant coefficient $0$, then its boundary is the line $L: \epsilon_1 -\epsilon_2 = 0$ (note that the only other signed monic equality with constant coefficient $0$, namely $-\epsilon_1 +\epsilon_2 = 0$, defines the same line). Since the $(\epsilon_1,\epsilon_2)$-projection of $L$ bisects the first and third quadrant, while the projection of $M$ does not, $L$ and $M$ only meet in the vertex $\epsilon_1 = 0, \epsilon_2 = 0,\epsilon_3 = 1$. Thus, $M\not\subseteq L$. 
    
    The proof then proceeds exactly as in Lemma \ref{lemma: not all signed monic are monic}.\qed
    \end{proof}

% \begin{restatable}{lemma}{notallsignedbinary}\label{lemma: not all linear are signed monic}
% Let $\Pvec\subset \Delta^{n-1}$ be  a probability set defined by a single inequality $a_1\epsilon_1 + \dots + a_n\epsilon_n \leq b$, where $0\neq a_i,a_j\in \mathbb{Z}$ and $|a_i| \neq |a_j|$, for some indices $i,j$. Then $\Pvec$ is not signed monic.
% %finite union of signed monic semipolytopes.
% \end{restatable}

% \begin{proof}[sketch]
% See Figure \ref{figure: PO undefinable in P}. The argument is analogous to that for Lemma \ref{lemma: not all signed monic are monic}, using 
% the fact that the projection on the $(i,j)$-plane of the figure described in the statement is neither parallel to any axis, nor parallel or orthogonal to the segment of extremes $(0,1),(1,0)$.
% %\commf{Again, I presume we do not need to include the detailed proof?}
% \qed
% \end{proof}

\begin{figure}
    \centering
    \begin{tikzpicture}[scale=2.5]
        % Axes
        \draw[->] (-0.2, 0) -- (1.2, 0) node[right] {$\epsilon_1$};
        \draw[->] (0, -0.2) -- (0, 1.2) node[above] {$\epsilon_2$};

        % Simplex
        \filldraw[fill=gray!20, draw=black] (0, 0) -- (0, 1) -- (1, 0) -- cycle;

        % Definables and not

        \draw[line width = 0.7mm] (0,0) -- (0.75,0.25);
        \draw[dashed] (0,0) -- (0.5,0.5);
        \draw[dashed] (0.3,0) -- (0.3,0.7);
        \draw[dashed] (0,0.1) -- (0.9,0.1);
        \draw[dashed] (0.8,0) -- (0,0.8);

        % Labels
        \node[below left] at (0, 0) {(0, 0)};
        \node[above] at (-0.22, 0.9) {(0, 1)};
        \node[below right] at (0.8, 0) {(1, 0)};
    \end{tikzpicture}
    \caption{Projection of the standard simplex in the $(\epsilon_1,\epsilon_2)$-plane. The thick line is the frontier of the object defined by the inequality in Lemma \ref{lemma: not all signed binary are signed monic}. The dashed lines exemplify frontiers of $\PP$-definable sets, i.e. lines parallel to the sides of the triangle or the line orthogonal to the $(0,1),(1,0)$ side and passing through the origin.}
    \label{figure: PO undefinable in P}
\end{figure}

\noindent Actually, the lemma immediately yields multiple separation results.

\begin{restatable}{proposition}{POgreaterthanP}\label{prop: PO greater than P}
1) $\PP < \PO$, \quad 2)  $\PO\not\leq\PC$, \quad 3) $\PC <\PCO$.  
\end{restatable}

\begin{remark}
    %Notice that, in Lemma \ref{lemma: not all linear are signed monic},  the requirement that $\Pvec$ be definable by a \emph{single} inequality is really needed. T
    \corrf{Again, we can see that there are nontrivial systems of signed binary inequalities that define signed monic probability sets. %are not signed \corrf{binary} but \corrf{define}are equivalent to systems that are signed monic. 
    For example, the following system}
    \[
\left\{\begin{array}{lcr}
\epsilon_1+2\epsilon_2+2\epsilon_3 & \geq & 0 \\
\epsilon_2+\epsilon_3 -\epsilon_4 & \leq & 0 \\
\epsilon_2+\epsilon_3 -\epsilon_4 & \geq & 0
\end{array}
\right.
\]
is equivalent to 
\[
\left\{\begin{array}{lcr}
\epsilon_1+\epsilon_2+\epsilon_3+\epsilon_4 & \geq & 0 \\
\epsilon_2+\epsilon_3 -\epsilon_4 & \leq & 0 \\
\epsilon_2+\epsilon_3 -\epsilon_4 & \geq & 0
\end{array}
\right.
\]
which \corrf{defines a signed monic set}.
\end{remark}

We are finally ready to characterize the expressive power of $\PCO$.
%; the proof is analogous to that of Theorem \ref{thm: characterization of PC}. \commf{Do we need to add it?}
%
\begin{restatable}{theorem}{PCOchar}\label{thm: characterization of PCO}
Let $\K$ be a class of causal multiteams of signature $\sigma$. $\K$ is definable by a $\PCO_{\sigma}$ formula if and only if:
\begin{enumerate}
\item it has the empty multiteam property
\item it is closed under rescaling
\item $\K = \bigcup_{\F\in\mathbb{F}_\sigma}\K^\F$, where  each $\K^\F$ is a signed binary set of causal multiteams of function component $\F$.
\end{enumerate}
\end{restatable}

\begin{proof}
%\emph{Proof of Theorem \ref{thm: characterization of PCO}.}
The argument is very similar as in the proof of Theorem \ref{thm: characterization of PC}, using Lemma \ref{lemma: PO is linear} instead of Theorem \ref{thm: characterization of P}.
% In the left-to-right direction, we need to use Lemma \ref{lemma: PO is linear} instead of Lemma \ref{lemma: P is signed monic} when proving that $\K^\F$ is signed binary. In the right-to-left direction, we use Theorem \ref{thm: characterization of PO} instead of Theorem \ref{thm: characterization of P}. 
\qed
\end{proof}

\vspace{5pt}

%For establishing $\PC < \PCO$, consider the formula $\varphi \dfn (\alpha_1\lor\alpha_2\lor\alpha_3)\supset \Pr(\alpha_1\lor\alpha_2) \leq \frac{1}{3}$ from Proposition \ref{example: PO greater than P}. It is a $\PCO$ formula with a signed binary probability set not expressible as a finite union of solution sets of systems of signed monic inequalities. Now suppose for the sake of contradiction that $\varphi$ is equivalent to some $\PC$ formula $\theta$. $(\K_\theta)^\F$ (the restriction of $\K_\theta$ to causal multiteams of function component $\F$) is then signed monic (by Theorem \ref{thm: characterization of PC}). However, since $\varphi\in\PO$, it is closed under change of laws; thus
%$\Pvec_{\K_\varphi} =  \bigcup_{\F\in\mathbb{F}_\sigma}\Pvec_{(\K_\theta)^\F}$, i.e. $\Pvec_{\K_\varphi}$ is a finite union of solution sets of systems of signed monic inequalities, a contradiction.

%$\Pvec_{(\K_\theta)^\F} = \Pvec_{\K_\theta} = \Pvec_{\K_\varphi}$. But then $\Pvec_{\K_\varphi}$ is signed monic, a contradiction. Thus $\PC < \PCO$.

By Theorem \ref{theorem: expressivity of PCOinf}, $\PCO^\omega$ formulae may characterize arbitrary probability sets. By Theorem \ref{thm: characterization of PCO}, instead, we know that the probability sets of $\PCO$
 formulae are all definable in terms of linear inequalities. A strict inclusion of languages immediately follows. An alternative proof for this using a counting argument was given in \cite{BarSan2024}.
 \begin{corollary}\label{cor: PCO smaller than PCOinf}
 $\PCO < \PCO^\omega$.    
 \end{corollary}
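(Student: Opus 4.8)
Since every $\PCO$ formula is in particular a $\PCO^\omega$ formula (a finite disjunction being a special case of a countable one), the inclusion $\PCO\leq\PCO^\omega$ is immediate, and the entire content of the statement is the strict part: I must exhibit a class that is $\PCO^\omega$-definable but not $\PCO$-definable. The plan is to play the two characterization theorems against each other. By Theorem \ref{theorem: expressivity of PCOinf}, \emph{every} nonempty class with the empty multiteam property that is closed under rescaling is already $\PCO^\omega$-definable, with no constraint whatsoever on the geometric shape of its probability set. By contrast, Theorem \ref{thm: characterization of PCO} forces a $\PCO$-definable class to decompose as $\bigcup_{\F\in\mathbb{F}_\sigma}\K^\F$ with each fibre $\K^\F$ \emph{signed binary}, i.e.\ cut out by a finite union of finite systems of linear inequalities. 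It therefore suffices to produce a single rescaling-closed, empty-multiteam class whose probability set is not signed binary.

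For the concrete witness I would take the simplest signature admitting two distinct assignments, so that $n=2$ and the simplex $\Delta^{1}$ is a segment parametrised by $\epsilon_1\in[0,1]\cap\mathbb{Q}$ (with $\epsilon_2=1-\epsilon_1$). I then fix an infinite discrete set of rationals, e.g.\ $D=\{\,1-\tfrac1k \mid k\geq 1\,\}\subseteq[0,1]\cap\mathbb{Q}$ (each value realised by $k-1$ copies of one assignment and one copy of the other), and let $\K$ be the class of all causal multiteams of this signature whose probability vector has first coordinate in $D$, together with all empty multiteams. Membership in $\K$ depends only on the probability vector, so $\K$ is closed under rescaling (and under change of laws) and has the empty multiteam property; being nonempty, it is $\PCO^\omega$-definable by Theorem \ref{theorem: expressivity of PCOinf}.

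The crux, and the step I expect to be the only real obstacle, is to verify that $\Pvec_\K$ (equivalently each fibre $\K^\F$, since $\K$ is closed under change of laws) is not signed binary. This reduces to a transparent one-dimensional fact: after eliminating $\epsilon_2=1-\epsilon_1$, each signed binary inequality restricts on the segment $\Delta^1$ to a half-line, a finite system of them to a single (possibly degenerate or empty) closed interval, and a finite union of such systems to \emph{finitely many} intervals and isolated points. Since $D$ consists of infinitely many isolated points, it cannot be written in this form, so $\Pvec_\K$ is not signed binary. By Theorem \ref{thm: characterization of PCO}, $\K$ is therefore not $\PCO$-definable, and the strict inclusion $\PCO<\PCO^\omega$ follows.

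As a sanity check and fully independent backup (this is the route taken in \cite{BarSan2023}), I would also record the counting argument: over a fixed finite signature there are only countably many $\PCO$ formulae, hence only countably many $\PCO$-definable classes, whereas the rational simplex has infinitely many points, so there are $2^{\aleph_0}$ distinct rescaling-closed empty-multiteam classes, all $\PCO^\omega$-definable by Theorem \ref{theorem: expressivity of PCOinf}. This yields $\PCO^\omega\not\leq\PCO$ without exhibiting an explicit witness, and corroborates the structural argument above.
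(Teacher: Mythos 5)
Your proposal is correct and takes essentially the same route as the paper: the paper likewise derives the corollary by playing Theorem \ref{theorem: expressivity of PCOinf} ($\PCO^\omega$ defines arbitrary rescaling-closed probability sets) against Theorem \ref{thm: characterization of PCO} (probability sets of $\PCO$-formulae are cut out by finitely many linear inequalities), merely stating that the strict inclusion ``immediately follows'' where you make the witness explicit via the infinite discrete set $D$, whose exclusion is a sound (if unstated in the paper) instance of the same idea. Your backup counting argument is exactly the alternative proof the paper attributes to \cite{BarSan2023} in a footnote.
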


%%%%%%%%%%%%%%%%%%%%%%%%%%%%%%%%%%%%%%%%%%%%%%%%%%%%%%%%%%
\section{Expressive power of languages with the strict tensor} \label{sec: tensor}

In this section we analyze the expressive power of languages that allow free use of the \emph{strict tensor} operator $\lor$. We will focus on the language $\PCOs$ and its fragments $\Psm$ and $\Ps$, as described in the preliminary section. %\commf{Add $\Psm$?}
To see that these are the only languages with strict tensor that are worth considering for semantic classification purposes, it suffices to observe that the selective implication $\supset$ is definable in terms of $\lor$:
\begin{itemize}
    \item $\alpha \supset \psi$ is equivalent to $\alpha^d \lor \psi$
\end{itemize}
where we notice, importantly, that $\alpha^d$ is a formula without occurrences of $\supset$. 
Thus, $\PO \leq \Ps$. It then immediately follows that the extension of $\PO$ with $\lor$ is equiexpressive to $\Ps$, and that the extension of $\PC$ with $\lor$ is equiexpressive with $\PCOs$.  

% consider an alternative way of extending a language $\La$, %e.g. addingthe language $\Ps$ that extends $\PP$ by
% i.e. allowing free use of the (strict) tensor disjunction $\lor$; the resulting language will be denoted $\La(\lor)$. We will see that $\PO < \Ps$, from which it immediately follows that $\PO(\lor) \equiv \Ps$ and that $\PC(\lor) \equiv \PCOs$.  

%(Unfortunately, our results fall short of a full characterization.\commf{I hope we may get there.})
%Thus, it remains to understand only two new languages,  $\Ps$ and $\PCOs$.  \commf{Should we rather write $\PC(\lor)$, since $\supset$ is eliminable?}

%\commf{I am also in doubt: does everything proved below for $\Ps$ already hold for $\PP^-(\lor)$? In other words, does $\PP^-(\lor) \equiv \Ps$? Did I ever use comparison atoms in the proofs below? I should recheck.}

It is easy to prove that $\Psm, \Ps$ and $\PCOs$ have the empty multiteam property. It is more subtle to prove that they are closed under rescaling. Let us say something more about rescalings. It is easy to see that, for nonempty causal multiteams $S,T$, $S$ is a rescaling of $T$ ($S\sim T$) if and only if there is a positive, nonzero rational number $q$ such that $\#(s,T) = q \cdot \#(s,S)$ for each $s\in \B_\sigma$ (this, in particular, tells us that $\frac{|T^-|}{|S^-|} = q$).  Since clearly such number is unique, we can also write this relation as $T= q S$. In case $q\in \mathbb N$, we say that $T$ is a \textbf{multiple} of $S$. If $T = mR$ and $T=nS$, where $m,n\in \mathbb{N}^+$, we say $T$ is a \textbf{common multiple} of $R$ and $S$. It is not difficult to show (and the details can be found in \cite{BarSan2024}, Lemma A.4) that if $R\sim S$, then they have a common multiple. The same terminology and considerations may be applied to multiteams $R^-,S^-,T^-$.

% \noindent It is obvious that, for any nonempty causal multiteam $S$ and any $n\in \mathbb{N^+}$, there is a unique causal multiteam $T$ of the same signature such that $S \sim T$ and $\frac{|T^-|}{|S^-|} = n$. We also write $T$ as $nS$, and we say that 
% %. In case $q\in \mathbb{N}^+$, we say that
%  $nS$ is a \textbf{multiple} of $S$. If $T = mR$ and $T=nS$, where $m,n\in \mathbb{N}^+$, we say $T$ is a \textbf{common multiple} of $R$ and $S$.

\begin{lemma}[Closure under rescaling]\label{lemma: rescaling for PCOs}
Let $S=(S^-,\F), T=(T^-,\F)$ be causal multiteams of signature $\sigma$, $S \sim T$, $\varphi\in\PCOs_{\sigma}$, and $S\models \varphi$. Then $T\models \varphi$.
\end{lemma}

\begin{proof}

By induction on $\varphi$. All cases are taken care of by the proof of Lemma A.7 %\ref{lemma: closure under rescaling} (appearing in [...])
from \cite{BarSan2024}, with the exception of the case for $\lor$, which we provide here.

Let $\varphi$ be of the form $\psi \lor \chi$. Let $S\models \psi \lor \chi$. Then there are $S_1 = (S_1^-,\F),S_2 = (S_2^-,\F)$ causal submultiteams of $S$ such that $S_1^- \cup S_2^- = T^-$, $S_1^- \cap S_2^- =\emptyset$, $S_1\models \psi$ and $S_2\models \chi$. Now, since $S \sim T$, by the observations above %  lemma \ref{lemma: common multiple},
$S$ and $T$ have a common multiple, say $R = (R^-,\F)$. In particular,  $R = mS$ for some $m\in \mathbb{N}^+$. 

Now define a subset $R_1^-$ of $R^-$ by picking $m\cdot \#(s,S_1^-)$ copies of each $s\in Team((S_1)^-)$ and another $R_2^-\subseteq R^-$ by picking, from $R^-\setminus R_1^-$, $m\cdot \#(s,S_2^-)$  copies of each $s\in Team((S_2)^-)$. %(that is, more precisely: $m\cdot \#(s,S_1^-)$ elements $t$ of $R^-$ such that $t\equiv s$).
By %lemma \ref{lemma: characterization of rescaling},
the considerations above, $(S_1^-,\F) \sim (R_1^-,\F)$ and $(S_2^-,\F) \sim (R_2^-,\F)$. 

%For each $s\in \B_\sigma$, let $i_s:= \#(s, S_1^- \cap S_2^-)$ and  
%$j_s:= \#(s, S_2^- \setminus S_1^-)$. 
%Define a subset $R_2^-$ of $R^-$ by picking, for each $s\in S_2$, $m \cdot i_s$ copies of $s$ from $R_1^-$ and $m \cdot j_s$ copies of $s$ from $R^- \setminus R_1^-$. Since  
%$\#(s,R_2^-) = m \cdot i_s + m \cdot j_s = m \cdot \#(s,S_2^-)$,  
%we can also conclude that $R_2^-\sim S_2^-$.   

By definition, $R_1^- \cap R_2^- = \emptyset$; we can also show that $R_1^- \cup R_2^- = R^-$. In order to do so, we show that these two sets contain the same number of copies for each assignment allowed by the signature. 
Indeed, if $s\in \B_\sigma$, %by definition of multiple
$\#(s,R^-) = m \cdot \#(s, S^-) = m \cdot (\#(s,S_1^-) +  \#(s,S_2^-) ) = m \cdot \#(s,S_1^-) + m \cdot \#(s,S_2^-)$  
%$|[t]_R| = m \cdot |[t]_S| = m \cdot (|[t]_{S_1}|+|[t]_{S_2 \setminus S_1}|) = m \cdot |[t]_{S_1}|+ m\cdot |[t]_{S_2 \setminus S_1}|$
(where in the second equality we used $S^- = S_1^- \cup S_2^-$ and $S_1^- \cap S_2^- =\emptyset$), and by the definition of $R_1^-,R_2^-$ the latter is equal to $\#(s,R_1^-) +  \#(s,R_2^-) = \#(s,R_1^-\cup R_2^-)$. 

Now we can use $R_1^-$ and $R_2^-$ to define, in an analogous way, two disjoint subsets $T_1^- \sim R_1^-\sim S_1^-$ and $T_2^- \sim R_2^- \sim S_2^-$ of $T^-$ such that $T_1^- \cup T_2^- = T^-$. By the inductive hypothesis, $(T_1^-,\F)\models \psi$ and $(T_2^-,\F)\models \chi$. Thus $T\models \psi\lor\chi$. \qed
\end{proof}

%\noindent (We remark that closure under rescaling could also be proved for the \emph{lax} tensor by a slightly more complex argument.)

\noindent The following corollary then follows immediately from Theorem \ref{theorem: expressivity of PCOinf}.

\begin{corollary}\label{cor: PCSOs in PCOomega}
    $\Ps \leq \PCOs < \PCO^\omega$.
\end{corollary}

%First, we point out that, as long as only finite signatures are involved, $\PCOs$ is closed under the weak contradictory negation. We just need to extend the inductive definition given for $\PCO$ with an  additional clause; we let $(\psi\lor\chi)^C$ be
%\[
%\bigwedge_{A\subseteq Ran(\SET W)}\left\{ \left[ \left( \bigvee_{s\in A} \SET W = s(\SET W) \right) \supset \psi^C \right] \sqcup \left[ \left( \bigvee_{s\in Ran(\SET W)\setminus A} \SET W = s(\SET W) \right) \supset \chi^C \right] \right\}.
%\]
%\begin{theorem}\label{thm: contradictory negation}
%For any $\varphi$ in $\PCO_{\sigma}$ and any \emph{nonempty} causal multiteam $T$ of signature $\sigma$,
%\[
%T\models \varphi^C \iff T\not\models \varphi.
%\]
%\end{theorem}

%\begin{proof}
%See the proof of Theorem \ref{???}. We add the inductive case for $\lor$.    

%Suppose $T\models (\psi\lor\chi)^C$. Then, for all $A\subseteq Ran(\SET W)$ either $T^{\bigvee_{s\in A} \SET W = s(\SET W)} \models \psi^C$ or $T^{\bigvee_{s\in Ran(\SET W)\setminus A} \SET W = s(\SET W)}\models \chi^C$. By the inductive assumption, either $T^{\bigvee_{s\in A} \SET W = s(\SET W)} \not\models \psi$ or $T^{\bigvee_{s\in Ran(\SET W)\setminus A} \SET W = s(\SET W)}\not\models \chi$. Suppose for the sake of contradiction that $T\models \psi\lor\chi$, i.e., there is an $S\leq T$ such that $S\models\psi$ and $T\setminus S\models \chi$; since $\PCOs$ cannot... Let $\SET A$ be the set of tuples in $Ran(\SET W)$ such that $s(\SET W)\in\SET A$ for some $s\in S^-$. Then for each $t\in T\setminus $
%\end{proof}

We want to point out that the strict tensor operator has an important property related to inequalities: under a minimum of assumptions, it preserves linearity -- by producing, specifically, the convex hull of the probability sets of the formulas it is applied to. %when applied to formulas that are closed under change of law and under rescaling, it produces a linear formula -- even if the disjuncts themselves were not linear.
The \textbf{convex hull} of a set $\Pvec\subseteq \Delta^{n-1}$ is the set of all linear combinations $a\pvec_1+b\pvec_2$ where $\pvec_1,\pvec_2\in\Pvec$ and $a,b\in \mathbb Q\cap[0,1]$, $a+b=1$. It can be seen to be itself a subset of $\Delta^{n-1}$.

\begin{lemma}\label{lemma: strict tensor as convex hull}
Let $\psi,\chi$ be formulas such that are closed under rescaling, and suppose furthermore that either  $\psi$ or $\chi$ is closed under change of laws. Then $\Pvec_{\psi\lor\chi}$ is the convex hull of $\Pvec_\psi \cup \Pvec_\chi$.    
\end{lemma}

\begin{proof}
We prove the statement in case $\chi$ is closed under change of laws; the case for $\psi$ is analogous.

In one direction, suppose $\pvec\in\Pvec_{\psi\lor\chi}$, i.e. $\pvec= \pvec_T$ for some $T\in \K_{\psi\lor\chi}$. Then there are $T_1, T_2\leq T$ with $T_1^-\cap T_2^-= \emptyset$, $T_1^- \cup T_2^- = T^-$, $T_1\models\psi$ and $T_2\models \chi$. For any $s\in\B_\sigma$, $\epsilon_s^T = \frac{\#(s,T_1) + \#(s,T_2)}{|T^-|} = \frac{|T_1^-|}{|T^-|} \epsilon_{s}^{T_1} + \frac{|T_2^-|}{|T^-|} \epsilon_{s}^{T_2}$. Since the terms $\frac{|T_1^-|}{|T^-|}$ and $\frac{|T_2^-|}{|T^-|}$ do not depend on $s$, we have $\pvec_T =  \frac{|T_1^-|}{|T^-|}\pvec_{T_1} + \frac{|T_2^-|}{|T^-|}\pvec_{T_2}$ (as vectors). Since $\frac{|T_1^-|}{|T^-|}+\frac{|T_2^-|}{|T^-|}=1$, this means that $\pvec_T$ is in the convex hull of $\{\pvec_{T_1}\}\cup \{\pvec_{T_2}\}$ -- and thus of $\Pvec_{\psi} \cup \Pvec_{\chi}$.

Conversely, suppose $\pvec$ is in the convex hull of $\Pvec_{\psi} \cup \Pvec_{\chi}$.
%If $\pvec\in \Pvec_{\psi}$, then there is a $T\models \psi$ such that $\pvec_T =\pvec$. By the empty multiteam property, we also have $T\models \psi\lor\chi$, and thus $\pvec\in\Pvec_{\psi\lor\chi}$. The case in which $\pvec\in \Pvec_{\K_\chi}$ is analogous. 
%If instead $\pvec$ is neither in $\Pvec_{\K_\psi}$ nor in $\Pvec_{\K_\chi}$, 
Then there are $T_1=(T_1^-,\F)\in\K_\psi, T_2=(T_2^-,\G)\in \K_\chi$ and $a,b\in[0,1]\cap\mathbb Q$ such that $a+b=1$ and $\pvec= a\pvec_{T_1}+b\pvec_{T_2}$. We want then to build a causal multiteam $T=(T^-,\F)$ that satisfies $\psi\lor\chi$ and such that $\pvec_T = \pvec$. 
%We can require $a,b\in \mathbb Q$ because...

Let $c= a \cdot |T_2^-|\cdot k$ and $d= b \cdot |T_1^-|\cdot k$, where $k$ is a common denominator of $a \cdot |T_2^-|$, $b \cdot |T_1^-|$. Thus defined, $c,d$ are natural numbers, so we can define $T^-$ as the multiteam that has $c\cdot \#(s,T_1)+ d\cdot \#(s,T_2)$ copies of assignment $s$ (for each $s\in \B_\sigma$). Now observe that the size of $T^-$ is then $c\cdot|T_1^-| + d\cdot|T_2^-| = a\cdot|T_2^-|\cdot|T_1^-|\cdot k + b\cdot|T_1^-|\cdot|T_2^-|\cdot k = k\cdot|T_1^-|\cdot|T_2^-| \cdot (a+b) = k\cdot|T_1^-|\cdot|T_2^-|$. Thus 
\begin{align*}
\epsilon_s^T &= \frac{c\cdot \#(s,T_1)+ d\cdot \#(s,T_2)}{k\cdot|T_1^-|\cdot|T_2^-|} = \frac{a \cdot |T_2^-|\cdot k\cdot \#(s,T_1)+ b \cdot |T_1^-|\cdot k\cdot \#(s,T_2)}{k\cdot|T_1^-|\cdot|T_2^-|} \\
&= a\cdot \frac{\#(s,T_1)}{|T_1^-|} + b\cdot \frac{\#(s,T_2)}{|T_2^-|} = a\epsilon_s^{T_1} + b \epsilon_s^{T_2}.
\end{align*}
Thus $\pvec_T = a \pvec_{T_1} + b \pvec_{T_2} = \pvec$, as needed. 

Now observe that $T^-$ is the disjoint union of two rescalings of $T_1^-$, resp. $T_2^-$ (call them $S_1^-,S_2^-$). 
Since $\K_\chi$ is closed under change of laws, we have that $(T_2^-,\F)\models \chi$; and then, since $\K_\psi, \K_\chi$ are closed under rescaling, $(S_1^-,\F)\models \psi$ and $(S_2^-,\F)\models\chi$. Thus $T\in \K_{\psi\lor\chi}$. %We want to show that $\pvec=\pvec_T$, from which it would immediately follow that $\pvec \in \Pvec_{\psi\lor^\textsf{s}\chi}$. 
We can then conclude that $\pvec_T = \pvec \in \Pvec_{\psi\lor\chi}$.    \qed
\end{proof}

We say that a formula $\varphi$ is \textbf{linear} if $\Pvec_{\K_\varphi}$ is a set of solutions to a finite union of finite systems of linear inequalities.

\begin{lemma}\label{lemma: strict tensor preserves linearity}
Suppose $\psi,\chi$ %$\K_\psi, \K_\chi$ %,\K_{\psi\lor\chi}$
are closed under rescaling and linear, and moreover %$\K_\chi$
either $\psi$ or $\chi$ is closed under change of laws. %\commf{We would not need this last technical requirement in the context of \emph{generalized} multiteams.}.
Then, $\psi\lor\chi$ %$\K_{\psi\lor\chi}$
is linear.
\end{lemma} 

\begin{proof}
By the assumptions, we know from Lemma \ref{lemma: strict tensor as convex hull} that $\Pvec_{\psi\lor\chi}$ is the convex hull of $\Pvec_\psi \cup \Pvec_\chi$ (which, being a finite union of linear sets, is itself a linear set). We can then use the general fact that the convex hull of any linear set %subset of $\mathbb Q^n\cap [0,1]^n$
 is still a linear set. \qed 
\end{proof} 

%It is also easy to prove that if $\psi,\chi$ are closed under rescalings, then also $\psi\lor\chi$ is. Indeed, suppose $T=(T^-,\F)\models \psi\lor\chi$. Then there are disjoint $T_1,T_2 \leq T$ such that $T_1\models \psi$ and  $T_1\models \chi$. Since $\psi,\chi$ are closed under rescalings, $
%mT_1\models \psi$ and  $mT_1\models \chi$. But then $mT = m(T_1^- \cup T_2^-,\F)\models \psi\lor\chi$. 

%It is also straightforward to see that $\Ps$ is closed under change of laws.

% \begin{lemma}\label{lemma: PS closed under change of laws}
% Let $\varphi$ be a formula of $\Ps$. Then $\K_\varphi$ is closed under change of laws.
% \end{lemma}

\begin{corollary} \label{cor: PS is linear}
   If $\varphi\in\Ps$, then $\Pvec_\varphi$ is linear.
\end{corollary}

\begin{proof}
We already know (Theorem \ref{thm: characterization of P}) that the atomic formulas have linear probability sets. We must prove that the connectives preserve linearity. We already know that this holds for $\lor$ by Lemma \ref{lemma: strict tensor preserves linearity} (which can be applied since the $\Ps$ formulas are closed under rescaling, by Lemma \ref{lemma: rescaling for PCOs}; and under change of laws, by Lemma \ref{lemma:trivial laws}). The proof that $\land,\sqcup$ preserve linearity is analogous to the proof of Lemma \ref{lemma: preserve polytopes}. \qed  
\end{proof}

The ideas from the proof of lemma \ref{lemma: strict tensor preserves linearity} can be extended to show that the language $\Ps$ can ``capture'' all polytopes.

%\commf{The following proofs are sketchy, but I am quite convinced of the truth of the claims...}

\begin{lemma}\label{lemma: polytopes as convex hulls}
Every polytope\footnote{Note that the definition of polytope we have given ensures that a polytope is a convex set.} of $\Delta^{n-1}$ is the convex hull of a finite set of points.
\end{lemma}

\begin{proof}
Let $C$ be a polytope. It is defined by a finite number of inequalities $i_1,\dots,i_l$. Consider the corresponding equalities $e_1,\dots,e_l$, together with the equations $e_{l+1},\dots e_{l+n}$ that define the facets of $\Delta^{n-1}$ of dimension $n-2$. Now let $p_1,\dots,p_k$ be all the points of $\Delta^{n-1}$ that are intersections of spaces defined (in $\Delta^{n-1}$) by $n-1$ equations from $e_1,\dots, e_{l+n}$.\footnote{There may be $n-1$ tuples of such equations that have as intersection not just a point, but a space of larger dimensions (or the empty set). We ignore these kinds of intersections.} Clearly $C$ is the convex hull of $\{p_1,\dots,p_k\}$. %\commf{Very sketchy, I admit}
\end{proof}

\begin{theorem}\label{thm: PS captures polytopes}
For every polytope $\Pvec$, there is a formula $\varphi\in \Psm$ such that $\Pvec_{\varphi}=\Pvec$.
\end{theorem}

\begin{proof}
By Lemma \ref{lemma: polytopes as convex hulls}, $\Pvec$ is the convex hull of a finite number $k$ of points. Each such point is definable in $\Delta^{n-1}$ by $2n-2$ monic inequalities -- say, the point $p_j$ of coordinates $(a^j_1,\dots,a^j_{n-1},1-(a^j_1,\dots,a^j_{n-1}))$ is defined by the equalities $\epsilon_1 = a^j_1,\dots, \epsilon_{n-1} = a^j_{n-1}$, which are equivalent to pairs of inequalities. By Theorem \ref{thm: characterization of P minus}, for each of these inequalities there is a formula $\psi^j_i$ in $\PP^- \subseteq \Psm$ such that $\Pvec_{\psi_i^j}$ is the subset of $\Delta^{n-1}$ defined by the corresponding inequality. %(just take the conjunction of the two formulas that the define the two corresponding inequalities).
Thus $\Pvec$ is the convex hull of the sets $\Pvec_{\bigwedge_{i}\psi^j_i}$, for $j=1,\dots,k$. Noting now that the convex hull of $k$ points can be obtained iteratively by taking the convex hull of two points $p_1,p_2$, then the convex hull of the resulting set and $p_3$, and so on, we can apply $k-1$ times %the construction used in the proof of lemma \ref{lemma: strict tensor preserves linearity}
Lemma \ref{lemma: strict tensor as convex hull} to prove that the formula $\varphi=\bigvee_{j=1..k}\bigwedge_{i = 1..2n-2}\psi^j_i$ is such that $\Pvec_\varphi= \Pvec$.
\end{proof}

\begin{corollary}\label{cor: Ps separation}
    $\PO < \Ps$ (and thus $\PP < \Ps$). %and $\PO(\lor) \equiv \Ps$).
\end{corollary}

 \begin{proof}
        As already mentioned, $\PO \leq \Ps$, since every formula of the form $\alpha\supset\psi$ is equivalent to $\alpha^d\lor\psi$ (where $\alpha^d$ is given by recursive clauses without occurrences of $\supset$).

    Now let us prove the strict inclusion; for this refer to Figure \ref{figure: Ps undefinable in PO}. Let us work in the tridimensional space, and consider the linear inequality $2\epsilon_1 + \epsilon_2 \vartriangleleft 1$ and the equation of its surface, $2\epsilon_1 + \epsilon_2 = 1$. Its intersection with $\Delta^2$ is a line $N$ with the following properties: 1) its projection on the $(\epsilon_1,\epsilon_2)$ plane (the thick line in figure 5) does not pass through the origin, and 2) it is not parallel to any side of $\Delta^2$, since it intersects each of them.

    Note that $N$, being a polytope, is the probability set of a formula $\varphi$ of $\PP(\lor)$ (by Theorem \ref{thm: PS captures polytopes}). We show that it is not a %cannot be obtained as a finite union of finitely many
    signed binary probability set; thus, by Theorem \ref{thm: characterization of PO}, it is not the probability set of a $\PO$ formula, and then $\varphi$ cannot be equivalent to any $\PO$ formula.

    To this effect, we first show that $N$ cannot be included, as a subset, in the perimeter of any signed binary probability set that is defined by a single inequality. Such a set is defined either by a monic inequality or by a signed binary inequality with constant coefficient $0$. In the former case, $N$ intersects the perimeter (a line) of such space in at most one point, by 2) (remember from the proof of lemma \ref{lemma: not all signed monic are monic} that lines defined by a monic equality are parallel to the sides of the simplex). In the latter case, $N$ intersects such a line in at most one point by 1) (since all lines defined by a signed binary equality of constant coefficient $0$ pass through the origin when projected to the $(\epsilon_1,\epsilon_2)$ plane). Thus, $N$ is not included in the boundary of such probability spaces.

    The proof is then extended to general signed binary probability spaces as in Lemma \ref{lemma: not all signed monic are monic}.
    \end{proof}

% \begin{proof}
% As already mentioned, $\PO \leq \Ps$, since every formula of the form $\alpha\supset\psi$ is equivalent to $\alpha^d\lor\psi$ (where $\alpha^d$ is given by recursive clauses without occurrences of $\supset$).

% Now let us prove the strict inclusion; for this refer to Figure \ref{figure: Ps undefinable in PO}. Fix two distinct variables $\epsilon_i,\epsilon_j$, and let $\Pvec$ be a signed binary probability set. Write $\Pvec_{ij}$ for the projection of $\Pvec$ on the $(i,j)$-plane, which is a subset of the projection of $\Delta^{n-1}$ on the $(i,j)$-plane (call this latter set $\Delta^2_c$). 
%  By the definition of signed binary, it is clear that each facet of $\Pvec_{ij}$ satisfies exactly one of the following: 

% 1) It is parallel to $\Delta^1 = \{(a,b) \mid a+b =1\}$.

% 2) It passes through the origin.

% 3) It does not intersect both the $i$-axis and the $j$-axis.

% Now consider the subset of $\Delta^2_c$ given by the equation $2\epsilon_i+\epsilon_j=1$. This set intersects the $i$-axis in $(\frac{1}{2},0)$ and the $j$-axis in $(0,1)$, so it satisfies none of the conditions 1)-2)-3). Thus, the subset of $\Delta^{n-1}$ given by the inequality $2\epsilon_i+\epsilon_j\leq 1$ is not signed binary, and thus it is not the probability set of any $\PO$ formula (by Theorem \ref{thm: characterization of PO}). On the other hand, since it is a polytope, by Theorem \ref{thm: PS captures polytopes} it is the probability set of a $\Ps$ formula. \qed
% \end{proof}

\begin{figure}
    \centering
    \begin{tikzpicture}[scale=2.5]
        % Axes
        \draw[->] (-0.2, 0) -- (1.2, 0) node[right] {$\epsilon_1$};
        \draw[->] (0, -0.2) -- (0, 1.2) node[above] {$\epsilon_2$};

        % Simplex
        \filldraw[fill=gray!20, draw=black] (0, 0) -- (0, 1) -- (1, 0) -- cycle;

        % Definables and not

        \draw[line width = 0.7mm] (0.5,0) -- (0,1);
        
        \draw[dashed] (0,0) -- (0.25,0.75);
       % \draw[dashed] (0.1,0) -- (0.54,0.45);
        \draw[dashed] (0.3,0) -- (0.3,0.7);
        \draw[dashed] (0,0.1) -- (0.9,0.1);
        \draw[dashed] (0.8,0) -- (0,0.8);

        % Labels
        \node[below left] at (0, 0) {(0, 0)};
        \node[above] at (-0.22, 0.9) {(0, 1)};
        \node[below right] at (0.8, 0) {(1, 0)};
    \end{tikzpicture}
    \caption{Projection of the standard simplex in the $(\epsilon_1,\epsilon_2)$-plane. The thick line is the frontier of the object defined by the inequality $2\epsilon_1+\epsilon_2\leq 1$. The dashed lines exemplify frontiers of $\PO$-definable sets.}
    \label{figure: Ps undefinable in PO}
\end{figure}

%\noindent Since, by lemma \ref{lemma: not all linear are signed monic}, not all polytopes are defined by signed monic inequalities (take for example the signed binary polytope defined by the inequality $\epsilon_1-2\epsilon_2 \geq 0$), comparing this result with theorem \ref{thm: characterization of P} we have:
%\[
%\PP < \PCOs.
%\] 
%Similarly, considering the polytope defined by the inequality $\epsilon_1+2\epsilon_2 \geq 0$ (which is linear but not signed binary), we obtain, by theorems \ref{thm: characterization of PO} and \ref{thm: PS captures polytopes} plus an argument similar to that for lemma \ref{lemma: not all signed monic are monic}, that \emph{language $\Ps$ is not included in $\PO$} (we do not know whether the opposite inclusion may hold).

We remark that definability of polytopes does not exhaust the expressivity of language $\Psm$, since there are already $\PP^-$ formulas that capture the probability set of inequalities such as $\epsilon_1+\epsilon_2 > \frac{1}{2}$, that are clearly not (finite unions of) polytopes.

Let us call a \textbf{semipolytope} any subset of $\Delta^{n-1}$ defined by a finite system of linear inequalities in $\epsilon_1,\dots,\epsilon_n$.

\begin{theorem}\label{thm: PS captures semipolytopes}
For every semipolytope $\Pvec$, there is a formula $\varphi\in \Psm$ such that $\Pvec_{\varphi}=\Pvec$.
\end{theorem}

\begin{proof}
We proceed by induction on the dimension of the semipolytope. Figure \ref{figure: defining semipolytopes in PCOs} illustrates our construction of the formula $\varphi$ from a 2-dimensional semipolytope. %\commf{Why does it say Figure 4 instead of Figure 3?}

Observe first that any single point is a polytope, thus definable in $\Psm$. The same goes for any segment that includes the extremities (since it is the part of a line delimited by two parallel hyperplanes). A segment without one or both extremities is not a polytope, but it is still definable in $\Psm$; the extremity can be removed by taking the intersection of the segment with a (monic) open half-space. Thus, all semipolytopes of  dimension $\leq 1$ are definable in $\Psm$.

Now suppose we have proved the claim for semipolytopes of dimension $\leq n$, and let $\Pvec$ be a semipolytope of dimension $n+1$. Its frontier consists of a finite number of facets $F_1,\dots,F_k\subset \Pvec$ plus possibly other facets not included in $\Pvec$. The facets $F_1,\dots,F_k\subset \Pvec$ are sets of dimension $\leq n$, and we can assume wlog that they do not contain any point of their own, lower-dimensional frontier (so they can consist of a single point, a segment without the two extremities, a polygon without the perimeter, etc.); this guarantees that they are semipolytopes. Let us also enumerate the vertices of $\Pvec$ as $v_1,\dots,v_s$ (which may or may not belong to $\Pvec$).

Fix a point $a$ in the interior of $\Pvec$. For each vertex $v_m$, let $l_m$ be the segment $[a,v_m)$. $l_m\subseteq \Pvec$, since $\Pvec$ is a convex set. By the base cases, $l_m$ is definable by a $\Psm$ formula $\psi_m$. Then, since, by %the proof of lemma \ref{lemma: strict tensor preserves linearity}
Lemma \ref{lemma: strict tensor as convex hull}, applying $\lor$ to two formulas gives a formula whose probability set is the convex hull of the probability sets of the two formulas, the interior of $\Pvec$ (call it $I$) is defined by the $\Psm$ formula $\psi := \psi_1 \lor \dots \lor \psi_s$. Now, by the induction hypothesis, each of the $F_i$ is definable by a $\Psm$ formula $\chi_i$. Thus, finally, $\Pvec$ is definable by the $\Psm$ formula $\psi \sqcup \chi_1\sqcup\dots\sqcup \chi_k$. \qed
%\commf{While proving this I imagined 3-dimensional polyhedra. I hope nothing fails in higher dimensions.}
\end{proof}

\begin{figure}%\label{figure: defining semipolytopes in PCOs}
$$\begin{tikzpicture}[scale=0.75]
%\draw[help lines, ultra thin] (0,0) grid (13,5);
% top line, containing top pentagon, which is a shift of the bottom left pentagon
\node (x) at (3,3) {How to capture};
\draw[very thick,dashed,shift={(6,3)}] (.1,1)--(-1,.2)--(-.5,-.9)--(.7,-.9)--(1.2,.3)--cycle;
% ombreggiatura
\fill [gray, opacity=.4,shift={(6,3)}] (.1,1)--(-1,.2)--(-.5,-.9)--(.7,-.9)--(1.2,.3)--cycle;
% extra stuff
\draw[very thick,shift={(6,3)}] (.1,1)--(-1,.2);
\draw[very thick,shift={(6,3)}] (-.5,-.9)--(.7,-.9);
\fill[shift={(6,3)}] (.1,1)  circle (0.08);
\fill[shift={(6,3)}] (-1,.2)  circle (0.08);
\fill[shift={(6,3)}] (-.5,-.9) circle (0.08);
\fill[shift={(6,3)}] (.7,-.9)  circle (0.08);
\fill[shift={(6,3)}] (1.2,.3)  circle (0.08);
\fill[white,shift={(6,3)}] (.1,1)  circle (0.06);
\fill[white,shift={(6,3)}] (1.2,.3)  circle (0.06);
\node (x) at (8.8,3) {in $\mathcal{PCO}(\lor)$?};
%%%%%%%%%%%%%%% bottom line
% first pentagon on the left. All other pentagons are shifts of this one
\draw[very thick,dashed] (.1,1)--(-1,.2)--(-.5,-.9)--(.7,-.9)--(1.2,.3)--cycle;
% punto p
\fill (0,0)  circle (0.08);
% extra stuff
\node (x) at (.2,0) {\small $a$};
% didascalia
\node (x) at (0,-1.5) {\small Pick a point $a$};
\node (x) at (0,-2) {\small in the interior.};
\node (x) at (2,0) {$\mapsto$};
%% second pentagon
\draw[very thick,dashed,shift={(4,0)}] (.1,1)--(-1,.2)--(-.5,-.9)--(.7,-.9)--(1.2,.3)--cycle;
% punto p
\fill (4,0)  circle (0.08);
% extra stuff
\draw[very thick,shift={(4,0)}] (0,0)--(.1,1);
\draw[very thick,shift={(4,0)}] (0,0)--(-1,.2) ;
\draw[very thick,shift={(4,0)}] (0,0)--(-.5,-.9);
\draw[very thick,shift={(4,0)}] (0,0)--(.7,-.9);
\draw[very thick,shift={(4,0)}] (0,0)--(1.2,.3);
\fill[shift={(4,0)}] (.1,1)  circle (0.08);
\fill[shift={(4,0)}] (-1,.2)  circle (0.08);
\fill[shift={(4,0)}] (-.5,-.9) circle (0.08);
\fill[shift={(4,0)}] (.7,-.9)  circle (0.08);
\fill[shift={(4,0)}] (1.2,.3)  circle (0.08);
\fill[white,shift={(4,0)}] (.1,1)  circle (0.06);
\fill[white,shift={(4,0)}] (-1,.2)  circle (0.06);
\fill[white,shift={(4,0)}] (-.5,-.9) circle (0.06);
\fill[white,shift={(4,0)}] (.7,-.9)  circle (0.06);
\fill[white,shift={(4,0)}] (1.2,.3)  circle (0.06);
\node (x) at (4.3,.5) {\small $\psi_1$};
\node (x) at (4.7,-0.1) {\small $\psi_2$};
\node (x) at (4.2,-.6) {\small $\psi_3$};
\node (x) at (3.53,-.3) {\small $\psi_4$};
\node (x) at (3.6,.32) {\small $\psi_5$};
% didascalia
\node (x) at (4,-1.5) {\small Draw half-lines};
\node (x) at (4,-2) {\small from $a$ to vertices.};
\node (x) at (4,-2.5) {\small By induction, defined};
\node (x) at (4,-3) {\small by formulas $\psi_1,\dots,\psi_5$.};
\node (x) at (6,0) {$\mapsto$};
%% third pentagon
\draw[very thick,dashed,shift={(8,0)}] (.1,1)--(-1,.2)--(-.5,-.9)--(.7,-.9)--(1.2,.3)--cycle;
% punto p
\fill (8,0)  circle (0.08);
% ombreggiatura
\fill [gray, opacity=.4,shift={(8,0)}] (.1,1)--(-1,.2)--(-.5,-.9)--(.7,-.9)--(1.2,.3)--cycle;
% extra stuff
\draw[very thick,shift={(8,0)}] (0,0)--(.1,1);
\draw[very thick,shift={(8,0)}] (0,0)--(-1,.2) ;
\draw[very thick,shift={(8,0)}] (0,0)--(-.5,-.9);
\draw[very thick,shift={(8,0)}] (0,0)--(.7,-.9);
\draw[very thick,shift={(8,0)}] (0,0)--(1.2,.3);
\fill[shift={(8,0)}] (.1,1)  circle (0.08);
\fill[shift={(8,0)}] (-1,.2)  circle (0.08);
\fill[shift={(8,0)}] (-.5,-.9) circle (0.08);
\fill[shift={(8,0)}] (.7,-.9)  circle (0.08);
\fill[shift={(8,0)}] (1.2,.3)  circle (0.08);
\fill[white,shift={(8,0)}] (.1,1)  circle (0.06);
\fill[white,shift={(8,0)}] (-1,.2)  circle (0.06);
\fill[white,shift={(8,0)}] (-.5,-.9) circle (0.06);
\fill[white,shift={(8,0)}] (.7,-.9)  circle (0.06);
\fill[white,shift={(8,0)}] (1.2,.3)  circle (0.06);
% didascalia
\node (x) at (8,-1.5) {\small Take convex hull};
\node (x) at (8,-2) {\small of half-lines.};
\node (x) at (8,-2.5) {\small Defined by};
\node (x) at (8,-3) {\small $\chi:\psi_1\lor\dots\lor\psi_5$.};
\node (x) at (10,0) {$\mapsto$};
%% fourth pentagon
\draw[very thick,dashed,shift={(12,0)}] (.1,1)--(-1,.2)--(-.5,-.9)--(.7,-.9)--(1.2,.3)--cycle;
% ombreggiatura
\fill [gray, opacity=.4,shift={(12,0)}] (.1,1)--(-1,.2)--(-.5,-.9)--(.7,-.9)--(1.2,.3)--cycle;
% extra stuff
\draw[very thick,shift={(12,0)}] (.1,1)--(-1,.2);
\draw[very thick,shift={(12,0)}] (-.5,-.9)--(.7,-.9);
\fill[shift={(12,0)}] (.1,1)  circle (0.08);
\fill[shift={(12,0)}] (-1,.2)  circle (0.08);
\fill[shift={(12,0)}] (-.5,-.9) circle (0.08);
\fill[shift={(12,0)}] (.7,-.9)  circle (0.08);
\fill[shift={(12,0)}] (1.2,.3)  circle (0.08);
\fill[white,shift={(12,0)}] (.1,1)  circle (0.06);
\fill[white,shift={(12,0)}] (1.2,.3)  circle (0.06);
\node (x) at (10.73,.3) {\small $\theta_1$};
\node (x) at (11.4,.8) {\small $\theta_2$};
\node (x) at (12,-1.2) {\small $\theta_3$};
% didascalia
\node (x) at (12,-1.8) {\small Add missing facets};
\node (x) at (12,-2.3) {\small (which by induction are};
\node (x) at (12,-2.8) {\small defined by $\theta_1,\theta_2,\theta_3$)};
\node (x) at (12,-3.3) {\small by taking the formula};
\node (x) at (12,-3.8) {\small $\chi\sqcup\theta_1\sqcup\theta_2\sqcup\theta_3$.};
\end{tikzpicture}$$
\caption{How to define a 2-dimensional semipolytope in $\PCOs$.}\label{figure: defining semipolytopes in PCOs}
\end{figure}

% \begin{figure}
% \begin{center}
% \includegraphics[scale=0.35,angle=90]{semipol}
% \end{center}
% \caption{How to define a 2-dimensional semipolytope in $\PCOs$.}
% \end{figure}

\begin{corollary}\label{cor: PS captures linear sets}
 Any linear subset of $\Delta^{n-1}$ is the probability set of a $\Psm$ formula.   
\end{corollary}

\begin{proof}
Any such set $S$ is a finite union of semipolytopes $S_1,\dots,S_k$. By theorem \ref{thm: PS captures semipolytopes}, each $S_i$ is the probability set of some formula $\psi_i\in\Psm$.   Thus $S$ is the probability set of the $\Psm$ formula $\psi_1\sqcup\dots\sqcup\psi_k$. 
\end{proof}

For any signature $\sigma$, $\K^\sigma_\bot$ denotes the set of all empty causal multiteams of signature $\sigma$. Write $\mathbb C_\sigma$ for the set of all causal multiteams of signature $\sigma$.

\begin{corollary}
    The language $\Psm$ is closed under weak contradictory negation; i.e., whenever a set $\K$ of causal multiteams of signature $\sigma$ is definable in $\Psm_\sigma$, also  $(\mathbb C_\sigma \setminus \K) \cup \K^\sigma_\bot$ is.
\end{corollary}

\begin{proof}
    Suppose $\K$ is defined by $\varphi\in\Psm_\sigma$. Then, by Corollary \ref{cor: PS is linear}, $\Pvec_\varphi$ is linear, i.e. the union of finitely many semipolytopes $S_1\cup\dots\cup S_k$. Its complement is $\Delta^{n-1} \setminus (S_1\cup\dots\cup S_k) = (\Delta^{n-1} \setminus S_1) \cap \dots \cap (\Delta^{n-1} \setminus S_k)$.

    Now, it is not difficult to see that the complement of a semipolytope is a finite union of semipolytopes\footnote{If $S$ is a semipolytope defined by a system of inequalities $e_1,\dots,e_l$, let $\overline e_i$ ($i=1\dots l$) be the ``complementary'' inequality obtained replacing $\geq$ with $<$, $\leq$ with $>$ and vice versa. Then $\Delta^{n-1} \setminus S$ is the union of the semipolytopes defined by $\overline e_1,\dots,\overline e_l$, respectively.}; say, $\Delta^{n-1} \setminus S_i = A_1^1 \cup\dots\cup A_n^1$, where we can assume wlog that the number $n$ of these sets is the same for each $i$.  Thus, $\Delta^{n-1} \setminus (S_1\cup\dots\cup S_k) = \bigcup_{\pi: \{1,\dots, n\}\rightarrow\{1,\dots, n\}}(A^1_{\pi(1)}\cap \dots\cap A^n_{\pi(n)})$. Since it can be easily proved (along the lines of Lemma \ref{lemma: preserve polytopes}) that the intersection of semipolytopes is a semipolytope, we conclude that $\Delta^{n-1} \setminus \Pvec_\varphi$ itself is a finite union of semipolytopes, i.e. a linear set. Thus, by Corollary \ref{cor: PS captures linear sets}, there is a formula $\theta\in \Psm$ such that $\Pvec_\theta = \Delta^{n-1} \setminus \Pvec_\varphi$. Since, by the empty team property and the closure of $\Psm$ under rescalings and change of causal laws, %Lemma \ref{lemma: PS closed under change of laws},
    both $\K$ and $\K_\theta$ are maximal among sets of causal multiteams of signature $\sigma$  having probability set $\Pvec_\varphi$, resp. $\Pvec_\theta$, we conclude that $\K\cup\K_\theta = \mathbb C_\sigma$, and that $\K^\sigma_\bot \subseteq \K_\theta,\K$. Thus $\K_\theta= (\mathbb C_\sigma \setminus \K)\cup \K^\sigma_\bot$. \qed 
\end{proof}

\noindent We do not know, however, whether the weak contradictory negation is definable in any syntactic sense.

\begin{theorem}\label{thm: characterization of Ps}
A class $\K$ of multiteams of signature $\sigma$ is definable by a formula of $\Psm$ (resp. $\Ps$) if and only if: 
\begin{enumerate}
\item $\K$ contains all empty causal multiteams of signature $\sigma$
\item $\K$ is closed under change of laws
\item $\K$ is closed under rescaling
\item $\K$ is linear.
\end{enumerate}
%\commf{Should we compress this statement? Or expand the earlier ones?}
\end{theorem}

%\begin{proof}
%I THINK WE SHOULD BE ABLE TO PROVE IT NOW USING THE FACT THAT PCOS IS CLOSED UNDER WEAK CONTRADICTORY NEGATION.    \commf{Is it? Let's see.}
%\end{proof}

%\commf{Here I commented out a putative result that formulas of P+strict tensor are characterized by linear inequalities. False? Certainly the proof idea was wrong.}

\begin{proof}
We first prove the statement for $\Psm$.

$\Rightarrow$) Suppose that $\K$  is definable by a formula $\varphi$ of $\Psm$. %In particular, $\varphi$ is a formula of $\PCO$; so, by theorem \ref{theorem: expressivity of PCOinf}, $\K$ is %polarized and
By Lemma \ref{lemma: rescaling for PCOs}, $\K$ is closed under rescaling.    %lem closed under %equivalence and   rescaling.
By lemma \ref{lemma:trivial laws}, $\K$ is closed under change of laws. By Corollary  \ref{cor: PS is linear}, $\K$ is linear. Finally, by Corollary \ref{cor: PCSOs in PCOomega} $\K$ is definable in $\PCO^\omega$, and so by Theorem \ref{theorem: expressivity of PCOinf} it contains all empty causal multiteams of signature $\sigma$.

$\Leftarrow$) Since $\K$ is linear, Corollary \ref{cor: PS captures linear sets} guarantees that $\Pvec_\K$ is the probability set of some $\Psm$ formula $\varphi$. The other three conditions ensure that $\K$ is the largest set of causal multiteams of signature $\sigma$ with probability set $\Pvec_\K$. The analogous three properties of $\varphi$ guarantee that also $\K_\varphi$ is the largest such set. Thus $\K=\K_\varphi$.

For the case of $\Ps$, the $\Rightarrow$ direction is proved in the same way (remember that Corollary \ref{cor: PCSOs in PCOomega} and the relevant lemmas apply to $\Ps$ formulas as well). The $\Leftarrow$ direction holds a fortiori given the result for $\Psm$. \qed
%
%The proof proceeds as in theorem \ref{thm: characterization of P}. We just need to show that, for any system of linear inequalities (that we may assume wlog to have integer coefficients) there is a formula $\varphi$ of $\Ps$ that has as probability set precisely the set of solutions to the system of inequalities. $\varphi$ will be of the form $\bigsqcup_{j\in J} \bigwedge_{i\in I_j} \psi_{i}$, where each $\psi_i$ describes the solution set of a linear inequality. Such $\psi_i$ formulas exist in $\Ps$ by Corollary \ref{cor: PS captures linear sets}.
%
%Let then $a_1\epsilon_1 + \dots + a_n\epsilon_n \vartriangleleft b$ be a linear inequality with integer coefficients. Let $k$ be $\operatorname{max}\{|a_1|,\dots,|a_n|\}$. Consider the system of signed monic linear inequalities (one for each $l\in {1,\dots,k}$)
%$$a_1^l\epsilon_1 + \dots + a_n^l\epsilon_n \vartriangleleft \frac{b}{k}$$
%where the coefficients are 
%\[
%a_m^l = \left\{\begin{array}{ll}
%1 & \text{if } 0<a_m\geq l \\
%-1 & \text{if } -l \leq a_m <0 \\
%0 & \text{otherwise}
%\end{array}
%\right.
%\]
%These coefficients are chosen so that, for each $m\in\{1,\dots,n\}$, $a_m = \sum_{l\in\{1,\dots,k\}} a_m^k$, and furthermore $b= \sum_{l\in\{1,\dots,k\}} \frac{b}{k}$. 
\end{proof}

\begin{corollary}
    $\Psm \equiv \Ps$; in particular, $\Ps$ is closed under weak contradictory negation.
\end{corollary}

We can now move towards a characterization of language $\PCOs$.  %\commf{Should we rather write $\PC(\lor)$, since $\supset$ is eliminable?}

%****************************************************

%CONSTRUCTION SITE

\begin{lemma}\label{lemma: cf distributes over strict tensor}
  The operator $\cf$ distributes over the strict tensor $\lor$, i.e., for any $\psi,\chi\in\PCOs$, $\SET X = \SET x \cf (\psi\lor\chi)$ is equivalent to $(\SET X = \SET x \cf \psi) \lor (\SET X = \SET x \cf \chi)$.  
\end{lemma}

\begin{proof}
    $\Leftarrow$) Suppose $T\models (\SET X = \SET x \cf \psi) \lor (\SET X = \SET x \cf \chi)$.
Then there are submultiteams $T_1 = (T_1^-,\F)$ and $T_2 = (T_2^-,\F)$ of $T$, with $T_1^-\cup T_1^-=T^-$ and $T_1^-\cap T_2^- = \emptyset$, such that $T_1\models \SET X = \SET x \cf\psi$ and $T_2\models \SET X = \SET x \cf\chi$. Thus, $(T_1)_{\SET X = \SET x}\models \psi$ and $(T_2)_{\SET X = \SET x}\models \chi$. Now, since $T_1^-$ and $T_2^-$ are disjoint subsets of $T^-$, the assignments in $T_1^-$ have different $Key$ values than those in $T_2^-$. Since the $Key$ values are not modified by interventions, then, the same holds for $(T_1)_{\SET X = \SET x}^-$ and $(T_2)_{\SET X = \SET x}^-$; thus, $(T_1)_{\SET X = \SET x}^-$ and $(T_1)_{\SET X = \SET x}^-$ are disjoint. Furthermore, it is easy to see that $(T_1)_{\SET X = \SET x}^-\cup(T_1)_{\SET X = \SET x}^- = T_{\SET X = \SET x}^-$. Then $T_{\SET X = \SET x}\models \psi\lor\chi$, i.e. $T\models \SET X = \SET x \cf (\psi\lor\chi)$. 

    $\Rightarrow$) Suppose  $T\models \SET X = \SET x \cf (\psi\lor\chi)$. Then there are $S_1=(S_1^-,\F),S_2=(S_2^-,\F) \leq T_{\SET X = \SET x}$ with $S_1^-\cup S_2^- = T_{\SET X = \SET x}^-$, $S_1^-\cap S_2^- = \emptyset$, $S_1\models \psi$ and $S_2\models \chi$. Now, let $\pi$ be the bijection between $T_{\SET X = \SET x}^-$ and $T^-$ that sends a $t\in T_{\SET X = \SET x}^-$ to the unique $s\in T^-$ with the same value for $Key$. For $i=1,2$, let $T_i^- = \pi[S_i^-]$, and $T_i = (T_i^-,\F)$. Clearly, then, $S_i= (T_i)_{\SET X = \SET x}$. Thus, $T_1\models \SET X = \SET x \cf \psi$ and $T_2 \models \SET X = \SET x \cf \chi$. Furthermore, by construction, $T_1^-\cup T_2^- = T^-$ and $T_1^-\cap T_2^- = \emptyset$.  
\qed    
\end{proof}

%END CONSTRUCTION SITE

%****************************************************

% \begin{lemma}\label{lemma: no cf outside for PCS}
% For every $\varphi\in \PC(\lor)_{\sigma}$, %(resp. $\PCO^\textsf{f}_{\sigma}$),
%  there is an equivalent $\varphi'\in \PC(\lor)_{\sigma}$ %(resp. $\PCO^\textsf{f}_{\sigma}$) %(resp. $\PCO_{\sigma}$)
%   such that $\cf$ only occurs inside probability statements.
% \end{lemma}

% \begin{proof}
%     The proof proceeds as in \ref{lemma: no cf outside}, using also the fact that $\cf$ distributes over $\lor$ (Lemma \ref{lemma: cf distributes over strict tensor}).
% \end{proof}

\begin{lemma}\label{lemma: ignoring cf in PCOS}
Let $\varphi\in \PCOs_{\sigma}$,
 and $\F$ a function component of signature $\sigma$. Then there is a formula $\varphi^\F\in \Ps_{\sigma}$ 
  such that, for every causal multiteam $T$ of signature $\sigma$ and function component $\F$,
\[
T\models \varphi \iff T \models \varphi^\F.
\]
\end{lemma} 

\begin{proof}
By the same proof as in Proposition \ref{proposition: ignoring cf}, using also Lemma \ref{lemma: cf distributes over strict tensor} to show that we may assume that $\cf$ only occurs in the arguments of $\Pr$ within $\PCOs$ formulas. \qed
%Lemma \ref{lemma: ignoring cf}, there is a formula $\psi^\F \in \PO(\lor)$ such that, for all causal multiteams of signature $\sigma$ and function component $\F$, $T\models \varphi \iff T \models \psi^\F$. By Corollary \ref{cor: Ps separation} there is formula $\varphi^\F \in \Ps$ which is equivalent to $\psi^\F$ and thus, in particular, satifies the statement. 
\end{proof}

\begin{theorem}\label{thm: characterization of PCOs}
Let $\K$ be a class of causal multiteams of signature $\sigma$. $\K$ is definable by a $\PCOs_{\sigma}$ formula if and only if:
\begin{enumerate}
\item it has the empty multiteam property 
\item it is closed under rescaling 
\item $\K = \bigcup_{\F\in\mathbb{F}_\sigma}\K^\F$, where  each $\K^\F$ is a linear set of causal multiteams of team component $\F$.
\end{enumerate}
\end{theorem}

\begin{proof}
    The argument is very similar as in the proof of Theorem \ref{thm: characterization of PC}, using Theorem \ref{thm: characterization of Ps} instead of Theorem \ref{thm: characterization of P}.
% In the left-to-right direction, we need to use Corollary \ref{cor: PS is linear} instead of Lemma \ref{lemma: P is signed monic} when proving that $\K^\F$ is linear. In the right-to-left direction, we use Theorem \ref{thm: characterization of Ps} instead of Theorem \ref{thm: characterization of P}. \qed
\end{proof}

\begin{proposition}\label{prop: Ps strictly included in PCOs}
   %$\Ps <
   $\PCOs < \PCO^\omega$. 
\end{proposition}

\begin{proof}
     %$\Ps < \PCOs$ follows from the fact that $\Ps$ is closed under change of laws (Lemma \ref{lemma: PS closed under change of laws}) while $\PCOs$ is not.
     The strict inclusion $\PCOs < \PCO^\omega$ follows either from a counting argument or from the fact that the probability sets of $\PCOs$ formulas are linear %(cf. the proof of \ref{thm: characterization of PC})
     while the probability sets of $\PCO^\omega$ formulas have no restrictions. \qed
\end{proof}

\begin{proposition}\label{prop: PCO Ps incomparable}
1) $\PCO$ and $\Ps$ are incomparable. 

2) $\PC$ and $\Ps$ are incomparable.
\end{proposition}

\begin{proof}
By the inclusion of $\PC$ in $\PCO$, it suffices to prove that  $\PC$ is not included in $\Ps$, and that $\Ps$ is not included in $\PCO$.

$\PC$ is not included in $\Ps$. This follows from the already mentioned fact that $\PC$ is not closed under change of laws, while $\Ps$ is (lemma \ref{lemma:trivial laws}).

$\Ps$ is not included in $\PCO$. By Corollary \ref{cor: Ps separation}, there is a formula $\varphi\in \Ps$ that is not signed binary. By Lemma \ref{lemma:trivial laws}, $\varphi$ is closed under change of causal laws. Thus, for any $\F$, the probability set of $\K_\varphi^\F$ is equal to the probability set of $\K_\varphi$, and is thus not signed binary. Thus, by Theorem \ref{thm: characterization of PCO}, there is no $\PCO$ formula equivalent to $\varphi$. \qed
\end{proof}

\begin{proposition}\label{prop: PCO strictly included in PCOs}
   $\PCO < \PCOs$. 
\end{proposition}

\begin{proof}
By Proposition \ref{prop: PCO Ps incomparable}, $\Ps \not\leq \PCO$, and thus a fortiori $\PCOs \not\leq \PCO$. \qed    
\end{proof}

%%%%%%%%%%%%%%%%%%%%%%%%%%%%%%%%%%%%%%%%%%%%%%%%%%%%%%%%%%%%%%%%%%%%%%%%%%%%%%%%%%%%%%%%%
\section{Definability of probabilistic and dependence atoms}\label{sec:teamsemantics}

Next we briefly explore the relationships between our logics and the probabilistic atoms studied in probabilistic and multiteam semantics.
We consider also the dependence atom by V\"a\"an\"anen \cite{Vaa2007}, and marginal distribution identity and probabilistic independence atoms by Durand et al. \cite{DurHanKonMeiVir2016}.

The dependence atom $\dep{\SET X}{\SET Y}$ expresses that the values of $\SET X$ functionally determine the values of $\SET Y$.
Dependence atoms can be expressed already in $\PO$:
\[
\dep{\SET X}{\SET Y} \dfn  \bigwedge_{\SET{x}\in \ran(\SET{X})} \bigsqcup_{\SET{y}\in \ran(\SET{Y})} \SET X = \SET x \supset \SET Y= \SET y
\]

The marginal distribution identity atom $\SET{X} \approx \SET{Y}$ states that the marginal distributions induced by $\SET{X}$ and $\SET{Y}$ are identical. This  can be defined in $\PP$ by
\begin{align*}
\SET{X} \approx \SET{Y} \dfn   &    \bigwedge_{\SET{x}\in \ran(\SET{X})\cap \ran(\SET{Y})} \Pr(\SET X=\SET x)= \Pr(\SET Y=\SET x) \land \\
  &    \bigwedge_{\SET{x}\in \ran(\SET{X})\setminus \ran(\SET{Y})} \Pr(\SET X=\SET x)=0 \land \bigwedge_{\SET{y}\in \ran(\SET{Y})\setminus \ran(\SET{X})} \Pr(\SET Y=\SET y)=0.
\end{align*}
The conditional probabilistic atoms inherit their semantics from probability theory: 
%Let us consider
%, defined as follows (where $\vartriangleleft \hspace{3pt} \in \{\geq, >\}$): 
%
\begin{align*}
&T\models \Pr(\alpha\mid\beta) \vartriangleright \epsilon  &&\text{iff}&& (T^\beta)^- = \emptyset \text{ or } P_{T^\beta}(\alpha) \vartriangleright \epsilon.\\
&T\models \Pr(\alpha\mid\beta)\vartriangleright \Pr(\gamma\mid\delta) &&\text{iff}&& (T^\beta)^- = \emptyset \text{ or } (T^\delta)^- = \emptyset \text{ or } P_{T^\beta}(\alpha) \vartriangleright P_{T^\delta}(\beta),
\end{align*}
and we may also write e.g. $\Pr(\alpha\mid\beta)\vartriangleright \Pr(\gamma)$ as an abbreviation for $\Pr(\alpha\mid\beta)\vartriangleright \Pr(\gamma\mid\top)$. 
 Related to these, the atom $\SET{X} \pindep_{\SET{Z}} \SET{Y}$ (\emph{conditional independence atom}) states that for any given value for the variables in $\SET{Z}$ the variable sets $\SET{X}$ and $\SET{Y}$ are probabilistically independent. Its special case with $\SET Z=\emptyset$ is called \emph{marginal independence atom}. We can define these atoms in terms of conditional comparison atoms:
\begin{align*}
\SET{X} \pindep \SET{Y} &\dfn \bigwedge_{\substack{\SET{x}\in \ran(\SET{X})\\ \SET{y}\in \ran(\SET{Y})}} \Pr(\SET X=\SET x)= \Pr(\SET X=\SET x \mid \SET Y= \SET y)\\
\SET{X} \pindep_{\SET{Z}} \SET{Y} &\dfn \bigwedge_{\substack{\SET{x}\in \ran(\SET{X})\\ \SET{y}\in \ran(\SET{Y})\\\SET{z}\in \ran(\SET{Z})}} \Pr(\SET X=\SET x \mid \SET Z=\SET z)= \Pr(\SET X=\SET x \mid \SET Y \SET Z= \SET y \SET z)
\end{align*}
Hence the atoms (and the dependence atom expressed as $\SET{Y} \pindep_{\SET{X}} \SET{Y}$) are expressible in $\PP$ extended with the conditional probability comparison atoms. %It is an open question whether the probabilistic independence atoms are already expressible in $\PCO$.

The above definitions of atoms imply that our languages, if enriched with conditional probability atoms and arbitrary applications of the disjunction $\lor$, are strong enough to the express properties of multiteams that are expressible in the quantifier free fragments of the logics $\mathrm{FO}(\pindep)$ (\emph{probabilistic independence logic}) and $\mathrm{FO}(\approx)$ (\emph{probabilistic inclusion logic}), over any fixed finite structure. The expressivity and complexity of these logics have been thoroughly studied in the probabilistic and multiteam semantics literature (see \cite{DurHanKonMeiVir2016,DurHanKonMeiVir2018,GradelW2022, HanHirKonKulVir2019,HKBV2020,HanVir2022,Wilke2022}).

It was observed in \cite{BarSan2024} that $\Pr(\alpha\mid\gamma) \vartriangleright \epsilon$ and $\Pr(\alpha\mid\gamma)\vartriangleright \Pr(\beta\mid\gamma)$ can be defined by $\gamma \supset\Pr(\alpha) \vartriangleright \epsilon$ and $\gamma \supset\Pr(\alpha) \vartriangleright \Pr(\beta)$, respectively.
%\begin{itemize}
%    \item $T\models \Pr(\alpha\mid\gamma) \vartriangleright \epsilon$ iff $T\models \gamma \supset\Pr(\alpha) \vartriangleright \epsilon$
%    \item $T\models \Pr(\alpha\mid\gamma)\vartriangleright \Pr(\beta\mid\gamma)$ iff $T\models \gamma \supset\Pr(\alpha) \vartriangleright \Pr(\beta)$.
%\end{itemize}
The latter result concerns comparison atoms in which both probabilities are conditioned over the same formula, $\gamma$.
We establish that this restriction is necessary, and that  $\Pr(\alpha\mid\gamma)\geq \Pr(\beta\mid\delta)$ is not, in general, expressible in $\PCOs$. As a consequence, the definability status of (conditional) independence atoms in $\PCOs$ remains unsettled. 
%The full proof of the theorem is given in the extended version of the paper (\cite{BarVir2023}). % Appendix \ref{A:Undefinability of comparison atoms}.
\begin{restatable}{theorem}{undefcomparison}\label{thm: undefinability of conditional comparison}
The comparison atoms
$
\Pr(\alpha \mid \beta)\vartriangleleft \Pr(\gamma \mid \delta)$ and 
$\Pr(\alpha \mid \beta)\vartriangleleft \Pr(\gamma)$,
(where $\vartriangleleft \in \{\leq,\geq,<,>,=\}$) are not, in general, expressible in $\PCOs$.
%(and thus, a fortiori, they are not definable in $\PCO$).
\end{restatable}

\begin{proof}
    Due to the equivalence $\Pr(\alpha \mid \beta)\vartriangleleft \Pr(\gamma \mid \top) \equiv \Pr(\alpha \mid \beta)\vartriangleleft \Pr(\gamma)$, it suffices to prove the theorem for $\Pr(\alpha \mid \beta)\vartriangleleft \Pr(\gamma)$. 

    We fix a signature $\sigma$ that allows exactly $4$ distinct assignments $s_i,s_j,s_k,s_l\in \mathbb  B_\sigma$ (we will then be working in a $4$-dimensional vector space), and a constant $\delta\in (0,1]\cap \mathbb Q$. The proof proceeds by showing that the conjunction
    \[
    \Xi \dfn \Pr(\Al_k \lor \Al_i \mid \Al_l \lor \Al_i) \vartriangleleft  \Pr(\Al_l \lor \Al_j) \ \land \  \Pr(\Al_i)= \delta  \ \land \   \Pr(\Al_j \lor \Al_k \lor \Al_l) = 1-\delta
    \]
    has a probability set that cannot be characterized in terms of systems of linear inequalities, and thus is not expressible in $\Ps$; extending the result to the whole $\PCO(\lor)$ is then straightforward. 

    Let $T$ be a causal multiteam satisfying $T\models \Xi$. For ease of reading, we will write $e_i,e_j,e_k,e_l$ for $\epsilon_{s_i}^T,\epsilon_{s_j}^T,\epsilon_{s_k}^T,\epsilon_{s_l}^T$, respectively. %\commf{Or use $e_i$, etc., to emphasize that these are numbers, not variables?}
    Since $T\models \Pr(\Al_i)= \delta $ and $\delta >0$, we have in particular that $e_l + e_i > 0$. Thus, the following equivalences hold:
    %is satisfied trivially in causal multiteams 
    %that satisfy the inequality $\epsilon_l + \epsilon_i\leq 0$.  For causal multiteams $T$ that satisfy the negation of this inequality, we have:
    \begin{align*}
    & T\models \Pr(\Al_k \lor \Al_i \mid \Al_l \lor \Al_i) \vartriangleleft  \Pr(\Al_l \lor \Al_j) \\
    \iff & P_T((\Al_k \lor \Al_i)\land(\Al_l \lor \Al_i))  \vartriangleleft P_T(\Al_l \lor \Al_j) \cdot P_T(\Al_l \lor \Al_i) \\
    \iff & P_T(\Al_i)  \vartriangleleft P_T(\Al_l \lor \Al_j) \cdot P_T(\Al_l \lor \Al_i) \\
    \iff & e_i \vartriangleleft (e_l + e_j)\cdot (e_l + e_i)\\
    \iff & e_i \vartriangleleft (e_l)^2 +  e_l e_i + e_j e_l + e_j e_i.
    \end{align*}
    %In the following we will omit the $T$ superscripts for ease of reading.
    %Using again $T\models \Pr(\Al_i)= \delta$, we have
    Thus, the boundary of the probability set $\Pvec_{\Xi}$ % =(e_i, e_j,e_k.e_l)$ 
    consists of those quadruples $(e_i, e_j,e_k.e_l)$ that belong to the curve (in 4-dimensional space) $C$ satisfying the following system: 
    %given by the inequality  $\epsilon_i \vartriangleright (\epsilon_l)^2 +  \epsilon_l \epsilon_i + \epsilon_j \epsilon_l + \epsilon_j \epsilon_i$. 
    %Intersecting the 3-dimensional boundary of $D$, i.e. the 3-dimensional space defined by the equality $\epsilon_i = (\epsilon_l)^2 +  \epsilon_l \epsilon_i + \epsilon_j \epsilon_l + \epsilon_j \epsilon_i$, with the 3-dimensional space defined by $\epsilon_i=\delta$, %$\delta \vartriangleright  e_l^2 +  \delta e_l +  e_j e_l + \delta e_j$
    %one obtains a surface that is equivalently characterized by the system:
       \[
\left\{
\begin{array}{l}
  \delta = \epsilon_l^2 +  \delta \epsilon_l +  \epsilon_j \epsilon_l + \delta \epsilon_j    \\
  \epsilon_i=\delta    \\  
  \epsilon_j + \epsilon_k + \epsilon_l = 1-\delta  
\end{array}
\right.
\]
where we have replaced $\epsilon_i$ in the first equality using the second equality.

 The first equation defines a ``cylinder'' $CL$ whose sections by planes parallel to the $(\epsilon_j, \epsilon_l)$ plane take the form of the conic of equation $\delta = \epsilon_l^2 +  \delta \epsilon_l +  \epsilon_j \epsilon_l + \delta \epsilon_j$, call it $D$; note that this curve is also the projection of $CL$ on the $(\epsilon_j, \epsilon_l)$ plane. %Since the plane of equations $\epsilon_i=\delta,    
  %\epsilon_j + \epsilon_k + \epsilon_l = 1-\delta  $ is not orthogonal to the $(\epsilon_j, \epsilon_l)$ plane (as can be checked computing the scalar products of the orthogonal vectors of the relevant hyperplanes), we may conclude 
  We observe  that, if $D$ is a nondegenerate conic, then also $C$ is; note indeed that no nondegenerate conic can be a projection of a union of lines and points -- in particular, not a projection of a degenerate conic. Let us then prove, by standard methods, that $D$ is nondegenerate.
%    Since the first equation  only features the variables $\epsilon_j$ and $\epsilon_l$, it is also the equation (in 2-dimensional space) of a curve $D$ (in 2-dimensional space), namely its projection in the $(\epsilon_j,\epsilon_l)$ plane. By well-known results of analytic geometry, If $D$ is a nondegenerate conic, then also $C$ is.
 %   
  %  
  %   Since (as can be checked) the plane $(\epsilon_j,\epsilon_l)$ and the plane $\epsilon_i=\delta$,    
  % $\epsilon_j + \epsilon_k+\epsilon_l = 1-\delta$ are not orthogonal to each other, it
  %  
  %   that includes as a subset  the $(\epsilon_j,\epsilon_l)$ projection of the boundary of $C$. Furthermore, since the intersection that produces $C$ is taken with a plane $\pi$ (the intersection of the $3$-spaces $\epsilon_i=\delta$ and $ \epsilon_j + \epsilon_k+\epsilon_l = 1-\delta$) that (as can be checked), then $D$ is the whole $(\epsilon_j,\epsilon_l)$ projection of the boundary of $S$. We want to show that $C$ is a nondegenerate conic. If we do so, since no linear space has a nondegenerate conic as a projection, then, by Corollary 32 no formula of $\PP(\lor)$ can have the same probability set as $\Xi$.
%
 %   
   The equation of $D$  can also be rewritten as  $2 \epsilon_{l}^2 +  2\delta \epsilon_{l} + 2 \epsilon_{j} \epsilon_{l} + 2\delta \epsilon_{j} - 2\delta = 0$.
    %The subset of the plane $\pi_\delta$ of equation $ \epsilon_{i} = \delta$ (intersected with $\Delta^3$) defined by such an equation is a conic.
    It is degenerate iff its discriminant is $0$ (see e.g. [33]). The discriminant is
\begin{equation*}
 \operatorname{det}\left(\begin{array}{ccc}
 2 & 1 & \delta\\
 1 & 0 & \delta\\
 \delta & \delta & 2\delta \\
 \end{array}\right)    = -( 2\delta - \delta^2) - \delta(2\delta-\delta) = -2\delta  
 \\
\end{equation*}
which is $0$ if and only if $\delta$ is $0$. Thus, with our choice of $\delta >0$, the conic $D$ is nondegenerate. By the remarks, above, we can conclude that also $C$ is a nondegenerate conic; thus, it cannot be contained in the boundary $\Pvec_\Xi$ if $\Xi$ is a linear probability set. Thus, we may conclude that $\Xi$ is not expressible in $\Ps$, and neither is $\varphi$.

%The boundary of the probability set of $\Xi$ is the union of some hyperplanes together with a set whose section for $ e_i = \delta$ is a nondegenerate conic. Clearly, such a probability set is not a union of hyperplanes, and thus it cannot be expressed by linear inequalities. Hence, by Theorem \ref{thm: characterization of Ps}, $\Xi$ is not expressible in $\Ps$. 

Let us then consider the case of $\PCOs$. Recall that there are only a finite number of $\F\in\mathbb{F_\sigma}$. So, since a nondegenerate conic is an infinite set, %(and it can be checked that the intersection of our conic with $\pi_\delta\cap\Delta^3$ is itself infinite),
%there must be at least one $\F$  such that the section for $\epsilon_i = \delta$ 
there is an $\F$ such that the $(\epsilon_j,\epsilon_l)$ projection of the boundary of the probability set of $\Xi \land\Phi^\F$ contains an (infinite) segment of the nondegenerate conic $C$. Hence, as above, we conclude that the probability set of $\Xi \land\Phi^\F$ is not defined by linear inequalities. Thus, by  Theorem \ref{thm: characterization of PCOs},  $\Xi \land\Phi^\F$ is not expressible $\PCOs$, and neither is $\varphi$. 
\end{proof}

\section{Conclusion}\label{sec:conclusion}

\begin{comment}
\begin{table}
\begin{center}
\begin{tabular}{cccccc}\toprule
Logic & & \multicolumn{3}{c}{Closure properties} & References \\ \cmidrule{3-5}
&type of linear& change of & empty & empty  \\
&equations& laws &  multiteam  & multiteam & \\
& &  & \& rescaling  &  & \\\midrule
$\PP^-$ & monic & X & X  & X & Thm. \ref{thm: characterization of P}\\
$\PP$ & signed monic & X & X  & X & Thm. \ref{thm: characterization of P}\\
$\PO$ & signed binary & X & X   & X & Thm. \ref{thm: characterization of PO}\\
$\PC$ & union of signed monics &  & X  & X  & Thm. \ref{thm: characterization of PC} \\
$\PCO$ & union of signed binary &  &  X  & X  & Thm. \ref{thm: characterization of PCO}\\
$\PCO^\omega$ & (no restrictions)  &  & X  & X  &  \cite{BarSan2024}
\\\bottomrule\\
\end{tabular}
\caption{Characterization of expressive powers of logics on causal multiteams. 
E.g., a class $\mathcal{K}$ of causal multiteams is definable by a single $\PO$-formula if and only if $\mathcal{K}$ is signed binary and closed under change of laws and rescaling. Above we write that $\K$ is a union of signed binary, when $\K = \bigcup_{\F\in\mathbb{F}_\sigma}\K^\F$, where  each $\K^\F$ is a signed binary set of causal multiteams of team component $\F$.}
\label{table:exp}
\end{center}
\end{table}
\end{comment}

We embarked for a comprehensive study of the expressive power of logics of probabilistic reasoning and causal inference in the unified setting of causal multiteam semantics.
We focused on the logic $\PCO$ that can express probability comparisons in a dataset, and encompasses interventionist counterfactuals and selective implications for describing consequences of actions and consequences of learning from observations, respectively. In addition, we considered the syntactic fragments $\PP^-$, $\PP$, $\PO$, and $\PC$ of $\PCO$ and proved that they form a strict expressivity hierarchy (see Figure \ref{fig:FIRSTCAUSALGRAPH} on page \pageref{fig:FIRSTCAUSALGRAPH}). We showed that this hierarchy can be further extended by considering languages $\Psm,\Ps$ and $\PCOs$ that extend $\PP^-, \PP$ and $\PCO$, respectively, with the strict tensor operator typical of the literature on multiteam semantics. For each of these languages, we discovered natural complete characterizations, based on the families of linear equations needed to define the corresponding classes of causal multiteams (satisfying some invariances); these results are summarized in Table \ref{table:exp} (on page \pageref{table:exp}). Interestingly, while $\PCO$ and its fragments are characterized by rather special classes of inequalities, the languages with the strict tensor turn out to be associated to the class of arbitrary linear inequalities; thus, the introduction of this operator, which is as far a we know absent from the literature on causation, seems to be supported by the criterion of mathematical elegance.  
%on page \pageref{table:exp} for an overview of our characterization results.
Finally, we established that conditional probability statements of the forms $\Pr(\alpha \mid \beta)\leq \Pr(\gamma \mid \delta)$ and $\Pr(\alpha \mid \beta)\leq \Pr(\gamma)$   
% $\Pr(\alpha \mid \beta)\leq \Pr(\gamma)$ nor marginal independence atoms $\alpha \pindep \beta$ 
are not in general expressible in $\PCOs$, and separated $\PCO$ (and $\PCOs$) from its extension $\PCO^\omega$ with infinitary disjunctions.

Analogous to the folklore result that the logic $L_{\infty\omega}$ can define all classes of finite structures, it was shown in \cite{BarSan2024} that the same holds for $\PCO^\omega$ with respect to all classes of causal multiteams that are closed under \emph{rescaling} and have the \emph{empty multiteam property}. % (cf. Theorem \ref{theorem: expressivity of PCOinf}).
This essentially amounts to full expressive power if the probabilities are given a non-frequentist interpretation. 
While any logic that is expressively complete in this sense is uncountable, 
it is an interesting task to identify more expressive finitary languages.
We describe some future directions of research:
%The finitary languages themselves raise a number of open problems and suggest future directions of research:
\begin{itemize}
%    \item In the languages we considered, the usage of the \emph{strict tensor} $\lor$ was restricted to $\CO$ formulae. What impact would removing this restriction have on the expressivity of the languages? We conjecture that liberalizing this operator would allow to capture probability sets described by any linear inequality.
%    \item In \cite{HanVir2022}, Hannula and Virtema established that definability in the so-called \emph{probabilistic inclusion logic} $\mathrm{FO}(\approx)$ can be reformulated in linear programming. It remains as an open problem, what is the precise fragment of linear programming that corresponds to probabilistic inclusion logic (similar to the results in Table \ref{table:exp}).
    \item Can (conditional) probabilistic independence atoms be expressed in $\PCO$? We conjecture the negative in line with \cite[Proposition 26]{HanVir2022}, which establishes that they are not expressible in $\mathrm{FO}(\approx)$, the \emph{probabilistic inclusion logic} of \cite{HanHirKonKulVir2019} (although the proof in \cite{HanVir2022} relies on the use of quantifiers).
    \item How can our results be extended to cover infinite signatures? Here one might need to extend the languages with quantifiers ranging over data values.
%    \item What is the expressive power of $\PCO$ extended with the marginal distribution identity atom from \cite{HanVir2022}?
%    \item Is there a finitary logic that is in some sense expressively complete with respect to all classes of causal multiteams?
\item Our characterizations cover only logics that express linear properties of data. Can we generalize our results if some natural source of multiplication, such as conditional probabilistic independence or the conditional comparison atoms, are added to the logics? It was shown by Hannula et al. \cite{HKBV2020}  that the so-called \emph{probabilistic independence logic} is equiexpressive with a variant of existential second-order logic that has access to addition and multiplication of reals.
\item Finally, a promising direction for future work would be to study temporal aspects of causal inference (see e.g., \cite{Kleinberg11}) via (probabilistic) temporal logics by generalising the \emph{temporal team semantics} introduced by Krebs et al. \cite{KrebsMV018} and further developed by Gutsfeld et al. \cite{GutsfeldMOV22}.\looseness=-1
%\commf{R1: Discuss ``potential follow-up''?}
\end{itemize}

We conclude by pointing out the formal similarity of our work with some results obtained for first-order logics with probabilistic dependencies, such as the aforementioned language $\mathrm{FO}(\approx)$. Such languages do not formalize causation, and yet we can conjecture that $\PCO$ might be embeddable in $\mathrm{FO}(\approx)$ (similarly as the language $\CO$ is embedded into first-order logic in \cite{BarGal2022}). This idea is supported by a result of Hannula and Virtema (\cite{HanVir2022}) that establishes that definability in $\mathrm{FO}(\approx)$ can be reformulated in linear programming. It is however unknown which exact fragment of linear programming corresponds (in the sense of our Table \ref{table:exp}) to the language $\mathrm{FO}(\approx)$; such a characterization would give precise limits to the possibility of embedding results.

%%%%%%%%%%%%%%%%%%%%%%%%%%%%%%%%%%%%%%
\section*{Acknowledgements}
The research of the first author was conducted under the Academy of Finland grant n. 349803. The second author was supported by the DFG grant VI 1045/1-1.
We thank Milo Orlich for kindly providing us with Figure \ref{figure: defining semipolytopes in PCOs}. %\commf{Again, this should be Figure 3}.

% \section*{Acknowledgements}
%Fausto Barbero was partially supported by the DFG grant VI 1045-1/1 and by the Academy of Finland grant 349803.
%Jonni Virtema was partially supported by the DFG grant VI 1045-1/1 and by the Academy of Finland grant 345634.

%\commf{Other issues left open (but not sure if worth mentioning): 1) extending the framework and results to infinite signatures (which would require probabilistic teams, or measure teams, I guess); 2) definability by infinite sets of formulae; 3) stricter hierarchy results that take into account definability by infinite sets of formulae, 4) proof theory for PCO, 5) proof theory for PCO omega, 6) expressivity of PCO extended with arbitrary conditional probability atoms, 7) expressivity/proof theory of PCO (or fragments) extended with strict tensor. (Our discarded materials address 4 and (partially) 7; no idea how to approach the rest.)}

%\commf{Should the references have DOI, URL, etc.?}

%\bibliographystyle{apalike}
%\bibliographystyle{splncs04}
\bibliographystyle{plainurl}
\bibliography{iilogics}

\appendix

\section{Transferring results from causal team semantics}\label{Appendix: transferring results}
In this section we provide tools for transferring results from causal team semantics to causal multiteam semantics, in particular Lemma \ref{lemma: transfer} and Theorem \ref{thm: phif}, which are needed to prove some of the results of this paper. We formulate Lemma \ref{lemma: transfer} and related results for language $\CO$, although the methods extend to other logics considered in the literature (e.g. extensions of $\CO$ with dependence atoms and/or with the global disjunction).
 
For most purposes, we can think of a causal team (of signature $\sigma$) as a pair $(T^-,\F)$, where $T^-$ is a team instead of a multiteam (i.e., a set of assignments on $Dom$ instead of $Dom \cup \{Key\}$), satisfying the conditions given in Definition \ref{def: causal multiteam}. In previous papers, the definition of causal teams differs in the sense that one can have distinct function components which differ from each other just in the sense that some of their functions have different sets of dummy argument. Such function components were dubbed to be \emph{equivalent} in \cite{BarYan2022}, and a notion of causal team equivalence was derived from it. In our framework, equivalence just coincides with equality. Lemma \ref{lemma: transfer} below will show that these differences of detail do not constitute a serious obstacle for transferring results from the framework of \cite{BarYan2022} to our causal multiteam semantics (in particular, for deriving Theorem \ref{thm:COchar} from \cite[Theorem 4.4]{BarYan2022}). 

We write $\models$ for the satisfaction relation over causal multiteams, and $\models^{ct}$ for the satisfaction relation over causal teams (when there is a need to make the distinction). The constructions $T^\alpha$ and $T_{\SET X = \SET x}$ are defined analogously as in the multiteam case.
The satisfaction clauses of the causal team semantics for language $\CO$, as given in previous literature, are formally identical to those of causal multiteam semantics, with the exception that the symbol $\lor$ is interpreted as \emph{lax} tensor, that is:

\begin{center}
$T\models^{ct} \psi\lor \chi$ if there are two causal subteams $T_1,T_2$ of $T$ such that $T_1^-\cup T_2^- = T^-$, $T_1\models^{ct} \psi$ and $T_2\models^{ct} \chi$
\end{center}
i.e. the causal subteams $T_1,T_2$ are not required to be disjoint.

%We write $\models$ for the satisfaction relation over causal multiteams, and $\models^{ct}$ for the satisfaction relation over causal teams (when there is need to make the distinction). The satisfaction clauses for causal team semantics are formally identical 
%The two notions differ only for what regards the clause for tensor disjunction, which is strict for the former, lax for the latter:

%\begin{center}
%$T\models \psi\lor \chi$ if there are two \emph{disjoint} causal sub-multiteams $T_1,T_2$ of $T$ such that $T_1^-\cup T_2^- = T^-$, $T_1\models \psi$ and $T_2\models \chi$.
%\end{center}

%\begin{center}
%$T\models^{ct} \psi\lor \chi$ if there are two causal sub-multiteams $T_1,T_2$ of $T$ such that $T_1^-\cup T_2^- = T^-$, $T_1\models^{ct} \psi$ and $T_2\models^{ct} \chi$.
%\end{center}

Given a multiteam $T^-$ of signature $\sigma=(\dom,\ran)$, there is a corresponding team $\Team(T^-):= \{s_{\upharpoonright \dom} \mid s\in T^-\}$ of signature $\sigma$. More generally, given a causal multiteam  $T = (T^-,\F)$ of signature $\sigma$, there is a corresponding causal team $\Team(T) = (\Team(T^-), \F)$.

The following key lemma allows to translate results between causal team semantics and causal \emph{multi}team semantics. %, and the present section is devoted to its proof.

\begin{lemma}\label{lemma: transfer}
Let $T$ be a causal multiteam of signature $\sigma$, and $\varphi\in\CO$. Then:
\[
T\models\varphi \iff \mathrm{\Team}(T)\models^{ct} \varphi
\]
\end{lemma}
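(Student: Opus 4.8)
The plan is to prove the biconditional by induction on the structure of the $\CO$-formula $\varphi$. The statement to establish is $T \models \varphi \iff \Team(T) \models^{ct} \varphi$ for every causal multiteam $T$ of signature $\sigma$. The crucial observation that makes everything work is that both semantic relations depend on a causal multiteam (resp.\ causal team) only through which assignments \emph{occur} in its multiteam (resp.\ team) component, together with the shared function component $\F$; the multiplicities with which assignments occur are invisible to $\CO$. So passing to $\Team(T)$, which forgets multiplicities but preserves the set of occurring assignments and the function component, ought to preserve satisfaction.

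\textbf{Setting up the induction.} First I would record the two elementary facts that let the operator cases go through: for the observation operator, $\Team(T^\alpha) = \Team(T)^\alpha$ (observing $\alpha$ commutes with taking support, since whether $s$ survives depends only on $(\{s\},\F)\models\alpha$); and for interventions, $\Team(T_{\SET X = \SET x}) = \Team(T)_{\SET X = \SET x}$ (the recursive recomputation $s \mapsto s^\F_{\SET X=\SET x}$ depends only on $s$ and $\F$, not on multiplicities). These two commutation lemmas are the backbone. With them, the cases for literals ($Y=y$, $Y\neq y$), for $\land$, for $\supset$, and for $\cf$ are all routine: literals and $\land$ quantify over the assignments present, which are the same for $T^-$ and $\Team(T^-)$; and $\supset$ and $\cf$ reduce via the commutation lemmas to the inductive hypothesis applied to $T^\alpha$ and $T_{\SET X=\SET x}$ respectively.

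\textbf{The main obstacle} is the disjunction case, because the multiteam semantics of $\lor$ uses a \emph{disjoint} (multiplicity-sensitive) union with $T_1^- \cap T_2^- = \emptyset$, whereas the causal team semantics uses the \emph{lax} tensor with no disjointness requirement. I must show these agree under $\Team$. For the direction $T \models \psi\lor\chi \Rightarrow \Team(T)\models^{ct}\psi\lor\chi$: given a disjoint splitting $T^- = T_1^- \uplus T_2^-$ with $T_i \models$ the appropriate disjunct, the supports $\Team(T_1), \Team(T_2)$ cover $\Team(T)$ (lax union suffices, so losing disjointness is harmless) and satisfy the disjuncts by the inductive hypothesis. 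For the converse $\Team(T)\models^{ct}\psi\lor\chi \Rightarrow T\models\psi\lor\chi$: here I am given possibly overlapping subteams $U_1, U_2 \subseteq \Team(T^-)$ covering $\Team(T^-)$, and I must manufacture a \emph{disjoint} splitting of the multiteam $T^-$ itself. The natural move is to assign each copy $s \in T^-$ to exactly one side — say to the $\psi$-side if $s_{\upharpoonright\dom} \in U_1$, otherwise to the $U_2\setminus U_1$ part — producing $T_1^-, T_2^-$ with $T_1^- \uplus T_2^- = T^-$, and with $\Team(T_i^-) \subseteq U_i$. Since $\CO$ is downward closed (indeed flat, by Theorem~\ref{thm: CO flatness}) and $U_i \models^{ct}$ the disjunct, the smaller support $\Team(T_i)$ also satisfies it, whence by the inductive hypothesis $T_i \models$ the disjunct. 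I would invoke flatness of $\CO$ explicitly here to pass satisfaction down from $U_i$ to $\Team(T_i)$, which is the one place the argument genuinely uses a structural property of the logic rather than bare commutation.

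\textbf{Closing remark.} The whole proof is short once the two commutation identities and the flatness-based handling of $\lor$ are in place; the only subtlety worth flagging to the reader is exactly the reconciliation of disjoint union with lax tensor, which is why the paper took care (in the main text) to note that the distinction between the two versions of $\lor$ is irrelevant for $\CO$ precisely because of flatness. I would present the commutation lemmas first as a displayed preliminary, then run the induction, spending most of the prose on the $\lor$ case and dispatching the others in a sentence each.
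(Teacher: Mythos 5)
Your proposal is correct and follows essentially the same route as the paper's proof: the two commutation identities $\Team(T^\alpha)=\Team(T)^\alpha$ and $\Team(T_{\SET X = \SET x})=\Team(T)_{\SET X = \SET x}$ stated as preliminary lemmas, then induction on $\varphi$ with the only substantive work in the $\lor$ case, where the paper likewise refines the lax cover into a disjoint split (taking $S_1:=\{s\in T^-\mid s_{\upharpoonright \dom}\in S_1^*\}$ and $S_2:=T^-\setminus S_1$) and closes the gap by downward closure of $\CO$. The only immaterial divergences are that the paper invokes downward closure at the multiteam level — passing from the full preimage of $S_2^*$, which satisfies $\psi_2$ by the inductive hypothesis, down to $S_2$ — where you invoke flatness at the causal-team level, and that the paper isolates as a separate small induction the singleton equivalence $(\{s(n/Key)\},\F)\models\alpha \iff (\{s\},\F)\models^{ct}\alpha$, which your observation-commutation claim tacitly uses (since $T^\alpha$ and $\Team(T)^\alpha$ filter assignments via different satisfaction relations).
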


%\commf{What follows in this section is boring and should be postponed to an appendix. However, we learn something interesting from it.}
\noindent Towards the proof of this result, we introduce three further lemmas.

\begin{lemma}\label{lemma: cm equals ct on singletons}
Let $\sigma=(\dom,\ran)$ be a signature. Let $s\in\B_\sigma$. % $s:\dom\rightarrow \bigcup_{X\in \dom}\ran(X)$ be such that, for all $X\in \dom$, $s(X)\in \ran(X)$.
 Then for all $n\in\mathbb N$, %$G$ graph over $\dom$,
 $\F$ function component over $\dom$ and %$\varphi\in \CODU_{\sigma}$
 $\varphi\in \CO_{\sigma}$ we have:
\[
(\{s(n/Key)\},  \F)\models \varphi \iff (\{s\},  \F)\models^{ct} \varphi.
\]
\end{lemma}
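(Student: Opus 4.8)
The plan is to proceed by induction on the structure of $\varphi$, taking as induction hypothesis the full statement quantified over \emph{all} $s\in\B_\sigma$, all $n\in\mathbb N$, and all function components $\F$. This blanket universal quantification is essential: in the $\cf$ case the hypothesis must be reapplied to a \emph{modified} assignment and function component, so it cannot be fixed at the outset. Since the clauses defining $\models$ and $\models^{ct}$ for $\CO$ are formally identical except for the reading of $\lor$, most cases will collapse immediately to the hypothesis. Throughout I would freely use the empty multiteam property of $\CO$ (Theorem~\ref{thm: CO flatness}) together with its causal-team analogue from \cite{BarSan2020}, which guarantee that the empty (multi)team satisfies every $\CO$ formula.

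For the literal base cases $\varphi = (Y=y)$ and $\varphi = (Y\neq y)$, I would observe that $Y\in\dom$ whereas $Key\notin\dom$, so $s(n/Key)(Y)=s(Y)$; both sides then reduce to the same condition on $s(Y)$. The case $\land$ is immediate. For $\lor$, the only disjoint splits of the singleton $\{s(n/Key)\}$ place the single assignment on one side and the empty multiteam on the other, so by the empty multiteam property satisfaction of $\alpha\lor\beta$ reduces to $(\{s(n/Key)\},\F)\models\alpha$ or $(\{s(n/Key)\},\F)\models\beta$. On the causal-team side the lax tensor additionally permits the split $(\{s\},\{s\})$, but the extra disjunct ``$(\{s\},\F)\models^{ct}\alpha$ and $(\{s\},\F)\models^{ct}\beta$'' is subsumed by the others, so again this collapses to $(\{s\},\F)\models^{ct}\alpha$ or $(\{s\},\F)\models^{ct}\beta$, matching by the hypothesis. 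For $\supset$, I would note that on a singleton the observation $T^\alpha$ equals either the whole singleton (when $\alpha$ holds there) or the empty (multi)team (when it does not); the hypothesis fixes which subcase occurs on both sides, and the two subcases are then settled by the hypothesis applied to $\beta$ and by the empty (multi)team property, respectively.

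The case requiring the most care, and the main obstacle, is $\cf$. Here I would verify that the intervention $do(\SET X=\SET x)$ commutes with the addition of the $Key$ column. Since $Key$ is exogenous, lies outside $\SET X$, and is not a parent of any variable of $\dom$ (the parents $\PA_V$ are drawn from $\dom\setminus\{V\}$), the recursive definition of $s^\F_{\SET X=\SET x}$ leaves $Key=n$ untouched and recomputes the $\dom$-values exactly as the causal-team intervention on $\{s\}$ does. Hence $\{s(n/Key)\}_{\SET X=\SET x} = \{(s^\F_{\SET X=\SET x})(n/Key)\}$, and both sides carry the same restricted function component $\F_{\SET X=\SET x}$. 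Applying the induction hypothesis to $\psi$ with the assignment $s^\F_{\SET X=\SET x}$ and function component $\F_{\SET X=\SET x}$, which is legitimate precisely because the hypothesis ranges over all assignments and all function components, closes this case; the consistency side-condition on $\SET X=\SET x$ is purely syntactic and identical on both sides.

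With all cases discharged, the induction is complete. This lemma then feeds directly into the proof of Lemma~\ref{lemma: transfer}, since it reduces the singleton-level comparison of the two semantics to the identity $s(n/Key)(Y)=s(Y)$ and the commutation of interventions with the $Key$ column.
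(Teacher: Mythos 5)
Your proof is correct and takes essentially the same approach as the paper's: a structural induction on $\varphi$ hinging on the fact that $Key$ is never mentioned in $\CO_{\sigma}$, with the disjunction case settled by the observation that over a singleton causal (multi)team $T\models\psi_1\lor\psi_2$ iff $T\models\psi_1$ or $T\models\psi_2$. The paper declares the induction ``straightforward'' and leaves the rest implicit; your write-up only makes explicit the details it suppresses, notably the strengthened induction hypothesis over all assignments and function components, and the commutation of interventions with the $Key$ column, which the paper records separately as $(s^{\F}_{\SET X=\SET x})_{\upharpoonright \dom} = (s_{\upharpoonright \dom})^{\F}_{\SET X=\SET x}$ in the proof of Lemma~\ref{lemma: transfer X x}.
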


\begin{proof}
A straightforward proof by induction on $\varphi$, hinging on the fact that the variable $Key$ is not used in $\CO_{\sigma}$. In the case for $\lor$, we use the fact that, over a singleton causal (multi)team $T$, $T\models \psi_1\lor\psi_2\iff T\models \psi_1$ or $T\models \psi_2$. \qed
\end{proof}

\begin{lemma}\label{lemma: transfer alpha}
Let $T$ be a causal multiteam of signature $\sigma$, and $\alpha\in\CO_{\sigma}$. Then $\Team(T^\alpha)= \Team(T)^\alpha$.
\end{lemma}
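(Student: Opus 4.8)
The plan is to reduce the stated equality of causal teams to the equality of their underlying team components, exploiting that both $\Team(T^\alpha)$ and $\Team(T)^\alpha$ carry the same function component $\F$ (observation never alters $\F$). Thus it suffices to establish
\[
\Team\big((T^\alpha)^-\big) = \big(\Team(T)^\alpha\big)^-.
\]
First I would unfold both sides using the definitions of $\Team(\cdot)$ and of the observation operation, obtaining
\[
\Team\big((T^\alpha)^-\big) = \{\, s_{\upharpoonright \dom} \mid s\in T^-,\ (\{s\},\F)\models\alpha \,\}
\]
on the left, and
\[
\big(\Team(T)^\alpha\big)^- = \{\, t\in \Team(T^-) \mid (\{t\},\F)\models^{ct}\alpha \,\}
\]
on the right.

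The bridge between the two notions of satisfaction is Lemma \ref{lemma: cm equals ct on singletons}: for any $s\in T^-$, writing $t := s_{\upharpoonright \dom}$ and $n := s(Key)$, we have $s = t(n/Key)$, hence $(\{s\},\F)\models\alpha$ iff $(\{t\},\F)\models^{ct}\alpha$. I would then argue by double inclusion. For the forward inclusion, take $t$ in the left-hand set, witnessed by some $s\in T^-$ with $s_{\upharpoonright\dom}=t$ and $(\{s\},\F)\models\alpha$; the Lemma turns this into $(\{t\},\F)\models^{ct}\alpha$, and since $t\in\Team(T^-)$ by construction, $t$ lies in the right-hand set. The reverse inclusion is symmetric: given $t\in\Team(T^-)$ with $(\{t\},\F)\models^{ct}\alpha$, pick any $s\in T^-$ with $s_{\upharpoonright\dom}=t$; the Lemma yields $(\{s\},\F)\models\alpha$, so $s\in(T^\alpha)^-$ and $t=s_{\upharpoonright\dom}\in\Team\big((T^\alpha)^-\big)$.

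The only delicate point is the bookkeeping around the extra variable $Key$ and the set-collapse performed by $\Team$: distinct copies of one assignment in the multiteam (differing only on $Key$) are identified in the support, so one must check that observation treats all such copies uniformly. This is exactly what Lemma \ref{lemma: cm equals ct on singletons} guarantees, since satisfaction of a $\CO$ formula on a singleton is independent of the $Key$ value; consequently no genuine obstacle arises and the argument is routine.
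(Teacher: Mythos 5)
Your proof is correct and follows essentially the same route as the paper's: both reduce the claim to the team components (the function component being untouched by observation) and use Lemma \ref{lemma: cm equals ct on singletons} as the bridge between $\models$ and $\models^{ct}$ on singletons. The paper merely presents the same argument as a single chain of set equalities rather than a double inclusion, so the two proofs are interchangeable.
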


\begin{proof} 
Write $T=(T^-,\F)$. 
Then
\begin{align*}
\Team(T^\alpha) &= \Team((\{s\in T^- \mid (\{s\},\F)\models\alpha\},\F))\\
& = (\{s_{\upharpoonright \dom} \mid s \in T^- \text{ and } (\{s\},\F)\models\alpha\},\F)\\
& = (\{s_{\upharpoonright \dom} \mid s \in T^- \text{ and } (\{s_{\upharpoonright \dom}\},\F)\models^{ct}\alpha\},\F)\\
& = (\{s_{\upharpoonright \dom} \mid s \in T^-\},\F)^\alpha\\
& = \Team((T^-,\F))^\alpha\\
& = \Team(T)^\alpha,
\end{align*}
where in the third equality we used Lemma \ref{lemma: cm equals ct on singletons}.
%\corrf{Then $T^\alpha = ((T^\alpha)^-,\F)$, $\Team(T^\alpha) = (\Team((T^\alpha)^-),\F)$, $\Team(T)=(\Team(T^-),\F)$ and $\Team(T)^\alpha=(\Team(T^-)^\alpha,\F)$. We then just need to prove that $\Team((T^\alpha)^-) = \Team(T^-)^\alpha$.}
%\commf{This first paragraph is very pedantic, but I have noticed that it's easy to misuse these indexes and notations. Indeed there were many notational errors in this and the next lemma.}
%Let $s\in \Team((T^\alpha)^-)$. Then there is an $n\in\mathbb N$ such that $s(n/Key)\in (T^\alpha)^-$. So, $s(n/Key)\in T^-$ and $(\{s(n/Key)\},\F)\models\alpha$. The former entails that $s\in \Team(T^-)$; the latter, by Lemma \ref{lemma: cm equals ct on singletons}, that $(\{s\},\F)\models\alpha$. So $s\in \Team(T^-)^\alpha$.
%In the opposite direction, let $s\in \Team(T^-)^\alpha$. So $(\{s\},\F)\models\alpha$, and $s(n/Key)\in T^-$ for some $n\in\mathbb N$. By Lemma \ref{lemma: cm equals ct on singletons} we have  $(\{s(n/Key)\},\F)\models\alpha$. So  $s(n/Key)\in (T^\alpha)^-$; thus $s\in \Team((T^\alpha)^-)$. 
\qed
\end{proof}

\begin{lemma}\label{lemma: transfer X x}
Let $T$ be a causal multiteam of signature $\sigma=(\dom,\ran)$, $\SET X\subseteq \dom$ and $\SET x\in \ran(\SET X)$. Then $\Team(T_{\SET X = \SET x}) = \Team(T)_{\SET X = \SET x}$.
\end{lemma}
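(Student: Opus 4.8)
The plan is to prove the equality of the two causal teams componentwise, checking first that their function components coincide and then that their team (support) components coincide. The function components match immediately from the definitions: $\Team$ leaves the function component untouched, so $\Team(T_{\SET X = \SET x})$ carries the function component of $T_{\SET X = \SET x}$, namely the restriction of $\F$ to the variables outside $\SET X$; on the other side $\Team(T)$ has function component $\F$, and applying the intervention restricts it to exactly the same set of variables. Hence both sides carry the same function component.

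The real content is the agreement of the team components, which I would reduce to the single commutation fact that, for every $s\in T^-$,
\[
(s^\F_{\SET X = \SET x})_{\upharpoonright \dom} = (s_{\upharpoonright \dom})^\F_{\SET X = \SET x}.
\]
Granting this, the team component of $\Team(T_{\SET X = \SET x})$ is $\{(s^\F_{\SET X = \SET x})_{\upharpoonright \dom} \mid s \in T^-\}$, which by the fact equals $\{(s_{\upharpoonright \dom})^\F_{\SET X = \SET x}\mid s\in T^-\}$; and since $\Team(T^-) = \{s_{\upharpoonright \dom}\mid s\in T^-\}$, the latter is precisely the team component $\{t^\F_{\SET X = \SET x}\mid t\in \Team(T^-)\}$ of $\Team(T)_{\SET X = \SET x}$. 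Thus the two supports are identical.

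I would establish the commutation fact by induction following the recursive definition of $s^\F_{\SET X = \SET x}$, i.e. along a topological order of the causal graph $G_T$, showing that the two $\dom$-assignments agree on every $V\in\dom$. For $V = X_i\in \SET X$ both take the value $x_i$; for $V\in Exo(T)\setminus\SET X$ both equal $s(V) = (s_{\upharpoonright \dom})(V)$, using that $s$ and $s_{\upharpoonright \dom}$ agree on $\dom\supseteq Exo(T)$; and for $V\in End(T)\setminus\SET X$ both equal $\F_V$ applied to the already-matched values on $\PA_V\subseteq \dom$. The one point requiring a remark is the bookkeeping variable $Key$: it lies outside $\SET X\subseteq\dom$ and is carried along exogenously on the multiteam side, but the recursion computing the $\dom$-values never consults it and the outer restriction to $\dom$ discards it, so it plays no role in either side of the equality.

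The main (and only mild) obstacle is making the induction order explicit and verifying that the recursive clause for $\dom$-variables depends solely on $\dom$-values — which is exactly what lets restriction to $\dom$ commute with the intervention. This is the same style of structural/topological induction used implicitly in the definition of $s^\F_{\SET X = \SET x}$ itself, so no new machinery is needed beyond careful case analysis.
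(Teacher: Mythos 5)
Your proof is correct and takes essentially the same route as the paper's: the paper likewise reduces the lemma to the commutation fact $(s^\F_{\SET X = \SET x})_{\upharpoonright \dom} = (s_{\upharpoonright \dom})^\F_{\SET X = \SET x}$ via a chain of equalities on the team components, but asserts that fact without proof. Your topological induction merely supplies the verification the paper leaves implicit (and your explicit check that both sides carry the restricted function component fills a detail the paper glosses over), so the difference is added rigor, not a different approach.
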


\begin{proof}
Write $T=(T^-,\F)$. 
Then 
\begin{align*}
\Team(T_{\SET X = \SET x}) 
&= (\{t_{\upharpoonright \dom} \mid t\in T_{\SET X = \SET x}^-\}, \F_{\SET X = \SET x})\\
&= (\{t_{\upharpoonright \dom} \mid \exists s\in T^- \text{ such that } t = s_{\SET X = \SET x}^\F\}, \F_{\SET X = \SET x})\\
&= (\{(s_{\upharpoonright \dom})_{\SET X = \SET x}^\F \mid s\in T^-\}, \F_{\SET X = \SET x})\\
&= (\{s_{\upharpoonright \dom} \mid s\in T^-\}, \F)_{\SET X = \SET x}\\
&= \Team(T)_{\SET X = \SET x},
\end{align*}
where in the third equality we used the fact that $(s_{\SET X = \SET x}^\F)_{\upharpoonright \dom} = (s_{\upharpoonright \dom})_{\SET X = \SET x}^\F$. \qed
%Write $T=(T^-,\F)$. \corrf{Then we have $T_{\SET X = \SET x} =((T_{\SET X = \SET x})^-,\F_{\SET X = \SET x})$, $\Team(T_{\SET X = \SET x})=(\Team((T_{\SET X = \SET x})^-),\F_{\SET X = \SET x})$, $\Team(T)=(\Team(T^-),\F)$ and finally $\Team(T)_{\SET X = \SET x}=(\Team(T^-)_{\SET X = \SET x},\F_{\SET X = \SET x})$. Thus, we only have to show that  $\Team((T_{\SET X = \SET x})^-) = \Team(T^-)_{\SET X = \SET x}$.}
%Let $s\in \Team((T_{\SET X = \SET x})^-)$.  Then there is an $n\in\mathbb N$ such that $s(n/Key)\in (T_{\SET X = \SET x})^-$; but then there is a $t\in T^-$ such that $t^{\F}_{\SET X = \SET x}= s(n/Key)$. Then $t_{\upharpoonright \dom}\in \Team(T^-)$ and $(t_{\upharpoonright \dom})_{\SET X = \SET x}=s$.  So $s\in \Team(T^-)_{\SET X = \SET x}$.
%In the other direction, let $s\in \Team(T^-)_{\SET X = \SET x}$. Then there is a $t\in \Team(T^-)$ such that $t^\F_{\SET X = \SET x}=s$. Since $t\in \Team(T)^-$, there is an $n\in\mathbb N$ such that $t(n/Key)\in T^-$. So $t(n/Key)^\F_{\SET X = \SET x}\in (T_{\SET X = \SET x})^-$. Notice furthermore that $t(n/Key)^\F_{\SET X = \SET x}= t^\F_{\SET X = \SET x}(n/Key)$; so $t^\F_{\SET X = \SET x}\in \Team((T_{\SET X = \SET x})^-)$, i.e., $s\in \Team((T_{\SET X = \SET x})^-)$.
\end{proof}

\noindent We can now prove the key lemma. % main result of this section.

\vspace{7pt}

%\begin{proof}[
\noindent \emph{Proof of Lemma \ref{lemma: transfer}.}
By induction on $\varphi$. The case for $\land$ % and $\sqcup$ are 
is straightforward.

\begin{itemize}
\item Base case: $\varphi$ is $X=x$. (The case for $X\neq x$ is completely analogous.)

$\Rightarrow$) Suppose $T\models X=x$. Then, for every $s\in T^-$, $s(X)=x$; therefore $s_{\upharpoonright \dom}(X)=x$. Since every assignment in $\Team(T^-)$ is of the form $s_{\upharpoonright \dom}$ for some $s\in T^-$, this amounts to saying that, for all $t\in \Team(T^-), t(X)=x$; i.e., $\Team(T)\models^{ct} X=x$.

$\Leftarrow$) Suppose $\Team(T)\models^{ct} X=x$, i.e., for all $t\in \Team(T)^-, t(X)=x$. For each $n\in \mathbb N$, write $s^n_t$ for the assignment $t(n/Key)$. Then $s^n_t(X)=x$. Now $T^-\subseteq \{s^n_t \mid t\in \Team(T)^-, n \in \mathbb N\}$; therefore $T\models X=x$.

%\item Base case: $\varphi$ is $X\neq x$. Completely analogous to the previous case.

%\item Base case: $\varphi$ is $\dep{\SET X}{Y}$.

%Assume $T\models \dep{\SET X}{Y}$. Then, for all $s,s'\in T^-$, $s(X)=s'(X)$ implies $s(Y)=s'(Y)$. Thus the same holds for all pairs $s_{\upharpoonright \dom},s'_{\upharpoonright \dom}$, i.e., for all $t,t'\in \Team(T)^-$. So $\Team(T)\models^{ct} \dep{\SET X}{Y}$.

%The opposite direction is analogous. %, using the fact that, if $s(n/Key)(X)=s'(n/Key)(X)$ implies $s(n/Key)(Y)=s'(n/Key)(Y)$, then the same holds for all other $m\in \mathbb N$.   

\item Case $\varphi$ is $\psi_1\lor \psi_2$.

$\Rightarrow$) If $T=(T^-,\F)\models\psi_1\lor \psi_2$ then there are disjoint multiteams $S_1,S_2\subseteq T^-$ such that %$S_1\cap S_2= \emptyset$,
 $S_1 \cup S_2 = T^-$ and $(S_i,\F)\models \psi_i$. Now define teams $S_i^*:=\{ s_{\upharpoonright \dom} \mid s\in S_i\}= \Team(S_i)$. By the inductive hypothesis, $(S_i^*,\F)\models^{ct} \psi_i$. Furthermore, if $t\in \Team(T^-)$, then $t=s_{\upharpoonright \dom}$ for an $s$ that belongs to either $S_1$ or $S_2$; so $t\in S_1^*$ or  $t\in S_2^*$. That is, $S_1^* \cup S_2^* = \Team(T^-)$. Thus $\Team(T)\models^{ct} \psi_1\lor \psi_2$. %\commf{Let me point out that it would have been impossible to prove that $S_1^*\cap S_2^*=\emptyset$; it is false in general. Thus, oddly enough, to have a straightforward correspondence between multiteams and teams, we need the \emph{lax} clause for tensor in the case of teams. We could have done also with the strict clause, by defining $S_2^*:= \Team(T)^-\setminus S_1^*$ and then using downward closure.}

$\Leftarrow$) Assume $\Team(T)\models^{ct} \psi_1\lor \psi_2$. Then there are teams $S_1^*,S_2^*\subseteq \Team(T^-)$ such that $S_1^* \cup S_2^* = \Team(T^-)$ and $(S_i^*,\F)\models^{ct} \psi_i$. Now define the disjoint multiteams $S_1:=\{s\in T^- \mid s_{\upharpoonright \dom}\in S_1^*\}$ and 
$S_2:=T^-\setminus S_1$.
%\corrf{$S_2:=\{s\in T^- \mid s_{\upharpoonright \dom}\in S_2^*\}$}.
 Obviously then %$S_1\cap S_2= \emptyset$,
  $S_1 \cup S_2 = T^-$ and $S_1^* = \Team(S_1)$, and by the inductive hypothesis $(S_1,\F)\models\psi_1$. %***From the latter we have, by inductive hypothesis, that $S_i\models \psi_i$. Thus, $T\models \psi_1 \lor \psi_2$.  ****

%******

Let us show that $S_2 \subseteq \{s\in T^- \mid s_{\upharpoonright \dom}\in S_2^*\}$.  If $s\in S_2$, then  $s\notin S_1$; thus, $s_{\upharpoonright \dom}\notin S_1^*$. But then, since $s_{\upharpoonright \dom}\in \Team(T^-)$, $s_{\upharpoonright \dom}\in S_2^*$, as claimed. 

Now, since $S_2^*\models \psi_2$, by inductive hypothesis $\{s\in T^- \mid s_{\upharpoonright \dom}\in S_2^*\}\models \psi_2$; and since $S_2 \subseteq \{s\in T^- \mid s_{\upharpoonright \dom}\in S_2^*\}$, by the downward closure of $\CO$ we have $S_2\models \psi_2$. Thus, in conclusion, $T\models \psi_1 \lor \psi_2$.

%******

%Let us show that $S_2^* \subseteq \Team(S_2)$. If $t\in S_2$, then  $t\notin S_1$; thus, for all $n\in \mathbb N$, $t(n/Key)\notin S_1$. However, since $t \in \Team(T)^-$, there must be an $n\in\mathbb N$ such that $t(n/Key)\in T^-$. But then $t(n/Key)\in S_2$. Thus $t\in \Team(S_2)$.

%By the inductive hypothesis we have $\Team(S_2)\models \psi_2$; since $S_2^* \subseteq \Team(S_2)$, by downward closure of $\CO$ we have $S_2^*\models\psi_2$. \commf{Notice the use of downward closure.} Putting everything together, $\Team(T)\models\psi_1\lor\psi_2$.

\item Case $\varphi$ is $\alpha\supset \psi$.

 $T\models \alpha\supset \psi \iff T^\alpha\models\psi \iff$ (by the inductive hypothesis) $\Team(T^\alpha)\models^{ct} \psi \iff$ (by Lemma \ref{lemma: transfer alpha}) $\Team(T)^\alpha\models^{ct} \psi \iff \Team(T)\models^{ct} \alpha\supset\psi$.

\item Case $\varphi$ is $\SET X = \SET x \cf \psi$.

$T\models\SET X = \SET x \cf \psi \iff T_{\SET X = \SET x}\models  \psi \iff$ (by the inductive hypothesis) $\Team(T_{\SET X = \SET x})\models^{ct}  \psi\iff$ (by Lemma \ref{lemma: transfer X x}) $\Team(T)_{\SET X = \SET x}\models^{ct}  \psi\iff \Team(T)\models^{ct} \SET X = \SET x \cf \psi$. 
\qed
\end{itemize}

%\end{proof}

\noindent This result can be easily extended to formulae with $\sqcup$ or $\dep{\SET X}{Y}$. Notice also that the case for $\lor$ may fail in languages that are not downward closed. %(such as $\PCO$).
%%%%%%%%%%%%%%%%%%%%%%%%%%%%%%%%%%%%%%%%%%%%
%\section{Expressive power of non-probabilistic languages} \label{Appendix: expressivity of CO}
%We begin in this section an analysis the expressivity of various languages a evaluated in causal multiteam semantics. We will consider first the case of non-probabilistic languages, such as $\PCO$.
%One advantage of team semantics and its variants, witnessed by a vast literature (e.g. \cite{KonVaa2011}, \cite{Gal2012},\cite{GalHel2013}) is the possibility to find exact characterizations of the expressive power of languages in terms of closure properties of the definable classes of teams. Such kinds of results often lead to the separation of languages or help in proving the undefinability of logical operators. 
We will give two applications of Lemma \ref{lemma: transfer}, which were already mentioned in the main text. First, we will show that there are formulae $\Phi^\F$ that characterize the property of ``having function component $\F$'' in causal multiteam semantics. Secondly, we will semantically characterize $\CO$ over causal multiteams.

Let us begin by considering the issue of the $\Phi^\F$ formulae. % (as given in Section \ref{subs: PO and PC}). 
The paper \cite{BarYan2022} used a slightly different semantics, in which there may exist causal functions $F_V, G_V$ that only differ for their set of dummy arguments. %\commf{Maybe now this discussion of dummy arguments is redundant, we have something in the main text.} *****
 For example, the functions $F_V(U,X):= U +X -X$ and $G_V(U,Z):= U +Z -Z$ have different argument variables, but they produce the same values of $V$ for each given value of $U$; $X$ is a dummy argument for $F_X$ and $Z$ is a dummy argument for $G_X$. In such a case, $F_V$ and $G_V$ coincide over $\PA_V^F \cap \PA_V^G$, while the variables $\PA_V^F \setminus \PA_V^G$ are dummy arguments for $F_V$, and $\PA_V^G \setminus \PA_V^F$ are dummy arguments for $G_V$. When this happens we say that $F_V$ and $G_V$ are \textbf{similar}, and we write $F_V\sim G_V$. Notice furthermore that  each such function $F_V$ (possibly with dummy argument)  is similar to a (unique) \textbf{minimal canonical representative} $f_V$ - a function with no dummy arguments; and also similar to a (unique) \textbf{maximal canonical representative} $\F_V$ - a function whose arguments are all the variables in $\dom\setminus\{V\}$. The latter are just the kinds of functions we defined earlier in the main text.

The notion of similarity is then extended to function components as follows. We write $F,G$ for  function components in the sense of \cite{BarYan2022}. 
%Write $End(F)$ for the set of endogenous variables of $F$ and $Cn(F)$ for the set of endogenous  variables $V$ of $F$ such that $F_V$ is a constant function.
%We say that $\F$ and $\G$ are similar ($\F\sim\G$) if $En(\F)\setminus Cn(\F) = En(\G)\setminus Cn(\G)$ and, for all $V\in En(\F)\setminus Cn(\F)$, $\F_V\sim\G_V$. %\commf{These definitions look unneededly complicated, but with our particular formalization of causal teams, it seems necessary to formulate them like this.}
Write $\en(F)$ for the set of endogenous variables of $F$.\footnote{By the conventions of this paper, if a variable is in $\en(F)$ then it is generated by a \emph{non-constant} causal function.}
We say that $F$ and $G$ are similar ($F\sim G$) if $\en(F) = \en(G)$ and, for all $V\in \en(F)$, $F_V\sim G_V$.\footnote{This definition looks simpler than that in \cite{BarYan2022} due to our convention that causal functions must be non-constant.} %\commf{These definitions look unneededly complicated, but with our particular formalization of causal teams, it seems necessary to formulate them like this.}
 Finally, we say that two causal (multi)teams $S=(S^-,F), T=(T^-,G)$ are \textbf{equivalent} ($S\approx T$) iff $S^-=T^-$ and $F\sim G$.

%We want now to show the correctness of the $\Phi^\F$ formulae that characterize function components. Fix a signature $\sigma$. %For each $\sim$-equivalence class of function components of signature $\sigma$ we can choose a representative.
We denote as $\mathbb{F}_\sigma$ the (finite) set of all function components of signature $\sigma$. 
It was shown in \cite{BarYan2022} that, for each $\F\in\mathbb{F}_\sigma$, there is a formula $\Phi^\F \in \CO_\sigma$ (that is actually also in $\PCO_\sigma$) that characterizes the property of having function component $\F$ in the causal team semantics setting.\footnote{More precisely, in \cite{BarYan2022} this formula characterized the function component only up to similarity. %a certain relation of equivalence (see the Appendix).
In our framework, %this equivalence
the similarity relation collapses to identity.} This characterization still holds in our framework, in the sense that, for every \emph{nonempty} causal multiteam $T = (T^-,\G)$ of signature $\sigma$:
\[
T\models \Phi^\F \iff \G = \F.
\]
%
%For details, see the appendix. 
Recall that $\SET W$ is an ordered list of all the variables in $\dom$ and that we fixed an enumeration $s_1,\dots,s_n$ of the assignments in $\B_\sigma$. The formula $\Phi^\F$, slightly adapted to the conventions of this paper\footnote{The original formula mentioned the set of constant causal functions, which are not allowed here. Furthermore, it had to refer explicitly to the parent set of causal functions.}, is:
% In it, %$\SET W$ stands for $\dom \setminus(\PA_V^\mathcal F\cup \{V\})$ and 
%$\SET W_V$ stands for $\dom\setminus\{V\}$, and $Con(\F)$ is the set of endogenous variables of $\F$ that are generated by a constant function. Using the notations from the present paper,
\[
\Phi^\F: \bigwedge_{V\in \en(\F)} \eta_\sigma(V) \land \bigwedge_{V\notin \en(\mathcal F)  
%\in (\dom\setminus End(\mathcal F))\cup Con(\F)
} \xi_\sigma(V)
\]
where
\[
\eta_\sigma(V): \bigwedge_{\SET w\in \ran(\SET W_V)}(\SET W_V = \SET w \cf V = \F_V(\SET w))
\]
and
\[
\xi_\sigma(V): \bigwedge_{\substack{\SET w\in \ran(\SET W_V) \\ v \in \ran(V)}} V=v \supset (\SET W_V=\SET w \cf V=v).
\]

\begin{theorem}\label{thm: phif}
Let $T=(T^-,\G)$ be a \emph{nonempty} causal multiteam. Then,
\[
T\models \Phi^\F \iff \G=\F.
\]
\end{theorem}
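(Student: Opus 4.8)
The plan is to argue both directions directly in the causal multiteam semantics, exploiting two structural features of our framework: every function $\F_V$ is total on $\ran(\SET W_V)$, and every $\F_V$ is \emph{non-constant}. Throughout I write $T=(T^-,\G)$ and recall that $T$ is assumed nonempty.

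For the direction $\G=\F \Rightarrow T\models\Phi^\F$, I would check the two kinds of conjunct. For each $V\in End(\F)=End(\G)$ and each $\SET w$, the intervention $do(\SET W_V=\SET w)$ sets all parents of $V$ (which lie in $\SET W_V$) to $\SET w$ and recomputes $V$ as $\G_V(\SET w)=\F_V(\SET w)$; hence $T_{\SET W_V=\SET w}\models V=\F_V(\SET w)$, so $T\models\eta_\sigma(V)$. For each $V\notin End(\F)$, $V$ is exogenous in $\G$, so after observing $V=v$ and intervening $do(\SET W_V=\SET w)$ the value of $V$ is never recomputed and remains $v$ (or $T^{V=v}$ is empty and $\supset$ holds vacuously); thus $T\models\xi_\sigma(V)$.

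For the harder direction $T\models\Phi^\F \Rightarrow \G=\F$, nonemptiness is essential. Fix $V\in End(\F)$. From $T\models\eta_\sigma(V)$ we have $T_{\SET W_V=\SET w}\models V=\F_V(\SET w)$ for every $\SET w$. If $V$ were exogenous in $\G$, the intervention would leave $V$ at the original value $s(V)$ for each $s\in T^-$; choosing (by non-constancy) $\SET w_1,\SET w_2$ with $\F_V(\SET w_1)\neq\F_V(\SET w_2)$ and any $s\in T^-$ would force both $s(V)=\F_V(\SET w_1)$ and $s(V)=\F_V(\SET w_2)$, a contradiction. Hence $V\in End(\G)$, and then the intervention recomputes $V=\G_V(\SET w)$, so $\G_V(\SET w)=\F_V(\SET w)$ for all $\SET w$, i.e. $\G_V=\F_V$. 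This yields $End(\F)\subseteq End(\G)$ with the functions agreeing on $End(\F)$. For the reverse inclusion, suppose $V\notin End(\F)$, so that $\xi_\sigma(V)$ is a conjunct. Picking $s\in T^-$ and $v:=s(V)$, the multiteam $T^{V=v}$ is nonempty, so $\xi_\sigma(V)$ forces $(T^{V=v})_{\SET W_V=\SET w}\models V=v$ for all $\SET w$. Were $V\in End(\G)$, this would recompute $V$ as $\G_V(\SET w)=v$ for every $\SET w$, making $\G_V$ constant and contradicting non-constancy; hence $V\notin End(\G)$, giving $End(\G)\subseteq End(\F)$. Altogether $End(\G)=End(\F)$ and $\G_V=\F_V$ for all endogenous $V$, i.e. $\G=\F$.

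The main obstacle is the $\Rightarrow$ direction: one must recover both the endogenous/exogenous status of each variable in $\G$ and the exact values of the causal functions purely from the satisfaction of the counterfactual conjuncts $\eta_\sigma(V)$ and the selective-implication conjuncts $\xi_\sigma(V)$, and the two non-constancy arguments are precisely what rule out the unwanted configurations. A more economical alternative would be to observe that $\Phi^\F\in\CO_\sigma$ and apply Lemma \ref{lemma: transfer} to reduce the statement to causal team semantics, where the characterization of \cite{BarYan2022} shows that $\Phi^\F$ determines the function component up to the similarity relation $\sim$; since our function components are maximal representatives, $\sim$ collapses to equality, again giving $\G=\F$.
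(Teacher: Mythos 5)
Your proof is correct, but it takes a genuinely different route from the paper. The paper's actual proof is exactly your ``more economical alternative'': apply Lemma \ref{lemma: transfer} to reduce satisfaction of $\Phi^\F$ to causal team semantics, invoke Theorem 3.4 of \cite{BarYan2022} (which gives $S\models^{ct}\Phi^F \iff G\sim F$), and observe that since both $\F$ and $\G$ are maximal canonical representatives, similarity $\sim$ collapses to identity. Your main argument instead verifies both directions directly in the multiteam semantics, and it checks out: for $V\in End(\F)$ the intervention $do(\SET W_V=\SET w)$ fixes all of $\PA_V\subseteq \SET W_V$, so the recomputed value is $\G_V(\SET w)$ when $V\in End(\G)$ and the stale value $s(V)$ when $V$ is exogenous in $\G$; your two non-constancy arguments (ruling out exogeneity of $V\in End(\F)$ via $\eta_\sigma(V)$, and ruling out endogeneity of $V\notin End(\F)$ via $\xi_\sigma(V)$, which would force $\G_V$ constant) correctly exploit the requirement in Definition \ref{def: causal multiteam} that each $\F_V$ be non-constant, and you correctly flag that nonemptiness of $T^-$ is what makes the intervened multiteams nonempty so that the literal $V=\F_V(\SET w)$ actually pins down a value rather than holding vacuously. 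What each approach buys: the paper's transfer argument is short and reuses machinery that is needed anyway for Theorem \ref{thm:COchar}, but it imports an external result whose statement involves the similarity relation and dummy-argument bookkeeping; your direct proof is self-contained within the multiteam framework, makes the role of nonemptiness and non-constancy explicit (the paper's proof hides these inside the cited theorem), and gives the $\Leftarrow$ direction by concrete verification rather than via the cited equivalence. The only presentational nit is that, since $\G_V$ and $\F_V$ are functions on all of $\ran(\SET W_V)$, you should say explicitly (as you implicitly do) that agreement for every $\SET w\in\ran(\SET W_V)$ yields $\G_V=\F_V$ as maximal representatives, which is the sense of equality the theorem asserts.
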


\begin{proof}
In \cite{BarYan2022}, Theorem 3.4, it it proved that if $S = (S^-,G)$ is a causal team of signature $\sigma$, then $S\models \Phi^F  \iff G \sim F$. In our case, $T = (T^-,\G)\models \Phi^\F$ iff $Team(T)\models^{ct} \Phi^\F$ (by lemma \ref{lemma: transfer}); thus, by the theorem in \cite{BarYan2022}, iff $\G\sim\F$. But since $\F$ and $\G$ are both maximal canonical representatives, it must be $\F=\G$. Vice versa, trivially $\F=\G$ implies $\G\sim\F$, and then we can use the same equivalences as before, in the opposite direction. \qed
%
%the same statement is proved for causal teams; this yields that $Team(T) \models^{ct} \Phi^\F$ iff $\F\sim\G$, where $\models^{ct}$ denotes the semantic relation over causal teams. On the other hand, lemma \ref{lemma: transfer} tells us that $T\models \Phi^\F$ iff $Team(T) \models^{ct} \Phi^\F$.
\end{proof}

As mentioned above, the language $\CO$ and its extensions including $\sqcup$ and dependence atoms received semantic characterizations in causal team semantics \cite{BarYan2022}. Lemma \ref{lemma: transfer}  allows us to convert these results (for non-probabilistic languages) into characterizations in causal multiteam semantics. We consider here only the case of language $\CO$, whose expressive power will be seen to be characterized by the property of flatness alone (Theorem \ref{thm:COchar}).

%As we know, the formulae of $\CO_\sigma$ are flat. We can also state flatness as a property of classes of causal multiteams, as follows:

%\begin{itemize}
%\item $\K$ is \textbf{flat} if $T= (T^-,\F)\in\K$ iff for every $s\in T^-$, $(\{s\},\F)\in \K$.\footnote{For teams and for the generalized causal teams of \cite{BarYan2022}, this property is equivalent to the conjunction of empty multiteam property, downward closure and union closure. For causal (multi)teams the situation is more subtle, since the union of two causal (multi)teams is often not well defined. See \cite{BarYan2022} for details.}
%\end{itemize}

In general, a key property of non-probabilistic languages %$\CO$ formulae, which essentially states that they are non-probabilistic in nature, 
 is support-closedness:  %. We can express it as a property of classes of causal \emph{multi}teams only:
\begin{itemize}
\item $\K$ is \textbf{support-closed} if, whenever $T\in\K$ and $\Team(T)=\Team(S)$, then $S\in \K$.
\end{itemize}
 
\noindent A moment of thought shows that support-closedness logically follows from flatness, and thus we did not need to mention the former in the semantic characterization of language $\CO$. It would naturally appear in characterization theorems for more general languages. %It turns out that these two conditions suffice to characterize the expressive power of $\CO$ formulae over causal multiteams, in the following sense.
%\begin{theorem}[Characterization of $\CO$]\label{thm: characterizing CO}
%Let $\sigma$ be a finite signature, and $\K$ a class of causal multiteams of signature $\sigma$. Then $\K$ is definable by a $\CO_{\sigma}$ formula (resp. a set of $\CO_{\sigma}$ formulae) if and only if $\K$ is flat.
%\end{theorem}
%By the same methods, 
For example, it can be proved that the language that extends $\CO$ with dependence atoms (denoted $\COD$ in previous literature) is characterized by the nonempty multiteam property, downward closure and support closure; we omit the similar proof.%\footnote{In this case, support closure is not entailed by the other two closure conditions.}

%We now use the tools from appendix \ref{Appendix: transferring results} in order to 
%Here we provide a characterization of the expressive power of $\CO$ over causal multiteams.
Let us work towards a proof of Theorem \ref{thm:COchar}. %The characterization is derived from an analogous result for causal team semantics (Theorem 4.4 from \cite{BarYan2022}) using the tools developed in appendix \ref{Appendix: transferring results}. %\commf{I had also analogous, unsurprising results for languages with dependence atoms and global disjunction. I commented them out. We can restore them if you prefer.}

Given a class $\K$ of causal multiteams, we define the class of causal teams %$\Team(\K):= \{ \Team(T) \mid T\in \K\}$; and given a class $\K$ of causal multiteams of a common signature $\sigma$, let
$\Team_\sigma^\approx(\K) = \{S \text{ of signature $\sigma$ } \mid \exists T\in\K : S\approx \Team(T)\}$. Let us see what kind of closure properties are preserved when passing from $\K$ to %$\Team(\K)$ or
$\Team_\sigma^\approx(\K)$.

\begin{lemma}\label{lemma: transferable closure properties}
%Let $\mathbb P$ be either %closure under equivalence,
% downward closure or flatness. %\commf{commented out case for closure under equivalence.} 
% Let $\K$ be a \corrf{support-closed} 
%  class of causal multiteams of a common signature $\sigma$. If $\K$ has property $\mathbb P$, then $\Team_\sigma^\approx(\K)$ has the corresponding property for classes of causal teams. 
Let $\K$ be a class of causal multiteams of a common signature $\sigma$.

1) If $K$ is flat, then $\Team_\sigma^\approx(\K)$ is flat.

2) If $K$ is downward closed and support closed, then  $\Team_\sigma^\approx(\K)$ is downward closed.  
  %The assumption of support-closure is not needed for flatness.
% \commf{Having only two properties, we can reformulate this statement...}
 
% If $\K$ is a support-closed class of causal multiteams of a common signature $\sigma$, then   the same statements hold for $\Team_\sigma^\approx(\K)$. % of causal multiteams of signature $\sigma$ that are equivalent to teams from $\Team(\K)$ (in the sense of differing only for the presence of dummy arguments in the function components).
\end{lemma}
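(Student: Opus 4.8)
The plan is to set $\K':=\Team_\sigma^\approx(\K)$, a class of causal \emph{teams}, and to read off each closure property directly from the definition of $\Team_\sigma^\approx$ together with the matching property of $\K$. I will repeatedly use two elementary facts about $\approx$ and $\sim$: that $(A^-,F)\approx(A^-,G)$ as soon as $F\sim G$, and that, since in our framework every function component is a maximal canonical representative, any two similar function components of the signature in fact coincide (the observation already exploited in the proof of Theorem~\ref{thm: phif}).

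For part~1 I would prove the two halves of the flatness biconditional separately. \emph{Forward:} if $(S^-,F)\in\K'$, fix a witness $T=(T^-,\G)\in\K$ with $S^-=\Team(T^-)$ and $F\sim\G$; flatness of $\K$ yields $(\{t\},\G)\in\K$ for all $t\in T^-$, and for $s\in S^-$ choosing $t\in T^-$ with $t_{\upharpoonright\dom}=s$ gives $(\{s\},F)\approx\Team((\{t\},\G))$, whence $(\{s\},F)\in\K'$. \emph{Backward:} assuming $(\{s\},F)\in\K'$ for every $s\in S^-$, each is witnessed by some $T_s=(T_s^-,\G_s)\in\K$ with $\Team(T_s^-)=\{s\}$ and $\G_s\sim F$; as the $\G_s$ are pairwise similar (each being similar to $F$), the fact above makes them coincide with a single $\G\sim F$, and flatness of $\K$ provides, for each $s$, an assignment $t_s\in T_s^-$ with $(t_s)_{\upharpoonright\dom}=s$ and $(\{t_s\},\G)\in\K$. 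Collecting the $t_s$ into one multiteam $U=(U^-,\G)$ with $\Team(U^-)=S^-$ and invoking flatness once more gives $U\in\K$, and $\Team(U)=(S^-,\G)\approx(S^-,F)$ places $(S^-,F)$ in $\K'$. The empty team is handled at once, since flatness already forces every empty causal multiteam into $\K$.

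For part~2 the argument is a single downward step. Given $(S^-,F)\in\K'$ and a subteam $(R^-,F)\leq(S^-,F)$ (so $R^-\subseteq S^-$), pick $T=(T^-,\G)\in\K$ with $S^-=\Team(T^-)$ and $F\sim\G$, and restrict to $(T')^-:=\{t\in T^-\mid t_{\upharpoonright\dom}\in R^-\}$. Then $T'=((T')^-,\G)\leq T$, so $T'\in\K$ by downward closure of $\K$, while $\Team((T')^-)=R^-\cap\Team(T^-)=R^-$; hence $\Team(T')=(R^-,\G)\approx(R^-,F)$ and $(R^-,F)\in\K'$. Notice that this implication in fact only consumes downward closure; support closure is carried in the hypothesis because the same transfer scheme is meant to be reused for richer (still downward closed) languages, such as the extension of $\CO$ with dependence atoms, where support closure becomes an independent requirement.

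The main obstacle is the backward half of part~1, as it is the only place where a genuinely new multiteam must be assembled from separate witnesses. Two things need care there: that the witnessing function components can be identified (this is exactly where maximality of our components is used), and that the chosen assignments $t_s$ can be given distinct $Key$ values so as to form a legitimate multiteam with support $S^-$ — legitimate because the classes under consideration are invariant under renaming of the $Key$ variable, which the object languages never mention. Everything else reduces to unwinding definitions.
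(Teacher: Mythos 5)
Your proof is correct and follows the same overall strategy as the paper's: in the flatness direction you extract singleton witnesses, identify their function components through the fact that pairwise similar maximal canonical representatives coincide (the same observation the paper uses in Theorem~\ref{thm: phif}), reassemble a multiteam with fresh $Key$ values, and close off with flatness of $\K$; the $Key$-renaming step you flag explicitly is handled in the paper by the remark that support closure follows from flatness, which is the same implicit convention you invoke, so you are not worse off than the paper there. The one genuine divergence is in part~2: the paper first uses support closure of $\K$ to replace the witness $T'$ by a copy in bijection with its support, and only then restricts, whereas you restrict the witness directly, taking $(T')^-=\{t\in T^-\mid t_{\upharpoonright \dom}\in R^-\}$, which is automatically a causal sub-multiteam of $T$ projecting exactly onto $R^-$. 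This is a mild sharpening of the lemma: it shows that downward closure of $\K$ alone already yields downward closure of $\Team_\sigma^\approx(\K)$, with support closure present in the hypothesis only because the transfer scheme is reused for richer downward-closed languages --- exactly as you observe. A second, cosmetic difference: you prove the forward half of the flatness biconditional directly from singleton extraction, while the paper derives it from the downward-closure case (flatness entails downward closure); the two arguments are interchangeable.
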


%\commf{Statement 2) is never used. However, it is maybe important to support the claim that there results extend to COD, and to pinpoint the role of support-closure in htese kinds of results.}

\begin{proof}
Let us consider first the case of downward closure. Let $T=(T^-,G)\in \Team_\sigma^\approx(\K)$ and $S$ a causal submultiteam of it. Since $T\in \Team_\sigma^\approx(\K)$, there is a $T'=((T')^-,\G)\in \K$ such that $T\approx \Team(T')$; then  $G\sim\G$. By the support-closure of $\K$, we can assume without loss of generality that there is a bijection between $T^-$ and $(T')^-$, i.e. each assignment in $(T')^-$ is of the form $t_s = s(n/Key)$ (for distinct values of $n$), where $s\in T^-$.
Define $S' = ((S')^-,\G)$, where $(S')^- = \{t_s \mid s\in S^-\}$. Then $S' \leq T'$; since $T'\in \K$ and $\K$ is downward closed, $S'\in\K$. However, clearly $S\approx \Team(S')$. Thus $S\in  \Team_\sigma^\approx(\K)$.

Now flatness. %One direction immediately follows from the downward closure case. Let consider the opposite direction.
 Let $T = (T^-,F)$ be a causal team of signature $\sigma$, and assume that, for all $s\in T^-$, $(\{s\}, F) \in \Team_\sigma^\approx(\K)$. But then, using also the support-closure of $\K$ (which follows from flatness), there is a (unique!) $\F$, made of the maximal canonical representatives of the functions given by $F$, such that $(\{s\}, \F)\in\Team_\sigma^\approx(\K)$ for each $s\in T^-$. By definition of $\Team_\sigma^\approx(\K)$, there are numbers $n_s\in \mathbb N$ such that $(s(n_s/Key),\F)\in\K$ for each $s\in T^-$. Since $\K$ is support closed, we can assume that the $n_s$ are distinct; thus, $S:= (\{s(n_s/Key) \mid s\in T^-\}, \F)$ is a causal multiteam. By the flatness of $\K$, $S\in\K$. But clearly $\Team(S) = (T^-,\F)$, so $(T^-,\F)\in \Team_\sigma^\approx(\K)$. Finally, since by definition $\Team_\sigma^\approx(\K)$ is closed under $\approx$, $T=(T^-, F)\in \Team_\sigma^\approx(\K)$. 
 %By the flatness of $\K$, $(T^-,\F)\in \K$. But clearly $T \approx \Team((T^-,\F))$; thus, $T\in \Team_\sigma^\approx(\K)$. 
 The opposite direction immediately follows from the downward closure case. \qed
%
%In the opposite direction, assume $T = (T^-,F)\in \Team_\sigma^\approx(\K)$. Let $s\in T^-$. There is a $T'= ((T')^-,\F)\in \K$ such that $T\approx \Team(T')$ (thus $F\sim \F$). Since $s\in T^-$, there is at least one $n\in\mathbb N$ such that $s(n/Key)\in (T')^-$. By flatness of $\K$,
%$(\{s(n/Key)\},\F)\in \K$. Since $(s,F)\approx \Team((s,\F))$, we have $(s,F) \in  \Team_\sigma^\approx(\K)$.
%\corrf{Maybe we only need the statement for $\Team_\sigma^\approx(\K)$...}
\end{proof}

We can now prove the correctness of the characterization of language $\CO_{\sigma}$.% (for $\sigma$ a \emph{finite} signature).

%\begin{theorem}
%Let $\sigma$ be a finite signature, and $\K$ a class of causal multiteams of signature $\sigma$. Then $\K$ is definable by a $\CO_{\sigma}$ formula (resp. a set of $\CO_{\sigma}$ formulae) if and only if:
%\begin{enumerate}
%\item $\K$ is support-closed
%\item $\K$ is closed under equivalence
%\item $\K$ is flat.
%\end{enumerate}
%\end{theorem}

\COchar*

\begin{proof}%[Proof of Theorem \ref{thm:COchar}]
$\Rightarrow$) %1. immediately follows from the fact that $\CO$ formulae are flat. %Similarly, 2. follows from the fact that $\CO$ formulae cannot tell apart equivalent causal multiteams \corrf{(lemma \ref{lemma: closure under equivalence})}. %(see \cite{BarYan2020}, Theorem 3.3    for a proof in the causal team case). 
%2. follows from Lemma \ref{lemma: support closure}.
This is just Theorem \ref{thm: CO flatness}.

$\Leftarrow$) Assume $\K$ satisfies flatness; as observed before, then, $\K$ is support closed.  Let $\Team_\sigma^\approx(\K) = \{S \text{ of signature $\sigma$ } \mid \exists T\in\K : S\approx \Team(T)\}$ as before. % Define $\Team(\K):= \{\Team(T) \mid T\in\K\}$. 
By Lemma \ref{lemma: transferable closure properties},   $\Team_\sigma^\approx(\K)$ satisfies the causal team version of flatness. Furthermore, by definition $\Team_\sigma^\approx(\K)$ is closed under the equivalence relation $\approx$. But then, by the characterization of $\CO_{\sigma}$ over causal teams (Theorem 4.4 of \cite{BarYan2022}), there is a formula $\varphi\in\CO_{\sigma}$ which defines $\Team_\sigma^\approx(\K)$ over causal teams of signature $\sigma$. We show that the same formula defines $\K$ over causal multiteams of signature $\sigma$.

Let $T\in\K$. Then $\Team(T)\in \Team_\sigma^\approx(\K)$. So $\Team(T)\models^{ct} \varphi$. By Lemma \ref{lemma: transfer}, then, $T\models \varphi$.

In the opposite direction, let $T$ be a causal multiteam such that $T\models \varphi$. By Lemma \ref{lemma: transfer}, $\Team(T)\models^{ct} \varphi$. But then $\Team(T)\in \Team_\sigma^\approx(\K)$. Since $\Team(T)\in \Team_\sigma^\approx(\K)$, there is an $S\in \K$ such that $\Team(S) \approx \Team(T)$. This entails, in particular, that the function components of $\Team(S)$ and $\Team(T)$ are similar. But since they are also the function components of $S$, resp. $T$, the fact that they are similar just means that they are identical. Thus $\Team(S) = \Team(T)$.  %\in \Team(\K))$.
 Then, by the support-closure of $\K$, $T\in\K$. \qed
\end{proof}

%We now move to characterizing the expressive power of the various extensions of $\CO$ (which are all equiexpressive):

%\begin{theorem}
%Let $\sigma$ be a finite signature, and $\K$ a class of causal multiteams of signature $\sigma$. Then $\K$ is definable by a (set of) $\COU_{\sigma}$ (resp. $\COD_{\sigma},\CODU_{\sigma}$) formula(s)  if and only if:
%\begin{enumerate}
%\item $\K$ is support-closed
%\item $\K$ is closed under equivalence
%\item $\K$ is downward closed.
%\end{enumerate}
%\end{theorem}

%\begin{proof}
%The proof is analogous to the previous one.
%\end{proof}

%As a second application of the methods developed in this appendix, we show that the formulae $\Phi^\F$ (introduced in Section \ref{subs: PO and PC}) still work correctly in causal multiteam semantics. 

\end{document}